\documentclass[
  aps,
  prx,
  onecolumn,
  superscriptaddress,
  nofootinbib,
  longbibliography
]{revtex4-2}

\usepackage[T1]{fontenc}
\usepackage[utf8]{inputenc}
\usepackage{amsmath,amssymb,amsthm,mathtools,bm}
\usepackage{graphicx}
\usepackage{xcolor}
\usepackage{microtype}
\usepackage{enumerate}
\usepackage[colorlinks=true,linkcolor=blue,citecolor=blue,urlcolor=blue]{hyperref}

\newtheorem{theorem}{Theorem}
\newtheorem{lemma}{Lemma}
\newtheorem{proposition}{Proposition}
\newtheorem{corollary}{Corollary}
\newtheorem{definition}{Definition}

\newcommand{\CRAS}{\mathsf{CRAS}}
\newcommand{\BQNCf}{\mathsf{B\text{-}QNC}_{f}^{0}}
\newcommand{\BPP}{\mathsf{BPP}}
\newcommand{\poly}{\mathrm{poly}}
\newcommand{\Z}{\mathbb{Z}}

\newcommand{\ket}[1]{\lvert #1\rangle}

\newcommand{\abs}[1]{\left\lvert #1\right\rvert}

\newcommand{\ind}[1]{\mathbf{1}[#1]}
\newcommand{\xor}{\oplus}
\newcommand{\AND}{\operatorname{AND}}
\newcommand{\OR}{\operatorname{OR}}
\newcommand{\Thr}{\operatorname{TH}}
\newcommand{\WTH}{\operatorname{WTH}}

\newcommand{\PKU}{Center on Frontiers of Computing Studies, School of Computer Science, Peking University, Beijing 100871, China}
\newcommand{\bnu}{School of Artificial Intelligence,
 Beijing Normal University, Beijing,
 100875, China}

\newcommand{\scnu}{School of Physics, South China Normal University, Guangzhou 510006, China}

\begin{document}

\title{Quantum Advantage with Adaptive Shallow Circuits}
%\title{On the Complexity of Quantum Adaptive Shallow Circuits}
\author{Yusen Wu}
 \email{yusen.wu@bnu.edu.cn}
 \affiliation{\bnu}

 \author{Yukun Zhang}
 \email{yukunzhang@stu.pku.edu.cn}
 \affiliation{\PKU}

 \author{Xiaoming Zhang}
 \email{xiaomingzhang@scnu.edu.cn}
 \affiliation{\scnu}

 \author{Chuan Wang}
 \email{wangchuan@bnu.edu.cn}
 \affiliation{\bnu}

 \author{Xiao Yuan}
 \email{xiaoyuan@pku.edu.cn}
 \affiliation{\PKU}

%\date{\today}

\begin{abstract}

Quantum advantage is widely expected to require sufficiently deep circuits, where correlations and global computational structure can grow beyond the reach of efficient classical simulation. This expectation is especially stark for constant-depth circuits with local readout: the expectation value of any fixed local observable lies within a bounded backward lightcone and is therefore classically tractable. Here we show that measurement feedback changes this picture. We establish a strict hierarchy of computational power: at fixed coherent depth, increasing the number of feedback outcomes strictly enlarges the class of functions accessible through a local expectation value. The two ends of this hierarchy exhibit distinct computational regimes. With logarithmic feedback, local expectation values for product-state inputs are efficiently classically simulable. Polynomial feedback, by contrast, enables an explicit family of adaptive shallow circuits to encode prime-field discrete logarithm problem~(DLP) into a fixed single-qubit expectation. Assuming the standard worst-case classical hardness of DLP, estimating this expectation value is classically hard. These results reveal a feedback-driven complexity transition, with further implications for resource lower bounds on DLP and the complexity of local-observable estimation under area-law entanglement. Our results open a new route to quantum advantage with shallow quantum circuits.

\end{abstract}

\maketitle

\section{Introduction}
Shallow quantum circuits provide a natural setting in which to ask where quantum
computational advantage begins. Constant-depth quantum circuits can already
outperform constant-depth classical circuits on explicit relation
problems~\cite{Bravyi2018}, with extensions to average-case
separations~\cite{LeGall2019} and noisy geometrically local
architectures~\cite{Bravyi2020}. More broadly, most proposals for quantum
advantage in shallow or near-term architectures are certified through sample from the output state---random circuit
sampling~\cite{Boixo2018,Arute2019Sycamore,Wu2021Zuchongzhi}, Gaussian boson
sampling~\cite{hamilton2017gaussian,deshpande2022quantum,liu2026gaussian,madsen2022quantum}
and instantaneous quantum polynomial-time~(IQP)
circuits~\cite{BremnerJozsaShepherd2011,BremnerMontanaroShepherd2016,BremnerMontanaroShepherd2017}
all rely on the difficulty of generating samples from
a high-dimensional output
distribution~\cite{AaronsonChen2017Supremacy,Bouland2019,Movassagh2023,HangleiterEisert2023}.
Related ideas are extended to thermal states, where constant-temperature Gibbs
sampling can encode a robust IQP-type sampling
problem~\cite{BergamaschiChenLiu2024}. Across these settings, quantum advantage is generally achieved through a global measurement.

However, many practical problems involve local measurements rather than access to global samples, leading to a fundamentally different paradigm.
For a non-adaptive
constant-depth circuit built from local quantum gates and a product  input state, the
expectation value of a local observable is confined to a bounded backward light
cone and is therefore efficiently computable classically. 
%This raises a sharper question: \emph{Can quantum advantage still hold if the output is given by local measurement?} 
Even beyond constant
depth, local observables can be much easier to estimate than full output
distributions. For example, local observables in typical random quantum
circuits can be approximated classically in regimes where sampling from the
full distribution is believed to remain
hard~\cite{angrisani2025classically}. Hardness of quantum random sampling tasks
therefore does not by itself imply hardness of local-observable estimation.
This raises a sharper question: \emph{What's the minimal requirement to achieve quantum advantage with only local measurements?} 

The picture may change once mid-circuit measurements and classical feedforward
are allowed. These operations are becoming standard ingredients of modern
quantum platforms, particularly in quantum error correction and
measurement-based protocols~\cite{Acharya2025,Putterman2025,Bluvstein2026}.
Computations in this setting naturally consist of short coherent layers
interleaved with measurements and classical processing. Unlike its
non-adaptive counterpart, an adaptive shallow circuit is not constrained by a
single static backward light cone. Measurement outcomes generated in spatially
separated regions can be collected and processed by a classical controller,
and then used to determine later local quantum operations. Related
LOCC-assisted and dynamic-circuit constructions show that measurements and
feedforward can generate long-range correlations or implement long-range
entangling operations within shallow quantum
depth~\cite{piroli2021quantum,piroli2024approximating,Yan2025,BaeumerPRXQ2024,BaeumerPRL2024,BaeumerWoerner2025}.
Recent work has further shown that these resources can preserve classically
hardness in random sample problem in shallow depth
architectures~\cite{cao2026measurement}. Whether they can also make a local
expectation value classically hard has remained open.

In this work, we show that the amount of classical feedforward induces a
genuine computational transition in adaptive shallow circuits with local
readout. We study constant-round adaptive shallow circuits ($\CRAS$), in which constant-depth coherent layers are interleaved with
measurements and polynomial-time classical feedforward, and the final output
is the expectation value of a fixed single-qubit observable. Our first result
establishes a strict computational hierarchy: increasing the number of
intermediate measurement outcomes used for feedforward strictly enlarges the
local-readout power of adaptive shallow circuits. We identify two distinct regimes of classical complexity. With only $\mathcal O(\log n)$ feedforward outcomes, the local expectation value remains efficiently computable classically. By contrast, assuming the worst-case classical hardness of the prime-field discrete logarithm problem~(DLP), we construct an explicit family of circuits in $\CRAS$ for which no randomized classical polynomial-time algorithm can estimate the expectation value of one fixed single-qubit observable to constant additive error. Thus, increasing classical feedforward can drive local readout from a classically tractable regime to one that is classically hard. Finally, we show that solving an $n$-bit DLP in this setting requires $\Omega(n)$ intermediate measurement outcomes to be fed forward, establishing that extensive classical feedback is necessary for this task. We summarize our main results in Fig.~\ref{fig:overview}.

We further show that the hard family admits a two-dimensional
nearest-neighbour realization: all coherent quantum gates are geometrically
local, and the only nonlocal resource is global classical feedforward. This
geometric refinement connects our result to shallow circuits assisted by local operations and classical communication~(LOCC), whose post-measurement
trajectories obey entanglement area laws~\cite{piroli2021quantum}. Surprisingly, the quantum state associated with each measurement trajectory obeys an entanglement area law, while the single-qubit expectation value obtained by averaging over all trajectories remains classically hard to estimate in the worst case.
%Surprisingly, every measurement trajectory in our construction obeys an entanglement area law and is therefore efficiently simulable classically, while the average expectation value remains, in general, classically hard to predict.
This mechanism differs from the standard
tensor-network route to area-law hardness, where the difficulty arises from
contracting general two-dimensional
PEPS~\cite{schuch2007computational,haferkamp2020contracting,harley2025computational}. Here, hardness arises not from a highly entangled static state, but from a global adaptive process that concentrates distributed computational information into one local observable. Our results therefore answer, in the noiseless worst-case setting, the
question raised by Mele \emph{et al.}~\cite{mele2026noise} of how the
classical-simulation picture changes once mid-circuit measurements and
classical feedforward are allowed.

\section{Main Results}
\begin{figure}[t]
    \centering
    \includegraphics[width=\linewidth]{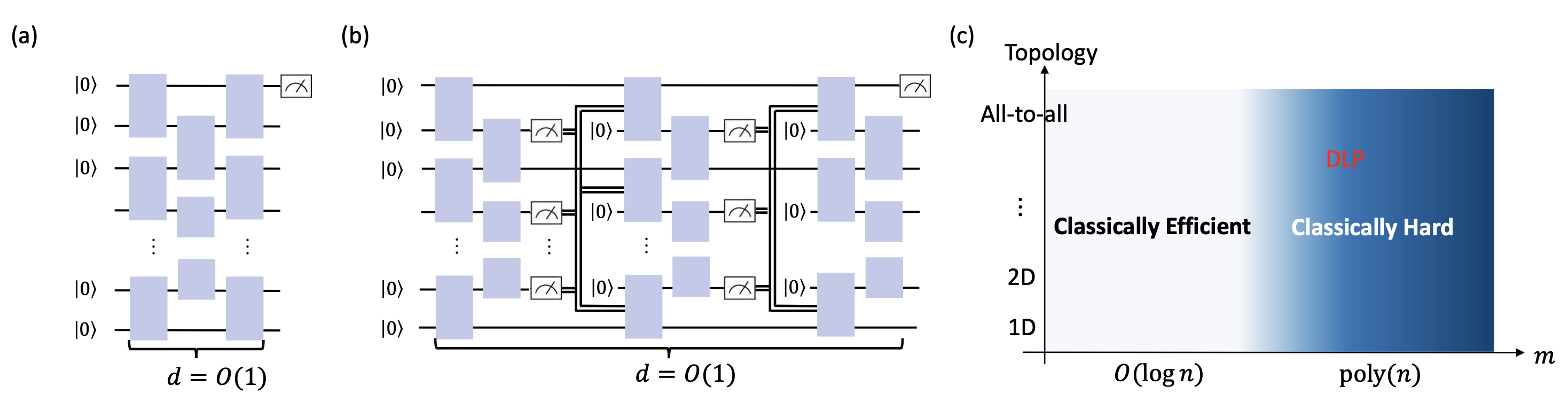}
    \caption{(a)~In a non-adaptive constant-depth circuit, the expectation value of a fixed
local observable is determined by a bounded backward light cone and is
therefore classically computable. 
(b)~Mid-circuit measurements and feedforward change the causal structure:
measurement outcomes generated across the system are collected by a classical
controller, processed globally, and used to determine later operations. The final output is determined by a local observable. (c)~The figure provides a high-level overview of our main results. Theorem~\ref{thm:hierarchy} establishes a strict hierarchy of computational power as the amount of classical feedback $m$ increases. Theorem~\ref{thm:main-maintext} establishes worst-case classical hardness, whereas Theorem~\ref{lowerbound} provides a resource lower bound for solving DLP within this model.}
    \label{fig:overview}
\end{figure}

In the context of $\CRAS$, the amount of classical feedback emerges as a natural
computational resource. The underlying intuition is causal: if an adaptive
circuit has total two-qubit circuit depth $d$ and uses $m$ intermediate
measurement outcomes as classical feedback, then the final local readout is supported on at most
$\mathcal{O}((m+1)2^d)$ input qubits. This raises a natural question: does
increasing the amount of classical feedback strictly enhance the computational
power of adaptive shallow circuits? The following theorem answers this
question affirmatively.

\begin{theorem}[Feedback hierarchy]
\label{thm:hierarchy}
Fix any constant $d\geq 1$, and let $\CRAS_d[m]$ denote the class of
$\CRAS$ circuit with $d$ circuit depth and using at most $m$ intermediate
measurement outcomes as classical feedback. Then, for every $m\geq 2$,
\begin{align}
    \CRAS_d[o(m)]
    \subsetneq
    \CRAS_d[m].
\end{align}
%Note that the standard constant-depth circuit ${\mathsf{B\text{-}QNC}^{0}}=\CRAS_d[0].
In other words, there exists a family of functions that can be computed by $\CRAS$ circuits with $m$ bits of classical feedback, but cannot be computed by any $\CRAS$ circuit without, or with $o(m)$ bits of classical feedback. 
\end{theorem}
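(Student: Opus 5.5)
The plan is a light-cone counting argument, with the $\AND$ function on $N$ bits as the separating family.

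\textbf{Upper bound on dependence.} First make precise the causal bound quoted before the statement: in a $\CRAS_d[m']$ circuit acting on a product input, the final single-qubit readout depends on at most $(m'+1)2^{d}$ input bits. The argument goes as follows.

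\begin{enumerate}[(i)]
\item Fix all feedback outcomes. The final observable's backward light cone through the coherent layers, which have total depth $d$ and two-qubit gates, contains at most $2^{d}$ input qubits.
\item The only other route by which information reaches the output is the classical controller. Its processing may be arbitrary, but its inputs are the $m'$ fed-back measurement outcomes.
\item Each such outcome is a single-qubit measurement at some intermediate time. Its own backward light cone therefore has at most $2^{d}$ input qubits.
\end{enumerate}

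Hence the function $f(x)=\langle O\rangle_x$, viewed as a function of the input bitstring (with $x$ encoded as $\ket{x}$), is a junta on a set $S$ with $\abs{S}\le (m'+1)2^{d}$. One detail needs care. The outcome distribution of an intermediate measurement depends only on its light cone, but the post-measurement state is correlated with the rest. I will handle this by the principle of deferred measurement. I replace each measurement by a CNOT onto a fresh ancilla and each classically controlled gate by a quantum-controlled one. This gives a unitary circuit whose depth may no longer be constant. Even so, the dependency DAG from input qubits to the output still passes through at most $m'$ "hub" wires, each with a $2^{d}$-size input cone. A direct induction on rounds, showing that the reduced state on any light cone is a function only of the inputs in the union of the relevant cones, suffices.

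\textbf{Separation.} Choose $N=(m+1)2^{d-1}$, or more simply $N=\lfloor c\, m\rfloor 2^{d}$ for a suitable constant $c$, and take $f=\AND_N$, which depends on all $N$ bits.

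For the upper side, I will show $f\in\CRAS_d[m]$. Partition the $N$ inputs into $m$ blocks of size about $2^{d-1}$. In a depth-$(d-1)$ layer, each block computes the $\AND$ of its bits onto one qubit using a tree of Toffoli-free gates; for classical basis inputs, a tree of CNOT/Toffoli decomposed into two-qubit gates suffices, with the constant absorbed by adjusting block size to $2^{d-c_0}$. Measure these $m$ qubits and feed the $m$ outcomes to the controller. The controller computes their $\AND$ and applies an $X$ to the output qubit accordingly, so that $\langle Z\rangle=\pm1$ encodes $f$ exactly.

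For the lower side, any $\CRAS_d[o(m)]$ circuit depends on at most $(o(m)+1)2^d<N$ input bits for large $m$, so it ignores some coordinate. $\AND_N$ is sensitive to every coordinate at the all-ones input, so computing it to any bounded error below $1/2$ is impossible.

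\textbf{Main obstacle.} The main obstacle is the upper-side bookkeeping. The computation of the block $\AND$s must fit in depth $d$ with two-qubit gates. For small $d$ (e.g.\ $d=1$) I will instead use blocks of size $1$–$2$, taking $N=\Theta(m)$; this still beats $o(m)\cdot 2^d$. The hierarchy statement only needs $N=\Theta(m 2^{d})$ up to constants, so adjusting constants resolves it.
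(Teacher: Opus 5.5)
Your upper side is fine; block size $1$ already suffices: measure each of $N=m$ input qubits and let the controller flip $q_{\rm out}$. The lower side, however, rests on a false lemma.

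\textbf{The junta claim fails.} The output of a $\CRAS_d[m']$ circuit need not be a junta on $(m'+1)2^d$ inputs, because the controller chooses \emph{which} qubits later gates act on. For example, measure $k$ address qubits directly, obtaining $a$, and then apply one CNOT from the data qubit $x_{D_a}$ onto $q_{\rm out}$. This circuit has depth $1$ and $m'=k$, yet $\langle Z_{q_{\rm out}}\rangle=(-1)^{x_{D_a}}$ depends on all $k+2^k$ inputs. Your deferred-measurement ``hub'' argument does not rescue the claim. After deferral, the classically controlled CNOTs become multiply controlled gates that act directly on the data qubits, so information reaches the output without passing through any of the $m'$ hub wires. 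What is true is only a per-transcript statement. Let $\tilde\Pi_j$ be the Heisenberg-evolved projector of the $j$th measurement. Then $\Pr[s,z\mid x]$ is the expectation in $\ket{x}\ket{0}$ of $\tilde\Pi_1\cdots\tilde\Pi_{m'}\tilde\Pi_z\tilde\Pi_{m'}\cdots\tilde\Pi_1$. It therefore depends only on $x$ restricted to a set $S_s$ with $\abs{S_s}\le(m'+1)2^d$. But $S_s$ depends on the transcript $s$, and the union over transcripts can be exponentially larger.

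\textbf{The paper's repair does not transfer to $\AND_N$.} The paper turns the per-branch bound into a global one through Pauli weight (degree), which, unlike support, does not grow when branches are summed: $\mathbf w(\mathcal C^\dagger(Z))\le(m'+1)2^d$. This suffices for its separating function, the $X$-parity $\operatorname{Tr}[O_T\rho]$ on quantum inputs. Uniform $\varepsilon$-approximation forces $\lVert A-O_T\rVert_\infty\le\varepsilon<1$, so $A$ must have a nonzero weight-$(T+1)$ coefficient. It does \emph{not} suffice for $\AND_N$, whose bounded-error approximate degree is only $\Theta(\sqrt N)$. A degree bound therefore gives just $m'=\Omega(\sqrt m)$, which does not exclude $o(m)$.

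\textbf{How to fix it.} To keep $\AND_N$, use a query-style argument instead.
\begin{enumerate}
\item Since $\sum_i\Pr_{s\sim 1^N}[i\in S_s]=\mathbb E\,\abs{S_s}\le(m'+1)2^d$, some coordinate $i$ lies in $S_s$ with probability at most $(m'+1)2^d/N$ on input $1^N$.
\item Transcripts with $i\notin S_s$ have identical weights and output statistics on $1^N$ and on $1^N$ with bit $i$ flipped. Hence $\mu$ changes by at most $2(m'+1)2^d/N$.
\item Correctness requires a change of at least $2-2\varepsilon$.
\end{enumerate}
This gives $(m'+1)2^d\ge(1-\varepsilon)N$, which closes the proof, even for basis-state inputs only. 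It is a genuinely different ingredient from the junta claim you proposed. Alternatively, switch to parity and use the degree bound, as the paper does.
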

%上次和xiaoming讨论说写成hierachy 1\subset 2...\subset m这种，但我翻了一下证明，现在这么写可能更严格一些，我们证明的是feedback数量是Omega m，所以这个常数不一定是差了1.
The hierarchy concerns quantum-state inputs whose information is not directly available to the classical controller. If the input is instead supplied to the controller as a classical bit string, it can be processed and written to the output without intermediate quantum measurements, and classical feedback no longer imposes the same restriction. The light-cone bound implies that $\CRAS$ with $m=\mathcal{O}(\log n)$ classical feedback is efficiently classically simulable in $\mathcal{O}(n^{2^d})$ running time. This naturally raises the question of whether its local quantum mean value can become classically hard as $m$ increases.

Hardness results for random circuit sampling, IQP circuits and related models typically concern sampling from the full output distribution or estimating individual output probabilities~\cite{Bouland2019,BoulandFeffermanLandauLiu2022FOCS}. Estimating a fixed local observable is a different task, since its expectation value depends only on the reduced state on the support of the observable. Sampling hardness therefore does not directly imply hardness of estimating a local expectation value to either constant or inverse-polynomial additive accuracy. A similar distinction arises in the one-clean-qubit model (${\rm DQC}_1$)~\cite{morimae2014hardness}, where the readout of the clean qubit encodes the normalized trace $2^{-n}{\rm Tr}[U]$. Hardness results based on multiplicative approximation of related quantities~\cite{takahashi2013hardness,morimae2014hardness} do not imply additive hardness for the normalized trace: when the latter is exponentially small, even the estimate zero achieves inverse-polynomial additive accuracy. These examples illustrate the basic obstruction: hardness of a global measurement does not automatically carry over to local-observable estimation.

Here, we use the prime-field discrete logarithm problem (DLP) as the basis of our hardness result. Let $\mathcal I:=\{(p,g,y):p\text{ is prime},\,g\text{ generates }\mathbb Z_p^\ast,\,y\in\mathbb Z_p^\ast\}$ denote the set of valid instances, where $\mathbb Z_p^\ast=\{1,2,\ldots,p-1\}$ is the multiplicative group of nonzero residues modulo $p$. For $(p,g,y)\in\mathcal I$, let $r=p-1$ and $n=\lceil\log_2 p\rceil$. Since $g$ generates $\mathbb Z_p^\ast$, there is a unique $x\in\mathbb Z_r=\{0,1,\ldots,r-1\}$ satisfying $g^x\equiv y\pmod p$. The DLP asks to recover $x=\log_g y$ from $(p,g,y)$. Without loss of generality, we assume that DLP is not in $\mathsf{BPP}$. The conjectured classical hardness of the DLP has been widely used to construct quantum algorithms and establish their advantage over classical computation~\cite{Shor1994,Shor1997,Liu2021}.

\begin{theorem}[Classical hardness]
\label{thm:main-maintext}
There exists a quantum circuit in $\CRAS$, whose local output expectation values cannot be
efficiently estimated classically within constant additive error.
\end{theorem}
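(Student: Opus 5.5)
We will prove Theorem~\ref{thm:main-maintext} by reduction from DLP. We construct an explicit $\CRAS$ family $\{C_{p,g,y,j}\}$ and show that a classical algorithm estimating a fixed single-qubit expectation value to constant additive error would yield a $\BPP$ algorithm for DLP. The plan has three stages: a shallow adaptive implementation of Shor's DLP algorithm, a compression of its output into one qubit, and a classical reduction that recovers $x$ bit by bit.

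\textbf{Stage 1: simulating Shor's algorithm adaptively in constant depth.} Shor's DLP algorithm needs three ingredients: the QFT over $\Z_r$ (or a sufficiently large power-of-two approximation), modular exponentiation $(a,b)\mapsto g^a y^{-b}\bmod p$, and measurement. We would move the expensive parts into measurement-based gadgets.
\begin{enumerate}[(i)]
\item We replace the QFT by the semiclassical (Griffiths--Niu) QFT. In this version each qubit is measured, and its outcome classically controls single-qubit phase rotations on the remaining qubits. The classical controller can compute all these rotations, so each step has constant quantum depth. Done naively, however, this costs $\Theta(n)$ rounds.
\item To keep the number of rounds constant, we instead use the known constant-depth-plus-feedforward constructions for fanout, parity and unbounded fan-in gates. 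These are the LOCC/dynamic-circuit results of Piroli et al.\ and Baeumer et al.\ cited earlier, based on GHZ-state preparation by measuring a cluster chain and Pauli-correcting. Adding polynomial-size classical feedforward, they give the quantum analogue of $\mathsf{QNC}^0_f$ with classical side processing.
\item Combined with the H{\o}yer--\v{S}palek result that $\mathsf{QNC}^0_f$ approximates the QFT and performs iterated multiplication, hence modular exponentiation, this gives a constant-round, constant-depth adaptive circuit. It prepares Shor's pre-measurement state up to Pauli frame corrections, which the controller tracks.
\end{enumerate}

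\textbf{Stage 2: concentrating the answer on one qubit.} Shor's post-processing (continued fractions or lattice reduction, plus modular inversion) extracts $x$ from a measured pair $(c,d)$ with constant success probability. In the final round the controller measures the Fourier registers and classically computes a candidate $\tilde x$. It checks the candidate with $g^{\tilde x}\equiv y$ and, if the check passes, outputs the $j$-th bit $x_j$. It then applies $X^{x_j}$ to a designated readout qubit prepared in $\ket{0}$ in a last constant-depth layer; on failure it applies a Hadamard instead. The readout is the single-qubit observable $Z$ on this qubit, giving
\[
\langle Z\rangle=q(1-2x_j),
\]
where $q\ge\Omega(1)$ is the success probability. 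The verification step guarantees that every non-failure branch contributes the correct sign. This is where polynomial feedforward is used, consistent with Theorem~\ref{thm:hierarchy} and with the $\Omega(n)$ lower bound.

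\textbf{Stage 3: classical reduction and the main obstacle.} Suppose a randomized classical algorithm estimates $\langle Z\rangle$ to within $q/3$ with probability at least $2/3$. Amplifying by majority vote and running it for $j=1,\dots,n$ recovers $x$ in polynomial time, which contradicts DLP $\notin\BPP$.

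The main obstacle is Stage~1. We must make sure the number of coherent rounds and the depth stay constant, independent of $n$, while the Pauli-frame corrections created by the gadgets still commute correctly through the arithmetic. Modular exponentiation is not Clifford, so the corrections cannot simply be pushed through. We would handle this by teleporting only Clifford fanout/parity layers and applying the non-Clifford threshold-type layers after the Pauli frame has been fixed, using one feedback round per layer of the constant-depth $\mathsf{TC}^0$-style arithmetic circuit. We must also check that the approximation errors of the QFT and the exponentiation keep $q$ bounded below by a constant, which we would do by choosing polynomially many bits of precision.
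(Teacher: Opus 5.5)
Your proposal is correct. Your Stage~1 is essentially the paper's. The paper also builds $W_{p,g,y}$ from selected factors $g^{2^j}$, $y^{-2^j}$, iterated multiplication and fixed-modulus reduction from threshold gates. It builds the arbitrary-modulus QFT from H{\o}yer--{\v S}palek fan-out primitives. It then replaces each fan-out by the Baeumer--Woerner dynamic gadget, with one feedforward round per layer (Lemma~\ref{lem:compile}); this is exactly how you dispose of the Pauli-frame issue.

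The routes split after the Fourier measurement. The paper does not let the controller finish Shor's post-processing. With an invertible $c$ and the residual basis state $\ket{d}=\ket{-xc \bmod r}$, it does three things:
\begin{enumerate}[(i)]
\item it coherently applies $\ket{d}\ket{i}\mapsto\ket{d}\ket{i-c^{-1}d}$ to prepare the interval state $\ket{I^{(n-3)}_x}$;
\item it swap-tests this against $\ket{I^{(n-3)}_{x_0}}$ with $x_0=(p+1)/2$, so that $\langle Z\rangle\approx K_0(x,x_0)$, which equals $0$ or is at least $1/16$ on the two sides of the $\mathrm{DLP}_{1/16}$ promise;
\item it appeals to Lemma~\ref{lemma:DLPReduction} to reduce $\mathrm{DLP}_{1/16}$ to full DLP.
\end{enumerate}
You instead let the controller compute and verify $\tilde x$, write $x_j$ onto a fresh qubit, and recover $x$ bit by bit.

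Your version is more elementary and more robust. It needs no promise problem or hard-core-bit reduction, and no interval-state arithmetic or controlled-swap layer. The gap is $2q$ rather than $1/16$. Because of the classical check $g^{\tilde x}\equiv y$, approximation errors in the QFT and arithmetic can only lower $q$, never flip a sign. What the paper's route buys is interpretive rather than logical: its readout is a genuine quantum overlap rather than a bit the controller already holds. Once $c$ is measured, $\ket{d}$ is a computational-basis state, so the paper's controller could equally have measured it and solved for $x$. Your construction makes transparent that the theorem amounts to a $\CRAS$ protocol with a polynomial-time controller solving DLP.

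One quantitative slip needs fixing: a single run does not succeed with constant probability. With the QFT over $\Z_r$ you get $d\equiv -xc \pmod r$ with $c$ uniform. This determines $x$ only when $\gcd(c,r)=1$, which has probability $\varphi(r)/r$ and can be as small as order $1/\log\log r$. As written, $q$ could then be sub-constant, and estimating to within $q/3$ would not be constant additive error. The remedy is the one the paper uses. Run polynomially many independent copies in parallel on disjoint registers, which leaves depth and round count unchanged. Let the controller output from any copy that passes your verification check; this gives $q=1-o(1)$.
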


Several features of the theorem are worth emphasizing. First, the involved quantum gates of $C_{p,g,y}$ is explicit from the classical instance $(p,g,y)$. There is no oracle access and no
hidden distribution over inputs compared with previous results~\cite{HoyerSpalek2005,Rosenbaum2013}. Second, the final output is local, where the
same designated output qubit and the same Pauli-$Z$ observable are used for
all valid instances. From a high-level perspective, the construction proceeds as follows. A constant-depth adaptive quantum subroutine prepares two quantum states $|I^{(k)}_x\rangle=2^{-k/2}\sum_{i=0}^{2^k-1}|x+i~{\rm mod}r\rangle$ and $|I^{(k)}_{x_0}\rangle=2^{-k/2}\sum_{i=0}^{2^k-1}|x_0+i~{\rm mod}r\rangle$, where $k=n-3$ and $x_0=(p+1)/2$. Then the modified swap-test is used to estimate their overlap $|\langle I_x^{(k)}|I_{x_0}^{(k)}\rangle|^2$ that can determines whether $x\in [1,(p-1)/16]$ or $x\in[(p-1)/2+1,(p-1)/2+(p-1)/16]$. By the standard reduction from these promise regions to full prime-field discrete logarithm, any constant-accuracy classical estimator for the local observable would solve the original discrete-logarithm problem. We leave technical details to Appendix~\ref{proofsketchtheorem2}. Known hardness results for sampling shallow-circuit outputs do not imply Theorem~\ref{thm:main-maintext}, because such hardness need not survive compression to a fixed local expectation value. Specifically, sampling the full output of certain shallow quantum circuits may be classically hard~\cite{BermejoVegaEtAl2018},
whereas any efficiently computable Fourier-sparse Boolean function of the same outputs is classically tractable by combining Refs.~\cite{BravyiGossetMovassagh2021,TakahashiEtAl2026}.

The same construction also applies to modular order finding, which asks for the smallest positive integer $r$ such that $a^r \equiv 1 \pmod N$ for a given $a \in \mathbb{Z}_N^{*}$. Integer factorization, a problem of central importance in computational number theory and modern cryptography, reduces to modular order finding in randomized polynomial time. Its modular-exponentiation step follows a similiar construction used for DLP, while the remaining Fourier sampling and classical postprocessing are compatible with $\mathrm{CRAS}$. The construction above uses DLP as the underlying assumption to the classical hardness. This motivates a related question: what quantum resources does DLP require in the context of adaptive shallow quantum circuit? Existing lower bounds for DLP primarily constrain the number or depth of group operations in generic models~\cite{Shoup1997,HhanYamakawaYun2024}. The following theorem reveals a different limitation of shallow adaptive quantum computation: constant depth and a constant number of rounds require ancillary space linear in the input size.

\begin{theorem}[Feedback lower bound]
\label{lowerbound}
Let the prime $p>2$, and $n=\lceil\log_2 p\rceil$.
Any input-oblivious $\CRAS$ protocol with constant round and circuit depth that solves DLP with success probability
at least $2/3$ requires $\Omega(n)$ classical feedback.
\end{theorem}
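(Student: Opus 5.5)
We prove Theorem~\ref{lowerbound} with a light-cone counting argument combined with an information-theoretic lower bound on the output. The starting point is the causal bound stated before Theorem~\ref{thm:hierarchy}. In a $\CRAS$ circuit of constant depth $d$, with a constant number of rounds and $m$ fed-forward outcomes, each final output qubit depends on only a few sources. These are the classical instance description, through input-oblivious control; the $m$ feedback bits; and the quantum registers inside its backward light cone, which has size $\mathcal O((m+1)2^d)$. We first fix the model. \emph{Input-oblivious} means the coherent layers depend on the instance $(p,g,y)$ only through the classical controller, and the quantum input must be prepared by gates whose placement is fixed. The answer $x=\log_g y\in\Z_r$ must be read out as $n-\mathcal O(1)$ bits, or more precisely as a string carrying $\Omega(n)$ bits of min-entropy over uniformly random $y$.

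The key step is to show that the final readout is a function of $\mathcal O(m)$ bits of classical information plus a constant-depth circuit acting on state that is essentially independent of the instance. Concretely, we condition on the transcript of measurement outcomes. If the controller's only instance-dependent influence on the quantum device before the last round is encoded in the choice of gates, then input-obliviousness forces this influence to be instance-independent. So all instance dependence in the quantum state must flow through the $m$ feedback bits, together with the final layer controlled by the instance. We then argue that the quantum computation contributes information about $x$ only through those $m$ feedback bits. Formally, we construct a classical randomized simulation of the controller. It uses the instance together with samples of the $m$ feedback bits, which are drawn from an instance-independent distribution by oblivious preparation, or from a distribution determined by a constant light cone of $\mathcal O(2^d)$ qubits per bit, which is classically samplable. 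Given these samples, the simulator reproduces the output distribution of the protocol up to the $2^{\mathcal O(m)}$ possible feedback strings.

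Next we turn this into the bound. Suppose $m=o(n)$. Then the protocol's output is determined by a polynomial-time classical function of the instance and at most $m$ bits of advice-like quantum output. Enumerating all $2^m$ feedback strings gives a list of $2^{o(n)}$ candidates for $x$, one of which is correct with probability at least $2/3$ times the probability mass of that string. Checking each candidate via $g^{x'}\equiv y$ finds $x$. Alternatively, counting gives the bound directly: the output register's final layer sees $\mathcal O((m+1)2^d)$ input qubits. If the outcome must equal $x$, which has $n$ bits of entropy, with probability $2/3$, then Fano's inequality forces $(m+1)2^d\ge\Omega(n)$, hence $m=\Omega(n)$ for constant $d$.

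The main obstacle is justifying that the light-cone qubits carry no instance information beyond the feedback, because the controller itself knows $(p,g,y)$ and could encode $x$-relevant data classically. The counting argument therefore works only if the answer must be produced by the quantum output register, rather than computed by the controller. We will first try to make this precise by defining ``solves DLP'' as: the designated output qubits, measured at the end, equal $x$. The controller is restricted to input-oblivious gate choices in all rounds except those depending on feedback. Under this restriction, the state on the output light cone is a function of at most $m$ feedback bits, and the entropy bound finishes the proof. If the definition permits more controller freedom, we will instead fall back on the enumeration reduction, which yields the bound under the assumption that DLP $\notin\BPP$.
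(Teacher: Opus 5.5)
Your proposal has a genuine gap. It never identifies a property of the map $y\mapsto\log_g y$ that forces a shallow adaptive circuit to depend on many input bits, and without such a property neither of your two routes reaches $\Omega(n)$.

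\emph{The model and the light-cone bound.} In the paper, input-oblivious means $y$ is supplied only as the quantum register $\ket{y}$. The controller therefore never sees $y$ except through measurement outcomes. This removes your worry that the controller might encode $x$-relevant data. It also refutes your claim that the light-cone state is ``essentially independent of the instance'': $\ket{y}$ itself sits in the light cone, and the task is to bound how much of it the output can feel. Moreover, with feedforward the light-cone bound is not a junta bound. Different transcripts can route different input qubits to the output, so the bias $P(y)=\Pr[+1\mid y]-\Pr[-1\mid y]$ may depend on far more than $(m+1)2^d$ bits of $y$. What is true, and what the paper uses as in Theorem~\ref{thm:hierarchy}, is that each branch probability is a function of at most $(Ra+1)2^d$ bits of $y$. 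Hence $\deg P\le(Ra+1)2^d$.

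\emph{Why the Fano step fails.} The entropy of the answer is irrelevant: a depth-zero circuit outputs $y$ itself, carrying $n$ bits of entropy, with no feedback at all. If the output has $n$ qubits, the union of their light cones has size $(m+n)2^d$, which gives nothing. And a $\CRAS$ circuit has a single output qubit, so ``solving DLP'' must mean computing one bit. The paper uses the half-range predicate $F(y)=+1$ iff $1\le\operatorname{ind}_g(y)\le(p-1)/2$.

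\emph{The missing ingredient} is that this bit has negligible correlation with every low-degree function. The paper obtains this in three steps.
\begin{enumerate}
\item Gyarmati's bound $C_2(F)\le 320p^{1/2}(\ln p)^3$, combined with Theorem~\ref{thm:brandstatter-winterhof}, gives $\max_S|\widehat F(S)|=\mathcal O(2^{-m/8}m^{3/4})$ with $m=n-1$.
\item Success probability $2/3$ gives $\langle F,P\rangle\ge 1/3-o(1)$.
\item Cauchy--Schwarz with Parseval, $\sum_S\widehat P(S)^2\le1$, then forces $\sum_{j\le\deg P}\binom{m}{j}=\Omega(2^{m/4}/m^{3/2})$. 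Hence $\deg P\ge(H_2^{-1}(1/4)-o(1))m$.
\end{enumerate}
Comparing with $\deg P\le(Ra+1)2^d$ yields $Ra=\Omega(n)$ unconditionally.

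\emph{Why the enumeration fallback cannot substitute.} There are three problems.
\begin{enumerate}
\item A list of $2^{o(n)}$ candidates is superpolynomial unless $m=\mathcal O(\log n)$.
\item The joint distribution of many feedback bits from a constant-depth circuit is not classically samplable in general.
\item Evaluating the final round conditioned on a fixed transcript is a postselected simulation.
\end{enumerate}
At best this route would give a conditional $m=\omega(\log n)$ under $\mathrm{DLP}\notin\BPP$. That is a different and much weaker statement than the unconditional $\Omega(n)$ bound of Theorem~\ref{lowerbound}.
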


Here, the input-oblivious means that $y$ enters only through $|y\rangle$, rather than as a classical input. Taken together, our results identify measurement feedback as a control parameter for adaptive shallow quantum circuits with local readout. At fixed coherent depth, we establish a strict hierarchy of computational power: increasing the feedback budget enlarges the family of functions that can be encoded in a local observable. The two ends of this hierarchy exhibit qualitatively different computational regimes. With $m=\mathcal O(\log n)$ feedback outcomes, local expectation values for product-state inputs remain classically tractable. With $m=\operatorname{poly}(n)$, by contrast, the model can encode a DLP-hard task into a single local expectation value. Assuming the classical hardness of DLP, measurement feedback therefore drives a transition from classically tractable to classically intractable local readout. Finally, our lower bound for input-oblivious protocols provides a complementary perspective on DLP: even with adaptive quantum computation, solving the problem at constant depth requires feedback and ancillary space that scale linearly with the input size.

\section{Implications}

Our results have further implications for the roles of geometry and entanglement in adaptive shallow quantum computation. Theorem~\ref{thm:main-maintext} is formulated for the geometry-agnostic $\CRAS$ model, in which two-qubit gates may act between arbitrary pairs of qubits. Nevertheless, the hardness persists when coherent interactions are restricted to a two-dimensional nearest-neighbour architecture. The arithmetic component of our construction is expressed in terms of clean reversible updates, including fan-out and threshold-type operations. Threshold-type operations admit constant-depth two-dimensional realizations in the unbounded-fan-out model, while fan-out and long-range CNOT primitives can themselves be implemented at constant depth using mid-circuit measurements and classical feedforward. Further details are provided in Appendix~\ref{app:2d-crossbar} and Refs.~\cite{HoyerSpalek2005,BaeumerWoerner2025}.

\begin{corollary}[Classical hardness with 2D structure]
\label{coro:2Darchitecture}
There exists an explicit family of $\CRAS$
implemented on a 2D architecture, whose output expectation values cannot be
efficiently estimated classically within constant additive error.
\end{corollary}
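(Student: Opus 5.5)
The plan is to take the explicit circuit family $C_{p,g,y}$ behind Theorem~\ref{thm:main-maintext} and compile it, gate by gate, into a two-dimensional nearest-neighbour layout. Each compilation step must keep the coherent depth constant and the number of rounds constant, and must add only polynomially many ancillas and feedback bits. Since the designated output qubit and the readout observable $Z$ are unchanged, any constant-additive-error classical estimator for the 2D family would estimate $\langle Z\rangle$ for the original family. Theorem~\ref{thm:main-maintext} and the DLP hardness assumption would then be contradicted, which gives the corollary. So the substance of the proof is a simulation lemma: every $\CRAS$ circuit built from the primitives used in $C_{p,g,y}$ is reproduced exactly, including the output distribution of the final qubit, by a 2D $\CRAS$ circuit of constant depth.

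\textbf{Step 1: list the primitives.} I would first decompose $C_{p,g,y}$ into the gates it actually uses. These are single-qubit gates, arbitrary-range CNOTs, unbounded fan-out (copying one control bit to many targets), threshold and weighted-threshold gates $\Thr$, $\WTH$, which implement the modular arithmetic that prepares $\ket{I^{(k)}_x}$ and $\ket{I^{(k)}_{x_0}}$, and the controlled-SWAP layer of the modified swap test. Classically controlled Paulis and Cliffords driven by the global controller are free, because the model already permits global feedforward.

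\textbf{Step 2: make long-range CNOT and fan-out local.} I would implement both with the standard measurement-based construction. Lay out a line, or a tree of lines, of ancillas along a 2D path between control and targets, prepare a GHZ-like state with a constant-depth nearest-neighbour CNOT pattern followed by $X$ measurements, and apply a Pauli frame correction determined by the parities of the measurement outcomes. This costs constant coherent depth and one extra round, with $\mathcal{O}(\text{path length})=\poly(n)$ feedback bits whose parities the polynomial-time controller computes; this is the construction of Refs.~\cite{BaeumerWoerner2025,piroli2021quantum}. Many such gates in one layer need disjoint paths. I would route them on a crossbar grid of size polynomial in the number of qubits, with horizontal and vertical wire layers and crossings handled by local SWAP gadgets, so that one layer of long-range gates costs $\mathcal{O}(1)$ depth.

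\textbf{Step 3: handle threshold gates.} I would use the known result~\cite{HoyerSpalek2005} that $\Thr$ and $\WTH$ are in constant depth with unbounded fan-out, built from fan-out, parity and single-qubit rotations, via a quantum Fourier-style counting. Every fan-out and parity is then replaced by its 2D gadget from Step 2. Parity is fan-out conjugated by Hadamards, so it needs no separate gadget. I would also need the arithmetic to be \emph{clean}, with ancillas uncomputed, so that garbage does not spoil the interference in the swap test. Uncomputation is just the reversed circuit, at doubled constant depth.

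\textbf{Step 4: count rounds and depth.} The original circuit has a constant number of layers, and each layer becomes a constant number of 2D rounds. Outcome-dependent Pauli corrections can be commuted forward through Clifford parts and absorbed into later classical control. For non-Clifford rotations, I would use the standard trick of fixing their conjugated angles from corrections already known, which needs only one extra round per layer. The total stays constant.

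I expect the main obstacle to be Step 3 combined with the round count. The threshold constructions need non-Clifford rotations conditioned on data that is spread out, so Pauli byproducts from fan-out gadgets cannot always be deferred. I would first try to finish each fan-out gadget's corrections before the subsequent rotations begin, paying one round per fan-out layer. This keeps the overhead constant because the number of fan-out layers is constant, and it gives Corollary~\ref{coro:2Darchitecture}.
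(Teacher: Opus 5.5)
Your proposal is correct. Its outer logic is what the paper does implicitly: compile the family of Theorem~\ref{thm:main-maintext} into 2D without changing the channel of the (approximate) source subcircuits, so $\langle Z_{q_{\rm out}}\rangle$ and hence the hardness carry over. Your round accounting, which finishes each layer's fan-out corrections before the next non-Clifford layer, is exactly Lemma~\ref{lem:compile}.

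The geometric part is organized differently. The paper does not route arbitrary layers generically. It proves a crossbar lemma for parallel controlled $R_z$ rotations (Lemma~\ref{lem:rotation-crossbar}) in which crossings are resolved by time multiplexing. The site $Q_{ik}$ is a measured-and-reset route ancilla of the vertical fan-out in one stage, and a system target of the horizontal cat-state fan-out in the next, so no SWAP crossover is ever needed. It then builds 2D threshold gates by realizing both Hamming-weight-rotation layers of the H{\o}yer--{\v S}palek $\mathrm{exact}[t]$ construction as such crossbars (Lemma~\ref{lem:geometric-HS-threshold}). Normalized threshold/parity update layers go in disjoint rectangles with vertical source fan-out and horizontal relay routing (Lemma~\ref{lem:2d-clean-row-placement}). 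The Fourier blocks are handled only by citing Rosenbaum's nearest-neighbour compilation.

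You instead prove one generic statement: any constant-depth circuit of one-qubit, two-qubit and fan-out gates compiles to 2D by routing every layer's measurement-based CNOT and fan-out paths through a crossbar with local SWAP crossovers. That buys uniformity. The $\mathrm{QFT}_r$ blocks and the fanned-out controlled-SWAPs of the swap test are covered automatically, since the H{\o}yer--{\v S}palek arbitrary-modulus QFT is itself such a circuit. The paper's route buys explicit coordinates and avoids crossover gadgets, at the cost of special-casing each primitive.

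Two points to fix in a write-up:
\begin{enumerate}[(i)]
\item Add $\mathrm{QFT}_r\otimes\mathrm{QFT}_r$ to your Step~1 list. It is what produces $c$ and $d\equiv -xc \pmod r$ before the interval states can be prepared; your list currently attributes that preparation to the arithmetic alone.
\item Stagger or space the crossings so that the SWAP gadgets along one path do not compete for the same neighbouring sites. That spacing is what lets all crossings of a layer run in parallel in $\mathcal O(1)$ depth.
\end{enumerate}
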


This geometric refinement should be distinguished from the previously known
observation that Shor-type computations can be compressed into shallow
geometrically local architectures~\cite{Rosenbaum2013}. Our purpose here is not merely to implement
the discrete-logarithm algorithm in shallow depth, and we provide an explicit quantum circuit construction, without oracle access. This geometric formulation naturally connects our result to area-law entanglement in
LOCC-assisted shallow circuits. Ref.~\cite{piroli2021quantum} showed that
fixed-depth geometrically local quantum circuits assisted by local operations
and classical communication obey entanglement area laws. Our two-dimensional
implementation has the same architecture. Consequently, conditioned on any
individual quantum trajectory, the post-measurement state belongs to the
area-law class of adaptive shallow circuit. Nevertheless, the
unconditional local readout generated by the full adaptive process remains
classically hard to predict in the worst case. We note that this is distinct from the familiar tensor-network route to area-law hardness. General two-dimensional PEPS obey entanglement area laws, yet their exact contraction and exact evaluation of local observables are $\#P$-complete, with worst-to-average reductions for random ensembles~\cite{schuch2007computational,haferkamp2020contracting}. More recently, constant-precision estimation of local observables for sufficiently weakly injective PEPS was shown to be ${\rm postBQP}$-complete~\cite{harley2025computational}. Here, by contrast, the hardness appears in the unconditional local response of an explicit adaptive shallow process, even though every conditioned trajectory remains within the area-law regime.

\begin{corollary}[Entanglement area law hardness]
\label{coro:area-law}
There exists an explicit family of 2D quantum states obey the entanglement area law, while the expectation value of one fixed single-qubit observable is
classically hard in the worst case.
\end{corollary}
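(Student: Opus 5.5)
The plan is to combine Corollary~\ref{coro:2Darchitecture} with the area-law result of Ref.~\cite{piroli2021quantum}. I will take the explicit family of 2D nearest-neighbour $\CRAS$ circuits $C_{p,g,y}$ from Corollary~\ref{coro:2Darchitecture}. The states in the claimed family are the post-measurement states produced along each measurement trajectory. Their designated output qubit, averaged over trajectories, carries the DLP-hard expectation value of Theorem~\ref{thm:main-maintext}.

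\emph{Area law.} First I will fix the structure of the circuits: constant coherent depth $d$, a constant number of rounds, all two-qubit gates nearest-neighbour on a 2D grid, and only classical communication as a nonlocal resource. Condition on an outcome string $s$. Then each round is a depth-$d$ nearest-neighbour unitary, followed by projections onto product measurement outcomes and $s$-dependent single-site (or local) corrections. For any region $A$, a depth-$d$ local unitary can raise $S(\rho_A)$ by at most $\mathcal{O}(d\,|\partial A|)$, since only gates crossing a strip of width $\mathcal{O}(d)$ around $\partial A$ contribute. Local projections and local unitaries applied to a pure state do not increase the entanglement across $A$ once we renormalize along a fixed trajectory. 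Measured qubits become product and can be counted on either side. Starting from a product input, induction over the constant number of rounds gives $S(\rho_A^{(s)})=\mathcal{O}(|\partial A|)$ for every $s$. This is the content of Ref.~\cite{piroli2021quantum}, which I will cite rather than reprove.

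\emph{Hardness.} The single-qubit readout is $\langle Z_{\rm out}\rangle=\sum_s \Pr[s]\,\langle\psi_s|Z_{\rm out}|\psi_s\rangle$. This is the expectation value of $Z_{\rm out}$ on the trajectory-averaged state, or equivalently on the 2D state obtained by keeping the outcome registers coherently. By Corollary~\ref{coro:2Darchitecture}, no randomized polynomial-time algorithm can estimate it to constant additive error unless DLP$\in\BPP$.

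The main obstacle is stating precisely which object is both area-law and hard, since a single conditioned trajectory may be easy while the average is hard. I will handle this in one of two ways. The first option is to state the corollary for the ensemble $\{\Pr[s],\psi_s\}$, all of whose members obey the area law. The second option is to define the state by replacing the feedforward with coherently controlled local corrections on a record register laid out adjacent to the measured sites. The difficulty with the second option is that the global classical processing is not geometrically local, and coherentizing it could violate the area law. So I will adopt the ensemble formulation, which is exactly what the preceding discussion in the paper asserts.
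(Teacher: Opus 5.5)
Your proposal follows the paper's own argument: it takes the 2D nearest-neighbour family of Corollary~\ref{coro:2Darchitecture}, applies the area-law bound of \cite{piroli2021quantum} trajectory by trajectory, and places the hardness in the trajectory-averaged readout $\sum_s \Pr[s]\langle\psi_s|Z_{\rm out}|\psi_s\rangle$, which is exactly the ensemble formulation the paper adopts. One technical correction to your sketch: run the induction on the R\'enyi-0 entropy $S_0(A)$ (log Schmidt rank), as the paper does, because on a single measurement branch a local computational-basis projection can increase the von Neumann entropy across the cut (local filtering can concentrate entanglement) but can never increase the Schmidt rank; the bound $S(\rho_A^{(s)})\le S_0(A)=\mathcal O(\lvert\partial A\rvert)$ then gives the area law you want.
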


\section{Discussion}
Our results characterize how the computational power of adaptive shallow quantum circuits with local readout changes with the amount of classical feedback. At fixed coherent depth, increasing the feedback budget strictly enlarges the class of functions that can be accessed through a local readout. This hierarchy contains two qualitatively different regimes. With $m=\mathcal O(\log n)$ feedback, local expectation values for product-state inputs can be estimated efficiently by a classical algorithm. With $m=\operatorname{poly}(n)$, adaptive shallow circuits can encode DLP into a single-qubit expectation value, making its estimation classically hard under the classical hardness assumption for DLP. Classical feedback can therefore change local-readout estimation from a classically tractable problem to a classically intractable one without increasing the coherent depth.

We further use adaptive shallow circuits to establish a resource lower bound for DLP. Any constant-depth protocol for the DLP task considered here requires a feedback budget $\Omega(n)$, corresponding to $\Omega(n)$ auxiliary qubits. Rigorous lower bounds for DLP are known mainly in restricted computational settings~\cite{Shoup1997,HhanYamakawaYun2024}; our result identifies a different obstruction, namely the feedback required at constant quantum depth. The scale of this bound is particularly revealing: classical simulation is efficient with $m=\mathcal O(\log n)$ feedback, whereas DLP requires $m=\Omega(n)$. DLP therefore lies far beyond the classically tractable regime in the context of adaptive shallow quantum circuit. Our construction also provides a circuit-based route to worst-case hardness of local-observable estimation under area-law entanglement. It is proved that quantum states generated by the two-dimensional adaptive circuit obey the entanglement area law, yet the expectation value produced by the full adaptive process remains classically hard to estimate. This differs from the standard PEPS-based route: general two-dimensional PEPS can encode postselected quantum computations, with exact contraction being $\#P$-complete and local-observable estimation becoming PostBQP-complete in suitable promise settings~\cite{schuch2007computational,harley2025computational}. Here, the hardness arises within the more structured setting of an explicit adaptive shallow circuit with a fixed local readout.

Our results separate a classically simulable regime with logarithmic feedback from a classically hard regime with polynomial feedback, leaving the intermediate range unresolved. A natural question is how this boundary changes when the classical controller is restricted to $\mathrm{NC}^0$ or $\mathrm{TC}^0$ computation. Meanwhile, recent work on one-dimensional MERA-like quantum circuits provides efficient classical simulations when the local gates are random~\cite{bermejo2026quantum}, but whether similar guarantees hold in the worst case remains open. Another important question is robustness to noise. Constant-strength gate and measurement errors may weaken the signal generated by feedback and thus restore classical simulability. It is therefore important to determine whether the hardness persists below a finite noise threshold and whether a noise-robust average-case separation can be established. Resolving these questions would bring the present results closer to experimentally relevant adaptive quantum circuits.

\section*{Data availability}
\noindent 
No datasets were generated or analysed during the current study.

\section*{Competing interests}
\noindent The authors declare no competing interests

\section*{Acknowledgments}
\noindent Y.~Wu acknowledges the support from the NSFC Grant (No.~62501060).  X.~M.~Zhang acknowledges the support from the NSFC Grant (No.~12405013) and the Guangdong Provincial Quantum Science Strategic Initiative (Grants No.GDZX2503008, No.GDZX2503001).  C.~Wang acknowledges the support from the NSFC Grant (Nos.~62461160263 and~62371050). X.~Yuan acknowledges the support from the National Natural Science Foundation of China Grant (No.~12361161602) and NSAF Grant (No.~U2330201), the Quantum Science and Technology-National Science and Technology Major Project (2023ZD0300200), Beijing Natural Science Foundation Z250004, and 
Beijing Science and Technology Planning Project (Grant No.~Z25110100810000).

\clearpage
\bibliography{main}
\clearpage
\appendix

\section{Comparison with Related Works}

Our result should be distinct from recent measurement-based quantum-advantage
proposals. Measurement-driven fan-out staircases can compress dense IQP-type
sampling circuits to constant depth~\cite{cao2026measurement}, showing that
measurements and feedforward can overcome light-cone restrictions in shallow
architectures. However, the advantage in those constructions is still
certified through random sampling. Other works have studied oracle
separations or the effect of intermediate measurements on restricted circuit
models~\cite{chia2023need,jozsa2024iqp}. By contrast, our
focus is neither sampling or computational separation in the context of oracle, we instead give an
explicit construction showing that adaptive measurement and classical
feedforward can compress classically intractable information
into the mean value of a fixed local observable. Conceptually, adaptively measurement identifies a sharp complexity transition of the local readout for constant-depth quantum circuit, paving a novel way for achieving computational quantum advantage.

Relative to Shor's algorithm, our theorem preserves the hidden-structure origin of the hardness while altering both the computational architecture and the form of the output. Shor's original algorithm is most naturally formulated as a coherent quantum circuit followed by classical number-theoretic postprocessing~\cite{Shor1994,Shor1997}. By contrast, in our setting the coherent quantum component is restricted to constant depth, while a polynomial-time classical controller orchestrates the adaptive measurements and feedforward. Moreover, the output is no longer an explicitly recovered period or exponent register; instead, the relevant information is compressed into the expectation value of a fixed single-qubit observable. From this perspective, the theorem suggests that hidden-subgroup-type hardness can persist even under a remarkably severe restriction on coherent depth, provided that intermediate measurement and classical feedforward are available. Ref.~\cite{Rosenbaum2013} proposed a method for compiling Shor's algorithm into a constant-depth quantum circuit with classical feedforward, but did not provide an explicit gate-level construction for the arithmetic components involved in our setting. In comparison, our result mainly focuses on the quantum advantage arising from local-observable estimation, and also provides an explicit construction of the required quantum gates, such as the multiplier and modular-reduction functions. We refer readers to Appendix~\ref{proofsketchtheorem2}.

Random quantum circuit sampling emerged as a leading benchmark for near-term quantum advantage because random circuits are native to programmable platforms and scale naturally in experiment, as reflected in the series of quantum advantage experiments~\cite{Boixo2018,Arute2019Sycamore,Wu2021Zuchongzhi,Morvan2024PhaseTransitionsRCS}. Its complexity-theoretic justification is correspondingly more demanding: the task is to approximately sample from an exponentially large output distribution, and the associated hardness result relies on the worst-case hardness assumption followed by worst-to-average case reduction~\cite{AaronsonChen2017Supremacy,Bouland2019,DalzellHunterJonesBrandao2022PRXQ,Movassagh2023HardnessRQC,KondoMoriMovassagh2022FOCS,BoulandFeffermanLandauLiu2022FOCS,AharonovGaoLandauLiuVazirani2023STOC,HangleiterEisert2023RMP}. By contrast, our task is intentionally much narrower. We do not ask for approximate sampling of the full distribution, but only for the expectation value of a fixed local Pauli observable. More importantly, in our construction a comparatively coarse additive error already suffices to witness hardness, rather than a high-fidelity sampling guarantee, generally characterized by the cross-entropy benchmarking~\cite{Boixo2018NearTermSupremacy,Arute2019Sycamore,Villalonga2019qFlex,PanChenZhang2022Sycamore,GaoKalinowskiChouLukinBarakChoi2024XEB,ZlokapaVillalongaBoixoLidar2023Boundaries}. The trade-off is that our theorem is formulated as a worst-case reduction from discrete logarithm rather than the hardness of total-variation distance statement. The conceptual payoff is that, once intermediate measurement and polynomial-time classical feedforward are allowed, nontrivial quantum hardness can already survive in a single local deterministic readout, without requiring access to an exponentially large output distribution.

Finally, our theorem is complementary to shallow-circuit separations in the style of Bravyi \emph{et al.}\cite{Bravyi2020}. Those works prove a striking unconditional separation for noisy shallow quantum circuits, but against restricted shallow classical models and for relation problems tailored to depth-complexity comparisons. Our theorem goes in the opposite direction: the classical comparison class is unrestricted randomized polynomial time, but the hardness statement is conditional and the quantum family is explicit rather than generic. The two viewpoints illuminate different facets of shallow quantum advantage. One shows that even noisy shallow circuits can outrun low-depth classical computation; the other shows that adaptive shallow circuits can compress classically inaccessible information into a fixed local observable.

\section{Problem Statement and Quantum Circuit model}

\subsection{Prime-field discrete logarithm}

\begin{definition}[Prime-field DLP]
Let $\mathcal I := \{(p,g,y): p \text{ prime},\ g \text{ generates } \Z_p^\ast,\ y\in\Z_p^\ast\}$, where the multiplicative group of nonzero residues $\Z_p^\ast=\{1,2,\cdots,p-1\}$. For $(p,g,y)\in\mathcal I$, let $r=p-1$ and $n=\lceil \log_2 p\rceil$. Since $\Z_p^\ast$ is cyclic of order $r$, there exists a unique residue class $x\in\Z_r=\{0,1,\cdots,r-1\}$ such that $g^x \equiv y \pmod p$. The prime-field DLP asks to compute $x=\log_g y\pmod p$ from the input $(p,g,y)$.
\end{definition}

\begin{definition}[Promise problem $\mathrm{DLP}_\delta(p,g)$]
Fix $\delta\in(0,1/2]$. For a prime $p$ and a generator $g\in\Z_p^\ast$, the problem $\mathrm{DLP}_\delta(p,g)$ has inputs $y=g^x\bmod p$ under the promise that
\begin{align}
    x\in\{1,\ldots,\lfloor \delta(p-1)\rfloor\}
\end{align}
or
\begin{align}
    x\in\left\{\frac{p-1}{2}+1,\ldots,\frac{p-1}{2}+\lfloor \delta(p-1)\rfloor\right\}.
\end{align}
The $\mathrm{DLP}_\delta(p,g)$ task is to distinguish these two cases.
\end{definition}

Actually, these two problems are polynomially equivalent to each other.
\begin{lemma}[\cite{blum2019generate}]
    For every $p$, generator $g$ and  $\delta\in[1/{\rm poly}(n),1/2]$, if there exists a polynomial time algorithm for ${\rm DLP}_{\delta}(p,g)$, then there exists a polynomial time algorithm for ${\rm DLP}(p,g)$.
    \label{lemma:DLPReduction}
\end{lemma}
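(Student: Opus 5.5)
The plan is to follow the Blum--Micali style self-reduction: use a decider for $\mathrm{DLP}_\delta(p,g)$ to recover the bits of $x$ one at a time, starting from the low-order end. We proceed by iterated halving. The key structural fact is that $g^{(p-1)/2}\equiv -1 \pmod p$. Hence knowing whether $x$ lies in the ``lower half'' or ``upper half'' of a shifted window is the same as distinguishing $y$ from $-y$ on a controlled range. The subtlety is that the promise only covers the two short windows $[1,\delta(p-1)]$ and $[(p-1)/2+1,(p-1)/2+\delta(p-1)]$. So we must first move $x$ into one of them, and we do this by random self-reducibility plus knowledge of low-order information.

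\textbf{Step 1: the parity of $x$.} The least significant bit of $x$ is computable directly by Euler's criterion, since $y^{(p-1)/2}\equiv 1$ iff $x$ is even. If $x$ is odd, replace $y$ by $y g^{-1}$, so we may assume $x$ is even.

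\textbf{Step 2: the halving step.} For even $x$, the two square roots of $y$ (computable in polynomial time by Tonelli--Shanks or Cipolla) are $g^{x/2}$ and $g^{x/2+(p-1)/2}$. Deciding which root is the ``principal'' one $g^{x/2}$ is exactly a question of which half the exponent lies in. If we can answer it, we replace $y$ by $g^{x/2}$ and recurse, recovering all $n$ bits in $n$ rounds.

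\textbf{Step 3: answering the halving question with the promise decider.} To decide which half, we use a decider that only works when the exponent is within $\delta(p-1)$ of $0$ or of $(p-1)/2$. We handle this by maintaining the invariant that the current exponent $x'$ is already known to lie in a window of width about $\delta(p-1)$, adjusting by known multiplicative shifts $g^{-c}$. Concretely, we first guess the top $\lceil\log_2(1/\delta)\rceil=O(\log n)$ bits of $x$. There are only $\mathrm{poly}(n)$ possible guesses, and each final candidate is verified by checking $g^{x}\equiv y$. With the top bits known, the exponent after subtracting the known offset lies in $[0,\delta(p-1))$. Each square-root step halves the exponent, which keeps it inside the window. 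The two roots then differ by exactly $(p-1)/2$, so they land in the two promise regions and the decider picks the correct root. The updated known offset is also halved, with the parity bookkeeping from Step 1 absorbed into the shift.

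\textbf{Main obstacle and error handling.} The delicate part is the bookkeeping in Step 3: keeping the shifted exponent exactly inside the promise windows, and in particular avoiding the boundary value $0$ excluded by the promise. We handle the boundary by an extra shift by $g$. We also need that the decider's bounded error, if it is randomized, can be amplified by repetition to $1/\mathrm{poly}$ per round, so that a union bound over $n$ rounds and $\mathrm{poly}(n)$ guesses succeeds. Verification of the final $x$ makes the overall algorithm correct, with running time $\mathrm{poly}(n)\cdot(1/\delta)$, which is polynomial for $\delta\ge 1/\mathrm{poly}(n)$.
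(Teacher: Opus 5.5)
Your argument is correct and is essentially the Blum--Micali self-reduction that the paper cites without reproducing. Its ingredients are Euler's criterion for the low bit, square roots to halve the exponent, the promise decider queried on (root)$\cdot g$ to pick the principal root, and guessing $O(\log(1/\delta))$ high-order bits to keep the exponent small; it is simpler than the original because a worst-case promise decider removes the need for random self-reduction. One small bookkeeping fix: guessing only $\lceil\log_2(1/\delta)\rceil$ top bits bounds the residual exponent by $\delta 2^n<2\delta p$ rather than $\delta(p-1)$, so guess one or two more bits (still $\mathrm{poly}(n)$ guesses) to guarantee $x'/2+1\le\lfloor\delta(p-1)\rfloor$ at the first query.
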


\subsection{Constant-round adaptive shallow circuits}

\begin{definition}[$\CRAS$ class]
A polynomial time-uniform family of adaptive quantum circuits $\{C_z\}_{z\in\{0,1\}^\ast}$ belongs to $\CRAS$ if, on inputs $z$ of length $n$, the following conditions hold.
\begin{enumerate}[(i)]
    \item $C_z$ performs on at most $\poly(n)$ qubits.
    \item The circuit is partitioned into $\mathcal{O}(1)$ adaptive rounds.
    \item In each round, the quantum circuit has depth $\mathcal{O}(1)$ over arbitrary long-range single- and double-qubit gates.
    \item Intermediate computational-basis measurements are allowed.
    \item Between rounds, measurement outcomes may be processed by an arbitrary polynomial-time classical controller, which can determine later quantum operations.
    \item At the end, one designated output qubit $q_{\rm out}$ is measured in the Pauli $Z$ basis and the quantity of interest is $\mu_z= \langle Z_{q_{\rm out}}\rangle_{C_z}$.
\end{enumerate}
The family is \emph{explicit} if a polynomial-time classical algorithm outputs the circuit description of $C_z$ from $z$.
\end{definition}

\section{Preliminary Knowledge}

The following results support the proof of our main result.

\begin{lemma}[H{\o}yer and {\v S}palek~\cite{{HoyerSpalek2005}}]
\label{fact:HS}
For polytime-uniform constant-depth polynomial-size quantum circuits with unbounded fan-out, the following primitives are available in $\BQNCf$ (bounded error constant-depth quantum circuit with fan-out gate):
\begin{enumerate}[(i)]
    \item addition, multiplication of many integers and division with remainder~\cite[Theorem~4.8]{HoyerSpalek2005},
    \item reversible modular addition~\cite[Lemma~4.10]{HoyerSpalek2005},
    \item the quantum Fourier transform ${\rm QFT}_q$ for arbitrary modulus $q$~\cite[Theorem~4.17]{HoyerSpalek2005}.
\end{enumerate}
These subroutines are coherent in the sense required here: the input registers are preserved, the answers are written to designated ancillas, and the subroutines are used reversibly inside the proof of arbitrary-modulus Fourier transforms.
\end{lemma}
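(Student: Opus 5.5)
The statement to prove is Lemma~\ref{fact:HS}, which collects results of H{\o}yer and {\v S}palek. The plan is to derive each item from the fan-out and threshold machinery of that work rather than prove anything new. The common backbone is the known equivalence, in constant depth with unbounded fan-out, between fan-out and the exact $\mathrm{OR}$, parity and (via the quantum threshold construction) the gates $\Thr_t$. The route is:
\begin{itemize}
\item use fan-out to copy bits and compute parities of many qubits in depth $\mathcal O(1)$;
\item use phase kickback together with a constant-depth ``counting'' QFT over small moduli to evaluate $\mathrm{Exact}_t$ and hence $\Thr_t$ gates;
\item conclude that every uniform $\mathsf{TC}^0$ circuit can be simulated coherently in $\BQNCf$.
\end{itemize}
Each simulation is made clean by computing into ancillas, copying the answer out with fan-out, and then uncomputing. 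This gives exactly the coherence property claimed in the last sentence of the lemma.

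With this in hand, item (i) follows by invoking the classical $\mathsf{TC}^0$ circuits for arithmetic:
\begin{itemize}
\item iterated addition by the standard carry-save reduction to threshold counts;
\item multiplication of many integers via Chinese remaindering and discrete-log tables over small primes, in the style of Hesse--Allender--Barrington;
\item division with remainder via the $\mathsf{TC}^0$ division algorithm.
\end{itemize}
Each of these is lifted to a reversible circuit by the compute--copy--uncompute pattern above.

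Item (ii), reversible modular addition $\ket{a}\ket{b}\mapsto\ket{a}\ket{a+b \bmod q}$, needs more than clean computation, because the map must be in-place. I would realize it as follows:
\begin{itemize}
\item compute $s=a+b \bmod q$ into fresh ancillas using item (i);
\item uncompute $b$ from $(a,s)$ by computing $s-a \bmod q$, which is again constant depth;
\item finish with a swap.
\end{itemize}

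Item (iii) is the main obstacle. For modulus $2^k$, the QFT factorizes into a product of single-qubit phase rotations conditioned on products of bits. Those phases can be applied in parallel once fan-out copies every bit, since all the controlled-phase gates commute, and each rotation angle is a sum of $\mathcal O(k^2)$ terms. For arbitrary $q$, I would follow the H{\o}yer--{\v S}palek route:
\begin{itemize}
\item first approximate the QFT over $q$ by the QFT over a power of two acting on a ``spread'' state;
\item use the reversible arithmetic from (i)--(ii) to compute quotients and remainders modulo $q$ and to uncompute garbage;
\item invoke an approximate Fourier-sampling argument in which the error is inverse-polynomial in the number of extra padding qubits.
\end{itemize}

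I expect the delicate point to be showing that the garbage registers created in the arbitrary-modulus construction can be uncomputed in constant depth. The arbitrary-modulus QFT is therefore only bounded-error or approximate, which is why the lemma places it in the bounded-error class $\BQNCf$. Since the paper cites the precise theorems, the proof ultimately reduces to matching each item to \cite[Theorem~4.8, Lemma~4.10, Theorem~4.17]{HoyerSpalek2005}.
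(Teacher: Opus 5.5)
Since the paper gives no argument for this lemma beyond the citation, your closing step of matching each item to the cited theorems is exactly what the paper does. Your sketches of (i) and (ii) are also sound: (i) via uniform threshold circuits plus the $\mathsf{TC}^0$ arithmetic algorithms, and (ii) by computing $s=a+b\bmod q$ out of place, erasing $b$ by XORing in $s-a\bmod q$, and swapping.

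\textbf{The gap is in (iii).} Your self-contained argument breaks at its first step. The in-place map $\ket{x}\mapsto 2^{-k/2}\sum_y e^{2\pi i xy/2^k}\ket{y}$ is not a product of commuting controlled-phase gates. What parallelizes that way is only the out-of-place map $\ket{x}\ket{0}\mapsto\ket{x}\otimes 2^{-k/2}\sum_y e^{2\pi i xy/2^k}\ket{y}$: Hadamards on a fresh register, then the commuting phases $e^{2\pi i x_i y_j 2^{i+j-k}}$ in one layer after fanning out the bits. In the textbook circuit, it is precisely the interleaving of Hadamards with controlled phases that makes it sequential. The entire difficulty of a constant-depth QFT is the remaining step: coherently erasing $\ket{x}$ given its Fourier state. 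Your plan never addresses this.

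The spread-state route for arbitrary $q$ does not avoid the problem. It presupposes an in-place $\mathrm{QFT}_{2^m}$, and its final un-spreading step is again an erasure problem of the same kind. So the ``delicate point'' is misidentified: it is not garbage from spreading, but recovering $x$ from the Fourier state.

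\textbf{The missing idea.} This is the Cleve--Watrous mechanism, which H{\o}yer--{\v S}palek make constant depth: phase estimation on an eigenvector that is never disturbed.
\begin{enumerate}
\item The state $\ket{\hat x}_q=q^{-1/2}\sum_y\omega_q^{xy}\ket{y}$ is an eigenvector of the increment $\ket{y}\mapsto\ket{y+1\bmod q}$ with eigenvalue $\omega_q^{-x}$.
\item Controlled powers of the increment are exactly the reversible modular addition $\ket{c}\ket{y}\mapsto\ket{c}\ket{y+c\bmod q}$ of item (ii). This is why (ii) appears in the lemma, and why the statement speaks of subroutines used reversibly inside the proof of arbitrary-modulus Fourier transforms.
\item Because the eigenvector is unchanged, many small control registers can simultaneously pick up kickback phases encoding $2^j x/q \bmod 1$.
\item All their controlled additions on the common target merge into a single modular addition of an iterated sum, which is item (i) followed by item (ii).
\item Constant-depth threshold and arithmetic post-processing then recovers $x$ with polynomially small error. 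Finally, $x$ is XORed out of the input register and the estimation is run in reverse.
\end{enumerate}
This treats $q=2^k$ and general $q$ uniformly, so no power-of-two spreading is needed.

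\textbf{A smaller point.} In H{\o}yer--{\v S}palek, OR and threshold gates are themselves only approximated to polynomially small error in constant depth. Hence items (i) and (ii) are approximate too. Exact constant-depth thresholds come from the later Takahashi--Tani collapse. So the bounded-error qualifier in $\BQNCf$ is not specific to the arbitrary-modulus QFT, consistent with the paper's Added Note on per-primitive error budgets.
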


\begin{lemma}[Dynamic constant-depth fan-out and CNOT ladders~\cite{BaeumerWoerner2025}]
\label{fact:BW}
Unbounded fan-out, ${\rm CNOT}$ ladders and long-range CNOT gates can be implemented by constant-depth dynamic circuits using only one- and two-qubit gates, parallel mid-circuit measurements and one round of classical feedforward.
\end{lemma}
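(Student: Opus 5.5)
The plan is to reduce all three primitives to one mechanism. A Clifford circuit of arbitrary (even linear) depth can be run with all of its gates in parallel by gate teleportation, provided its Pauli byproducts are corrected only at the end. Because Paulis conjugate through Cliffords into Paulis, every byproduct can be pushed to the output. There it becomes a single Pauli frame, and each component of the frame is a parity ($\mathbb{F}_2$-linear function) of the measurement outcomes. One round of polynomial-time classical feedforward computes these parities, and a single layer of single-qubit Paulis applies them.

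\textbf{Long-range CNOT.} First I would handle the long-range CNOT between qubits $c$ and $t$ at distance $L$. Place a chain of $2L$ ancillas between them and prepare Bell pairs on neighbouring ancilla pairs in depth $2$. Then perform Bell-basis measurements on the interior pairs, in parallel and in constant depth, to entanglement-swap. The outcomes $s_1,\dots,s_{L}$ leave a Bell pair between the ends up to a Pauli $X^{a}Z^{b}$, where $a$ and $b$ are prefix parities of the $s_j$. Finally apply the standard teleported-CNOT gadget (a local CNOT into the Bell pair followed by $X$/$Z$ measurements). The resulting correction on $c$ and $t$ is again a parity of all outcomes.

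\textbf{Unbounded fan-out.} Next, fan-out $\ket{c}\ket{t_1\cdots t_n}\mapsto\ket{c}\ket{t_1\oplus c,\dots,t_n\oplus c}$. Prepare ancillas $a_1,\dots,a_n$ in $\ket{+}$ on even sites and $\ket{0}$ on odd sites, and apply CNOTs to the neighbours in depth $2$. Then measure the neighbouring parities $Z_{a_i}Z_{a_{i+1}}$ of the disconnected blocks through extra ancillas. The result is a cat state $(\ket{0^n}+\ket{1^n})/\sqrt2$ up to $X$ flips given by prefix parities of the outcomes. Next apply CNOT $c\to a'$ and CNOT $a_1\to a'$, and measure $a'$ to get outcome $s$. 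This correlates the cat state with $c$, so that each $a_i$ holds $c\oplus s\oplus(\text{known parity})$. Apply CNOT $a_i\to t_i$ for all $i$ in one layer. Then measure every $a_i$ in the $X$ basis; this disentangles the ancillas at the cost of a $Z^{\oplus_i m_i}$ phase on $c$. All corrections ($X$ on the $t_i$ and $Z$ on $c$) are parities of the outcomes and are applied after a single feedforward round. Since the parities are computed by the controller, the coherent depth is $\mathcal O(1)$.

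\textbf{CNOT ladders and the main obstacle.} Finally, the CNOT ladder $\mathrm{CNOT}_{n-1\to n}\cdots\mathrm{CNOT}_{1\to 2}$ realizes the linear map $x_i\mapsto x_1\oplus\cdots\oplus x_i$. Here I would teleport every CNOT of the ladder simultaneously: insert Bell pairs on each wire between consecutive gates, apply all gates in one layer, and perform all Bell measurements in parallel. The byproduct Pauli attached to each gate is conjugated through the remaining CNOTs of the ladder. Because the ladder is itself linear over $\mathbb{F}_2$, this yields $X$- and $Z$-type frames given by explicit prefix and suffix parities of the outcomes, computable in polynomial (indeed linear) time. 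I expect this bookkeeping to be the main obstacle. Specifically, one must verify that no correction has to be applied before a later gate acts, i.e.\ that deferring all Paulis to the end is legitimate. This holds because Clifford conjugation maps each byproduct to a Pauli without changing which Clifford is implemented. I would check it by tracking the byproducts as a pair of $\mathbb{F}_2$ vectors $(u,v)$ representing $X^uZ^v$ and propagating them with the symplectic matrix of the ladder. This confirms that exactly one round of classical feedforward suffices and that only one- and two-qubit gates are used throughout.
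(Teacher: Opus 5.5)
Your proposal is correct, but it takes a different route from the paper. The paper gives no argument for Lemma~\ref{fact:BW}; it imports the gadgets of B\"aumer--Woerner~\cite{BaeumerWoerner2025}. The only feature it later relies on, in Lemma~\ref{lem:rotation-crossbar}, is that their fan-out lives on a single line on which system qubits and route ancillas alternate, with the route ancillas measured, corrected in one round, and reset.

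You instead derive all three primitives from one principle: deferring the Pauli frame through Clifford gate teleportation, i.e.\ the fact that the Clifford part of a measurement-based computation fits in one round. Your three gadgets check out:
\begin{itemize}
\item In the remote CNOT, both local CNOTs can precede both measurements. This leaves corrections $X^{m_1}$ on $t$ and $Z^{m_2}$ on $c$, shifted by the entanglement-swapping parities.
\item In the cat-state fan-out, the corrections are $X^{s\oplus f_i}$ on $t_i$, with $f_i$ the known cat-state flips, and $Z^{\oplus_i m_i}$ on $c$.
\item In the teleported ladder, a gate and a Bell measurement share a qubit only when the gate also precedes that measurement in the logical order. Hence the one-layer schedule is legitimate.
\end{itemize}

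Your route buys generality and self-containedness: it handles any polynomial-depth Clifford circuit, with the frame obtained by symplectic propagation. The cited construction buys economy and geometry. Compared with it:
\begin{itemize}
\item In your ladder, each interior wire's output ends up on the ancilla $B_i$ rather than on the original qubit. This is harmless up to relabelling or a final swap.
\item Your fan-out needs a two-leg layout, with cat ancillas adjacent to the targets and parity ancillas between the cat ancillas, rather than one route ancilla per gap.
\item Your long-range CNOT uses $2L$ ancillas for a distance $L$, whereas on a lattice only $L-1$ sites separate $c$ from $t$.
\end{itemize}
None of this matters for the lemma as stated, which is geometry-agnostic. It would matter if you wanted your gadgets to replace the citation in the crossbar of Lemma~\ref{lem:rotation-crossbar}, whose paths $x_i,Q_{i1},X_{i1},Q_{i2},\ldots$ strictly alternate system and route sites.
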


\begin{lemma}
\label{lem:compile}
Let $\{U(n)\}$ be a polynomial time uniform polynomial-size family of quantum circuits of depth $d(n)$ over arbitrary one-, two-qubit gates and unbounded fan-out gates. Then there exists a polynomial time uniform family $\{\widetilde U(n)\}\subseteq \CRAS$ such that:
\begin{enumerate}[(i)]
    \item $\widetilde U(n)$ realizes the same completely positive trace-preserving map as $U(n)$;
    \item the number of adaptive rounds of $\widetilde U(n)$ is $\mathcal{O}(d(n))$.
\end{enumerate}
In particular, every $\mathcal{O}(1)$-depth fan-out family can be compiled into $\CRAS$.
\end{lemma}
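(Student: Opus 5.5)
The plan is a layer-by-layer simulation. Each gate of $U(n)$ is replaced by a constant-depth dynamic gadget, and the gadgets are stacked, so the round count grows linearly in $d(n)$.

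First, write $U(n)=L_{d}\cdots L_{1}$, where each layer $L_t$ is a set of gates acting on disjoint qubits. Each gate is an arbitrary one- or two-qubit gate, or an unbounded fan-out gate $\ket{b}\ket{a_1}\cdots\ket{a_k}\mapsto\ket{b}\ket{a_1\oplus b}\cdots\ket{a_k\oplus b}$. The one- and two-qubit gates of $L_t$ are already allowed in a single $\CRAS$ round of depth $1$.

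For the fan-out gates we invoke Lemma~\ref{fact:BW}. Each fan-out in $L_t$ is replaced by the constant-depth dynamic gadget of Baeumer--Woerner. This gadget prepares a GHZ-type or cat-state resource on fresh ancillas, entangles it with the control and targets by nearest-neighbour or two-qubit CNOTs, and measures the ancillas mid-circuit in the $Z$/$X$ basis. The resulting Pauli byproduct corrections are parities of the outcomes. The parities are computed by the polynomial-time controller, and the corrections are applied as single-qubit Paulis in the next round.

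Because the fan-out gates in one layer act on disjoint qubits, all their gadgets run in parallel within a single round of constant depth. The corrections then take one further round, and that round can be merged with the first coherent step of the next layer. Layer $t$ therefore costs $\mathcal O(1)$ rounds, and the whole circuit costs $\mathcal O(d(n))$ rounds. Ancillas are fresh for each gadget. The total count is $\poly(n)$, since the number of fan-out gates times their fan-out is at most the size of $U(n)$.

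For correctness (i), the key fact is that each gadget, averaged over measurement outcomes and with the feedforward correction applied, acts as exactly the fan-out unitary on the data qubits tensored with a fixed state on discarded ancillas. Composing these exact channels layer by layer reproduces the CPTP map of $U(n)$. The main point to check, which I expect to be the only delicate step, is this: the corrections for layer $t$ must be Paulis that either are applied before layer $t+1$ begins, or are commuted through it. I would simply apply them before layer $t+1$, which costs at most one extra depth-$1$ round per layer and keeps the total round count $\mathcal O(d(n))$.

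Uniformity is immediate. The gadget descriptions and the parity maps for the corrections are computable in polynomial time from the description of $U(n)$. For $d(n)=\mathcal O(1)$ we get $\mathcal O(1)$ rounds, each of constant depth, so the compiled family lies in $\CRAS$.
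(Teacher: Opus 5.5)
Your proposal is correct and follows essentially the same route as the paper. Both decompose $U(n)$ into $d(n)$ layers of disjoint gates, keep one- and two-qubit gates unchanged, replace each fan-out by the Baeumer--Woerner dynamic gadget of Lemma~\ref{fact:BW} on disjoint ancilla blocks so that each layer costs $\mathcal O(1)$ rounds, and compose the exact per-layer channels; your explicit scheduling of the Pauli feedforward corrections before layer $t+1$ just spells out a detail the paper leaves implicit.
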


\begin{proof}
Without loss of generality, we can write 
\begin{align}
    U(n)=U_t(n)U_{t-1}(n)\cdots U_1(n),
\end{align}
with $t=d(n)$. Each layer $U_j(n)$ consists of pairwise disjoint elementary gates. Every one-qubit or two-qubit gate is left unchanged. Every fan-out gate is replaced by the corresponding dynamic gadget from Lemma~\ref{fact:BW}. Since the supports of the gates in one quantum circuit layer $U_j(n)$ are disjoint, all substituted gadgets in that layer can be run in parallel after allocating disjoint ancilla blocks.

Each quantum circuit layer $U_j(n)$ therefore expands to only $\mathcal{O}(1)$ adaptive rounds and $\mathcal{O}(1)$ intra-round quantum depth. The ancilla overhead remains polynomial because each gadget uses only linearly many ancillas in its arity and the total fan-out arity is at most the source circuit size. Finally, Lemma~\ref{fact:BW} states that each dynamic gadget is logically equivalent to the original fan-out or CNOT-ladder operation on the data qubits, so every layer $U_j(n)$ and its compiled version implement the same channel on the logical qubits. Composing over all layers proves the claim.
\end{proof}

\emph{Added Note:}
Throughout the appendices, all approximate coherent quantum gates are understood to be instantiated with sufficiently small inverse-polynomial error. More precisely, if a circuit invokes $M(n)=\operatorname{poly}(n)$ such primitives, we choose the clean-ancilla channel error of each invocation to be at most $\varepsilon(n)/M(n)$, where $\varepsilon(n)$ is an arbitrarily small inverse polynomial. A standard argument then bounds the error of the complete circuit by $\varepsilon(n)$, and hence bounds the total-variation distance between its output distribution and the ideal one by the same quantity. When an underlying construction is stated in terms of failure probability rather than error in terms of operator norm, it is amplified sufficiently strongly before being used coherently. These accuracy choices do not change the claimed asymptotic depth or polynomial-size bounds. In particular, whenever an ideal trial succeeds with probability $\Omega(1/n)$, we take $\varepsilon(n)=o(1/n)$, so that the implemented trial retains success probability $\Omega(1/n)$; the parallel repetition used below then yields the stated overall success probability.

\section{Computational Power Hierarchy of $\CRAS$}
\label{app:feedback-hierarchy}

In this section, we note that each such ancilla is measured at most once, and all intermediate outcomes available to the classical controller arise from these measurements. Meanwhile, we require the considered $\CRAS$ quantum circuit receives a quantum input.

\begin{proof}[Proof of Theorem~\ref{thm:hierarchy}]
The inclusion follows immediately from monotonicity in the number of
measurement outcomes. Set $T=m(n)$, take $T+1$ input qubits
$q_0,q_1,\ldots,q_T$, and define an observable $O_T=X_{q_0}\otimes X_{q_1}\otimes \cdots\otimes X_{q_T}$ and arbitrary input quantum state $\rho$, we consider to implement a parity function $F_T=\operatorname{Tr}[O_T\rho]$.

Measure $X_{q_i}$ in parallel for $i=1,\ldots,T$, obtaining
measurement results $s_i\in\{\pm1\}$, and let $s=\prod_{i=1}^Ts_i$.
Apply $Z_{q_0}$ if $s=-1$, then apply $H_{q_0}$ and perform the
final $Z$ measurement. The correction multiplies the effective
$X_{q_0}$ outcome by $s$, hence the expectation value of the final outcome is given by $\mathbb E[Z_{\mathrm{out}}]
=\operatorname{Tr}(O_T\rho).$ Thus $F_T$ is implemented exactly with $T=m(n)$ raw intermediate
outcomes and some universal constant coherent depth $d$.  Since
$m$ is admissible, this construction is uniform and has polynomial
size.

For an input observable $B$, define its Pauli weight
$\mathbf{w}(B)$ as the largest weight of a Pauli string having
a nonzero coefficient in the Pauli expansion of $B$. Let $\mathcal C$ be the CPTP map implemented by the entire CRAS
protocol after averaging over all intermediate measurement outcomes.
Since the final output qubit is measured in the $Z$ basis, define
the effective input observable by $A:=\mathcal C^\dagger(Z)$, such that $\mathbb E[Z_{\mathrm{out}}]
=\operatorname{Tr}\!\left[Z\mathcal C(\rho)\right]
=\operatorname{Tr}[A\rho].$ For each possible sequence of intermediate measurement outcomes, the
feedforward strategy determine a fixed sequence of quantum gates.
In the backward Heisenberg picture, nontrivial support can originate
only from the final one-qubit observable and from the $M$ selected
single-qubit measurement effects.  These give at most $M+1$ support
candidate qubits. A candidate qubit can reach at most $2^d$ input qubits through depth
$d$ of two-qubit gates, so every branch observable has Pauli weight
at most $(M+1)2^d$. Summing over transcripts, averaging over
classical randomness, and contracting the input-independent work
register cannot increase Pauli degree. Therefore
$\mathbf{w}(A)\le (M+1)2^d$.

Suppose now that the protocol approximates $F_T$ uniformly with
error $\varepsilon<1$.  Then
$\lVert A-O_T\rVert_\infty\le\varepsilon$.  If
$\mathbf{w}(A)<T+1$, the coefficient of the weight-$(T+1)$
Pauli string $O_T$ in $A$ is zero. Pauli orthogonality would then
give
\[
1
=
\left|
2^{-(T+1)}
\operatorname{Tr}\!\left[O_T(O_T-A)\right]
\right|
\le
\lVert O_T-A\rVert_\infty
\le
\varepsilon,
\]
which is impossible.  Hence
$\mathbf{w}(A)\ge T+1$, and consequently the relationship $(M+1)2^d\ge T+1$ holds, equivalently, 
\begin{align}
    M\ge 2^{-d}(T+1)-1=\Omega(T)
\end{align}
when the circuit depth $d$ is a constant.
\end{proof}

\section{Proof Sketch of Theorem~\ref{thm:main-maintext}}
\label{proofsketchtheorem2}

The construction can be viewed as an adaptive shallow compression of the
algebraic core of Shor's discrete-logarithm algorithm~\cite{Shor1994,Shor1997}. Technically, the shallow implementation uses the constant-depth power of
unbounded fan-out. Fan-out circuits support arithmetic and Fourier-transform
primitives, including threshold-type operations and approximate quantum Fourier
transforms for arbitrary moduli~\cite{HoyerSpalek2005,BrowneKashefiPerdrix2011,Rosenbaum2013}.
Recent adaptive quantum circuit constructions show that fan-out and related long-range
CNOT operations can be implemented in constant depth using mid-circuit
measurements and classical feedforward~\cite{BaeumerPRXQ2024,BaeumerPRL2024,BaeumerWoerner2025}.
We combine these ingredients with an explicit arithmetic compilation tailored
to the discrete-logarithm reduction.

\begin{theorem}[Formal version of Theorem~\ref{thm:main-maintext}]
Assume that the DLP is not in $\mathsf{BPP}$ in the worst case. 
Then there exists an explicit family of constant-round
adaptive shallow circuits $\{C_{p,g,y}\}_{(p,g,y)\in\mathcal I}\subseteq \CRAS$ such that no randomized
classical polynomial-time algorithm can, on every valid input $(p,g,y)$,
estimate $\mu_{p,g,y}
={\rm Tr}[Z_{{\rm out}}C_{p,q,y}(|0^n\rangle\langle 0^n|)]$ to additive error $1/100$ with success probability at least $2/3$.
\end{theorem}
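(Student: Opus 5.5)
The proof is a reduction. I will build an explicit $\CRAS$ family $C_{p,g,y}$ whose output bias $\mu_{p,g,y}$ is separated by a constant gap on the two sides of the promise problem $\mathrm{DLP}_{1/16}(p,g)$. A classical estimator of $\mu$ to additive error $1/100$ would then decide $\mathrm{DLP}_{1/16}$ with probability $2/3$. Standard majority amplification and Lemma~\ref{lemma:DLPReduction}, with $\delta=1/16$, would turn this into a $\mathsf{BPP}$ algorithm for full DLP, contradicting the assumption. So everything rests on constructing the circuit and computing the gap.

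\textbf{Construction.} Set $k=n-3$, so $2^k\le (p-1)/4$ roughly and $2^k\ge p/16$, and let $x_0=(p+1)/2$.

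The first stage prepares $\ket{I^{(k)}_x}$ without knowing $x$, working from $y$. Start from a uniform superposition over $i\in\{0,\dots,2^k-1\}$, which needs only Hadamards. Coherently compute $y\cdot g^{i}\bmod p=g^{x+i}\bmod p$ into a second register. The exponentiation $g^i=\prod_j (g^{2^j})^{i_j}$ is an iterated product of $k$ classically precomputed residues selected by the bits of $i$. By Lemma~\ref{fact:HS}(i), iterated multiplication and division with remainder are in $\BQNCf$, and by Lemma~\ref{lem:compile} they compile into $\CRAS$ with $\mathcal O(1)$ rounds.

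To turn the register $\sum_i\ket{i}\ket{g^{x+i}}$ into a state indexed by exponents, I would not invert the discrete log. Instead I would compare the two states through the group elements directly. Since $i\mapsto g^{x+i}$ is injective on $i<2^k<r$, the reduced state on the second register is exactly $\ket{J_x}=2^{-k/2}\sum_i\ket{g^{x+i}}$, once the $i$ register is uncomputed or measured in the Fourier basis with a phase correction. Its overlap with $\ket{J_{x_0}}$ built from $g^{x_0}$ (computable classically) equals $|\langle I_x|I_{x_0}\rangle|^2$.

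Uncomputing $i$ from $g^{x+i}$ is itself a DLP, so I would instead measure the $i$ register in the $X$ basis. Each outcome leaves $2^{-k/2}\sum_i(-1)^{s\cdot i}\ket{g^{x+i}}$, and the sign $s\cdot i$ is a phase parity. This phase is correctable only in the overlap comparison: the two copies' outcomes $s,s'$ differ, and the overlap becomes $|2^{-k}\sum_{i,i'}(-1)^{s\cdot i+s'\cdot i'}[x+i=x_0+i']|^2$. This is the delicate step. I would try the following remedy: the classical controller, knowing $s$, applies a controlled phase $(-1)^{s\cdot i}$ recomputed from the exponent. The controller does not know the exponent, so I expect to keep the $i$ register unmeasured and perform the swap test with both index registers entangled. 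In that case the acceptance bias becomes $\frac{1}{2}(1+|\langle\cdot\rangle|^2)$ with the intersection count $N=|[x,x+2^k)\cap[x_0,x_0+2^k)|$ appearing.

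\textbf{Gap computation.} For $x\le (p-1)/16$ the two intervals are disjoint, so the overlap is $0$ and $\mu=0$ for the ancilla $Z$. For $x\in[(p-1)/2+1,(p-1)/2+(p-1)/16]$ the shift is $|x-x_0|\le (p-1)/16\le 2^k/2$, so $N/2^k\ge 1/2$, the overlap is at least $1/4$, and $\mu\ge 1/4$. The separation $1/4\gg 2/100$ suffices.

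The swap test itself is a Hadamard on the designated ancilla, $\mathcal O(n)$ controlled-SWAPs realized as a fan-out of the control, and a final Hadamard. This is constant depth by Lemma~\ref{fact:BW}, with $q_{\rm out}$ the ancilla. Error from approximate primitives is handled as in the Added Note, contributing $o(1)$ to $\mu$.

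\textbf{Main obstacle.} The hard step is obtaining a clean superposition over group elements whose overlap faithfully reflects interval intersection, without garbage registers that are entangled with $i$ and destroy coherence. I would first try computing $g^{x+i}$ reversibly, then XOR-cleaning the intermediate products of the iterated multiplication by running the multiplier backwards. The pair $(i,g^{x+i})$ still remains, so I would swap-test the joint register $\ket{i}\ket{g^{x+i}}$ against $\ket{i'}\ket{g^{x_0+i'}}$ after mapping both to a canonical label $g^{x+i}$ alone.

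If that fails, the fallback is to encode exponents directly. For the second copy, $x_0$ is known classically, so its state is easy to prepare. For the first copy, I would apply the Shor-type QFT$_r$ of Lemma~\ref{fact:HS}(iii) to convert the coset state into an exponent-register state with phases $e^{2\pi i x\ell/r}$, and compare in the Fourier domain. There the overlap again reduces to a geometric sum of magnitude about $N/2^k$.
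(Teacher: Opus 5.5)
\textbf{There is a genuine gap.} Your outer reduction is fine: an additive-$1/100$ estimator for $\mu$ decides $\mathrm{DLP}_{1/16}$, and Lemma~\ref{lemma:DLPReduction} lifts this to full DLP. Your interval arithmetic for $K_0$ is also fine. The gap is the core step, preparing a state whose overlap with a fixed reference equals $N/2^k$. You never achieve it, and the route you pursue cannot achieve it.

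Computing $y g^i$ leaves $\ket{\psi_x}=2^{-k/2}\sum_{i<2^k}\ket{i}\ket{g^{x+i}}$. Erasing $i$ from $g^{x+i}$ requires a coherent discrete-logarithm subroutine, which is exactly what you do not have. Your ``canonical label'' remedy runs into this circularity. Knowing $g^{x_0}$ classically does not let you prepare the reference $2^{-k/2}\sum_i\ket{g^{x_0+i}}$ either, for the same reason.

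If you keep the index registers, there is no constant gap:
\begin{itemize}
\item A swap test on the joint registers gives $\abs{\langle\psi_x|\psi_{x_0}\rangle}^2=\mathbf{1}[x=x_0]$.
\item A swap test on the group-element registers alone sees diagonal mixed states, uniform on $\{g^{x+i}\}$ and $\{g^{x_0+i}\}$. It outputs $\langle Z\rangle=\operatorname{Tr}[\rho_x\rho_{x_0}]=N/2^{2k}\le 2^{-k}$.
\item Measuring the index in the $X$ basis changes nothing. The outcomes $s,s'$ are uniform, and the signs $(-1)^{s\cdot i+s'\cdot i'}$ cannot be removed without knowing $i$. The squared overlap averaged over outcomes is again $N/2^{2k}$.
\end{itemize}
So $\mu$ is exponentially small in both promise cases.

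Your QFT fallback does not repair this as stated. Applying ${\rm QFT}_r\otimes{\rm QFT}_r$ to the coset state gives $r^{-1/2}\sum_c\omega_r^{cs}\ket{c}\ket{-xc}$. That is an entangled two-register state, not a single register carrying phases $\omega_r^{x\ell}$.

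The missing idea, which is how the paper proceeds, is to \emph{measure} $c$. Concretely:
\begin{enumerate}
\item Prepare $\frac1r\sum_{a,b}\ket{a}\ket{b}\ket{g^ay^{-b}}$ with the constant-depth block $W_{p,g,y}$ and measure its last register.
\item Apply ${\rm QFT}_r\otimes{\rm QFT}_r$ and measure the first register to get $c$. The second register then collapses to a single computational-basis state $\ket{d}=\ket{-xc\bmod r}$ (Lemma~\ref{lem:singletrial}).
\item When $\gcd(c,r)=1$, the controller computes $c^{-1}$ classically. This happens with probability $\varphi(r)/r=\Omega(1/\log r)$ (Lemma~\ref{lem:totient}), boosted by $\mathcal O(n\log n)$ parallel trials.
\item The coherent modular gate $\ket{d}\ket{i}\mapsto\ket{d}\ket{i-c^{-1}d}=\ket{d}\ket{i+x}$ turns $2^{-k/2}\sum_{i<2^k}\ket{i}$ into $\ket{d}\ket{I^{(k)}_x}$.
\end{enumerate}
Because $d$ is a basis state, the leftover register is in product with $\ket{I^{(k)}_x}$, so no coherence is lost; this is exactly what fails in your construction. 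The reference $\ket{I^{(k)}_{x_0}}$ is trivial to prepare since $x_0$ is a classical constant. Your swap test and gap computation then go through as written.
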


The proof proceeds by constructing an explicit family of adaptive shallow circuits $C_z\in\CRAS$ whose fixed local readout encodes enough information to recover the discrete logarithm of the input instance $z=(p,g,y)$. At a high level, the construction has three steps. First, we implement, in constant depth with mid-circuit measurements and classical feedforward, the algebraic core of Shor's discrete-logarithm algorithm. Second, the resulting measurement information is converted into a geometric overlap parameter. Third, that parameter is encoded into the quantum mean value of a single output qubit, as illustrated in Fig.~\ref{fig:dlp}. Consequently, if one could classically estimate this one-qubit expectation value to constant additive error on every valid input, then one could solve the prime-field discrete logarithm problem in classical polynomial time.

\begin{figure}[t]
    \centering
    \includegraphics[width=0.9\linewidth]{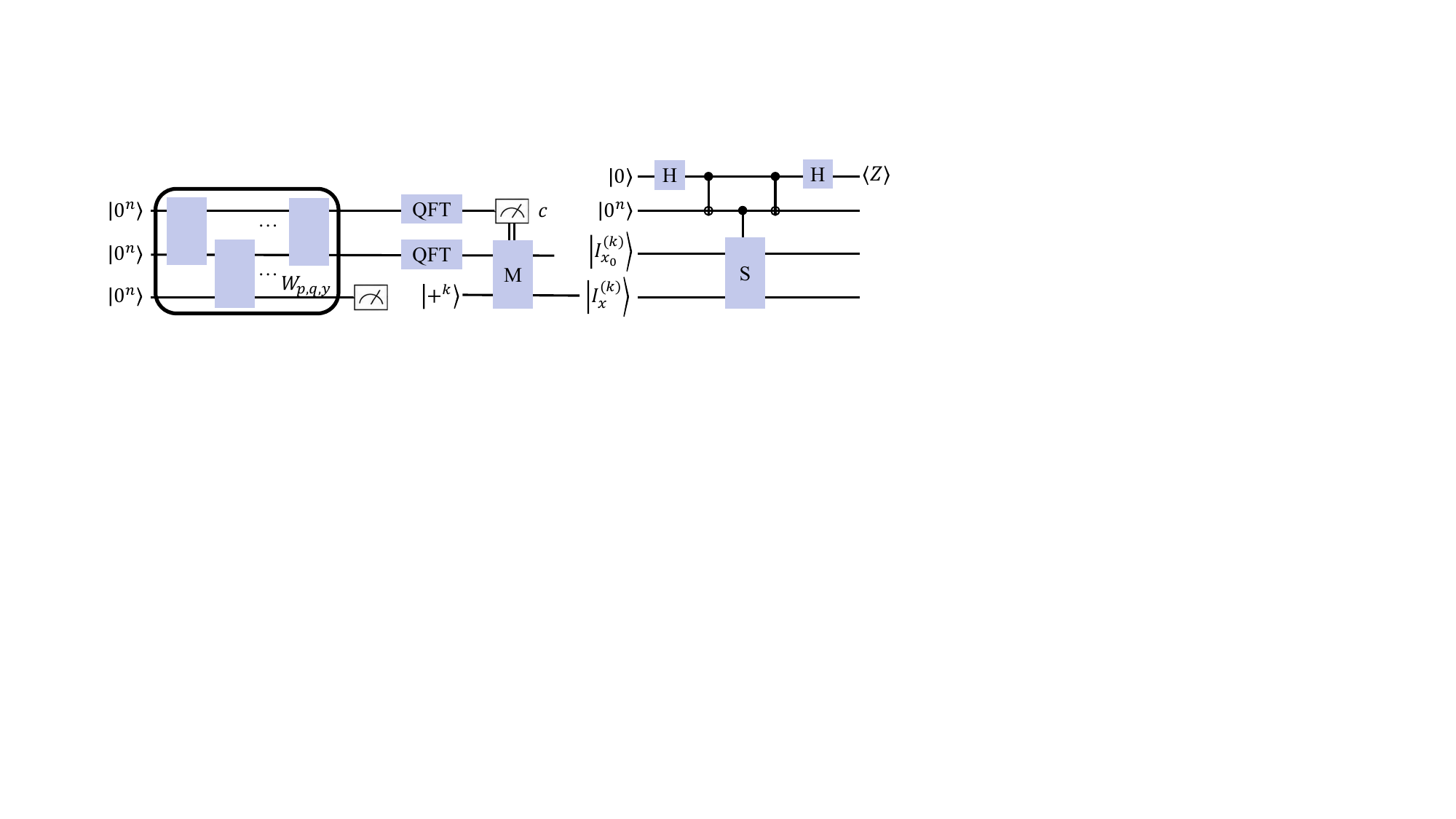}
    \caption{The figure depicts a high-level quantum circuit diagram for encoding the $\rm DLP$ into the $\CRAS$ model. The model consists of a constant number of quantum rounds, each with $\mathcal{O}(1)$ coherent depth, followed by mid-circuit measurements and polynomial-time classical feedforward. The illustrated $\CRAS$ construction starts from a quantum circuit $W_{p,g,y}$ that generates the quantum state $|\phi\rangle$ in Eq.~\eqref{Eq:phistate}, followed by measuring the last register. Then applying $\rm QFT$ circuits and measure the first register. The $\rm M$ represents the multiplication gate such that $|d\rangle|i\rangle\mapsto |d\rangle|i-c^{-1}d\rangle$. Finally the quantum circuit ends up with the generalized swap-test.}
    \label{fig:dlp}
\end{figure}

Our shallow quantum subroutine follows quantum circuit construction underlying Shor's discrete-logarithm algorithm~\cite{Shor1994}. It prepares uniform superpositions over two exponent registers and coherently computes $g^a y^{-b}\pmod p$ for $a,b\in\Z_r$, thereby preparing the state
\begin{align}
   |\phi\rangle
   =\frac{1}{r}\sum_{a,b\in\Z_r}\ket{a}\ket{b}\ket{g^ay^{-b}}
   =\frac{1}{r}\sum_{a,b\in\Z_r}\ket{a}\ket{b}\ket{g^{a-xb}}.
   \label{Eq:phistate}
\end{align}
Equivalently, we construct a quantum circuit $W_{p,g,y}$ that implements the map $|a\rangle|b\rangle|0^n\rangle \mapsto |a\rangle|b\rangle|g^a y^{-b}\bmod p\rangle$. The construction of $W_{p,g,y}$ is the main technical ingredient of the proof. We show that this map can be compiled into a constant-depth adaptive quantum circuit. At a high level, it involves only addition, iterated multiplication, and division with remainder on $\mathcal O(\log p)$-bit integers. These arithmetic tasks can be reduced to constant-depth circuits built from the quantum threshold gate
\[
{\rm TH}_{m,\tau}:\ket{x_1,\ldots,x_m}\ket{b}\mapsto
\ket{x_1,\ldots,x_m}\ket{b\oplus \ind{\sum_i x_i\ge \tau}},
\]
where $m=\poly(n)$ and $\tau\in\mathbb Z$. 

The threshold gate is the key parallel primitive because the main bottleneck in multiplication is carry resolution. Indeed, multiplying two $n$-bit integers first generates $n$ shifted partial-product rows, and the difficulty is then to determine, for each output position, how many lower-order contributions overflow into that position. Without threshold gates, this information must be propagated through a cascade or tree of adders, which leads to $\mathcal O(\log n)$ circuit depth. Threshold gates remove this sequential dependence by testing, in parallel, whether the total contribution to a given position exceeds the relevant carry threshold. In this way, carries for many bit positions can be determined simultaneously, which is the essential mechanism behind constant-depth multiplication in the fan-out model. By Ref.~\cite{HoyerSpalek2005}, threshold gates admit constant-depth constructions in the unbounded-fan-out model. Recent work further shows that unbounded fan-out can be implemented exactly in constant depth using mid-circuit measurements and classical feedforward~\cite{BaeumerWoerner2025}. Combining these ingredients yields the required constant-depth adaptive implementation of $W_{p,g,y}$. We leave technical details to Appendix~\ref{App:Multiplier}.

Next, we measure the third register of $|\phi\rangle$. If the outcome is $g^s$, with $s\equiv a-xb\pmod r$, then the first two registers collapse to the coset state
\begin{align}
    \ket{L_s}=\frac{1}{\sqrt r}\sum_{b\in\Z_r}\ket{s+xb}\ket{b}.
\end{align}
Applying quantum Fourier transformation ${\rm QFT}_r\otimes{\rm QFT}_r$ to this state and only measuring the first register yields $c$, and the remaining quantum state is denoted by $|d\rangle=|-xc \pmod r\rangle$. Whenever $c$ is invertible modulo $r$, one can obtain $c^{-1}\pmod r$. This step is again compatible with constant depth. Approximate quantum Fourier transforms for arbitrary moduli admit constant-depth implementations in the fan-out model~\cite{HoyerSpalek2005}, and fan-out gates themselves can be compiled into dynamic constant-depth circuits using measurement-based constructions~\cite{BaeumerWoerner2025}. Although a single run succeeds only with inverse-logarithmic probability, $\mathcal O(n\log n)$ independent runs can be performed in parallel on disjoint qubit blocks.

Let $k=n-3$ and $L=2^k$, we then prepare a quantum state 
\begin{align}
    |d\rangle|I_0\rangle=\frac{1}{\sqrt{L}}\sum\limits_{i=0}^{L-1}|d\rangle|i\rangle.
\end{align}
Applying the multiply gate $\rm M$, one achieves the map $|d\rangle|i\rangle\mapsto |d\rangle|i-c^{-1}d\rangle$, that is 
\begin{align}
    |d\rangle|I_0\rangle\mapsto |d\rangle|I^{(k)}_x\rangle=|d\rangle\frac{1}{\sqrt{L}}\sum\limits_{i=0}^{L-1}|x+i~{\rm mod}~r\rangle.
\end{align}
Let $x_0=(p+1)/2$, we prepare a quantum state $|I^{(k)}_{x_0}\rangle$, and finally utilize the swap-test to compute their overlap. Here, we note that the swap-test can be efficiently implemented by using Bell measurement. 

For $1\le k\le n-1$ and $u\in \Z_r$, define the cyclic interval
\[
I_u^{(k)}:=\{u,u+1,\ldots,u+2^k-1\}\subseteq \Z_r,
\]
and we observe the quantum state overlap
\begin{align}
K_0(x,x_0)=\abs{\langle I_x^{(k)}|I^{(k)}_{x_0}\rangle}^2=\frac{\bigl|I_x^{(k)}\cap I_{x_0}^{(k)}\bigr|^2}{2^{2k}}.
\end{align}
When $k=n-3$, then $I_x^{(n-3)}=\{x,x+1,\cdots,x+2^{n-3}-1\}$. Since $2^{n-1}<p\leq 2^n$, the interval $I_x^{(n-3)}$ has length $2^{n-3}$, which is a constant fraction of the exponent circle $\mathbb Z_r$. More precisely, its length has the order $\mathcal O(p/8)$. Similarly, $I_{x_0}^{(n-3)}$ is a fixed directed interval starting at $x_0=(p+1)/2$. This overlap quantity separates two promised regions of the exponent space. If $x\in [1,(p-1)/16]$, these two intervals do not overlap, and hence $K_0(x,x_0)=0$. If $x\in[(p-1)/2+1,(p-1)/2+(p-1)/16]$, intersection of the corresponding intervals is at least $p/16$, resulting in $K_0(x,x_0)\geq 1/16$. As a consequence, additive approximation of $K_0(x,x_0)$ to constant error already solves a promise version of prime-field discrete logarithm, determining whether $x\in [1,(p-1)/16]$ or $x\in[(p-1)/2+1,(p-1)/2+(p-1)/16]$ which is polynomial-time equivalent to full prime-field discrete logarithm~\cite{blum2019generate,Liu2021}. We give the rigorous proof in Appendix~\ref{App:proof}.

\section{Shallow-depth adaptive quantum circuit for DLP}
\label{App:proof}
In the Appendix~\ref{sec:Wpgy-composition}, we will theoretically demonstrate how to compile the quantum circuit $W_{p.g.y}$ into a constant-depth quantum circuit. Here, we first accept this result and assume we have obtained the quantum state $|\phi\rangle=\frac{1}{r}\sum_{a,b\in\Z_r}\ket{a}\ket{b}\ket{g^{a-xb}}$, and delay technical details to Appendix~\ref{sec:Wpgy-composition}.

\subsection{Compute bit-string $c$ by QFT}

\begin{lemma}
\label{lem:singletrial}
Let $x=\log_g y\in\Z_r$. Consider a quantum process as follows:
\begin{enumerate}[(i)]
    \item Prepare the uniform superposition over $(a,b)\in(\Z_r\times \Z_r)$;
    \item Coherently compute $g^a y^{-b}\bmod p$ via quantum process $W_{p,g,y}$, and measure the last register in the computational basis;
    \item Apply ${\rm QFT}_r\otimes {\rm QFT}_r$ to the first two registers;
    \item Measure the first register in the computational basis.
\end{enumerate}
Then the output classical bit-string $c\in\Z_r$ and the remaining quantum state $|d\rangle$ satisfy the condition $\{(c,d): d\equiv -xc \pmod r\}$. Moreover, the probability on $\gcd(c,r)=1$ is exactly $\varphi(r)/r$, where $\varphi(r)$ denotes the Euler function.
\end{lemma}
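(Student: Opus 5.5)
Working with exact QFTs and an ideal $W_{p,g,y}$ (the inverse-polynomial errors are handled by the Added Note), I would follow the state through steps (i)--(iv) and compute it exactly. Steps (i)--(ii) give $|\phi\rangle$ in Eq.~\eqref{Eq:phistate}. Because $g$ generates $\Z_p^\ast$, the map $s\mapsto g^s$ is a bijection from $\Z_r$ onto $\Z_p^\ast$. Measuring the third register therefore returns $g^s$ for a uniformly random $s\in\Z_r$: each fibre $\{(a,b):a-xb\equiv s\}$ has exactly $r$ elements. The first two registers collapse to the coset state $\ket{L_s}=r^{-1/2}\sum_{b}\ket{s+xb}\ket{b}$.

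For step (iii) I would use $\mathrm{QFT}_r\ket{u}=r^{-1/2}\sum_{c}\omega^{uc}\ket{c}$ with $\omega=e^{2\pi i/r}$. This gives
\[
(\mathrm{QFT}_r\otimes\mathrm{QFT}_r)\ket{L_s}=r^{-3/2}\sum_{c,e}\omega^{sc}\Big(\sum_{b}\omega^{b(xc+e)}\Big)\ket{c}\ket{e}.
\]
The inner geometric sum over $b\in\Z_r$ equals $r\,\ind{e\equiv -xc \pmod r}$. The state therefore becomes $r^{-1/2}\sum_{c}\omega^{sc}\ket{c}\ket{-xc \bmod r}$. It is supported exactly on pairs $(c,d)$ with $d\equiv -xc$, and each $c$ has amplitude of modulus $r^{-1/2}$.

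In step (iv), measuring the first register then yields each $c\in\Z_r$ with probability exactly $1/r$, independently of $s$. The second register is left in the basis state $\ket{d}=\ket{-xc \bmod r}$. So $\Pr[\gcd(c,r)=1]=|\{c\in\Z_r:\gcd(c,r)=1\}|/r=\varphi(r)/r$, by the definition of Euler's function. Because $c$ is uniform conditional on every value of $s$, averaging over $s$ changes nothing.

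The only real obstacle is bookkeeping. One must make sure the QFT sign convention on the second register gives $d\equiv -xc$ rather than $+xc$; with the convention above it is $-xc$, as claimed, and the opposite convention flips the sign harmlessly. One must also check that the first measurement does not bias $c$, which follows from the equal-magnitude amplitudes. I would add the remark that $\varphi(r)/r=\Omega(1/\log\log r)$ justifies the parallel repetition used afterwards, but that remark is not needed for the lemma itself.
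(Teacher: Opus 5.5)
Your proposal is correct and follows the same route as the paper's proof. Both arguments collapse to the coset state $\ket{L_s}$, apply $\mathrm{QFT}_r\otimes\mathrm{QFT}_r$, evaluate the geometric sum over $b$ to get support exactly on $d\equiv -xc \pmod r$ with equal-magnitude amplitudes, and read off $\varphi(r)/r$ from the resulting uniform marginal of $c$. Your extra remarks are also correct but not needed for the lemma: the third-register outcome is uniform and averaging over $s$ is harmless, and $\varphi(r)/r=\Omega(1/\log\log r)$, which is sharper than the paper's Lemma~\ref{lem:totient}.
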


\begin{proof}
Apply the $W_{p,g,y}$ quantum circuit, we prepare the quantum state
\[
|\phi\rangle=\frac{1}{r}\sum_{a,b\in\Z_r}\ket{a}\ket{b}\ket{g^{a-xb}}.
\]
For the sake of analysis, imagine measuring the third register first. If the outcome is $g^z$, then the first two registers collapse to the uniform superposition over the affine line
\[
L_z:=\{(a,b)\in\Z_r^2: a-xb\equiv z\pmod r\},
\]
and the resulting quantum state is given by
\[
\ket{L_z}=\frac{1}{\sqrt r}\sum_{b\in\Z_r}\ket{z+xb}\ket{b}.
\]
Now apply $\rm QFT_r\otimes QFT_r$. Using the fact 
\[
{\rm QFT}_r\ket{u}
=
\frac{1}{\sqrt r}\sum_{c\in\mathbb Z_r}\omega_r^{cu}\ket{c},
\qquad
\omega_r=e^{2\pi i/r},
\]
we obtain
\begin{align}
({\rm QFT}_r\otimes {\rm QFT}_r)\ket{L_z}
&=
\frac{1}{\sqrt r}\sum_{b\in\mathbb Z_r}
\left(
\frac{1}{\sqrt r}\sum_{c\in\mathbb Z_r}\omega_r^{c(z+xb)}\ket{c}
\right)
\otimes
\left(
\frac{1}{\sqrt r}\sum_{d\in\mathbb Z_r}\omega_r^{db}\ket{d}
\right) \\
&=
\frac{1}{r\sqrt r}
\sum_{c,d\in\mathbb Z_r}
\omega_r^{cz}
\left(
\sum_{b\in\mathbb Z_r}\omega_r^{(cx+d)b}
\right)
\ket{c}\ket{d}.
\end{align}

The amplitude of an output pair $(c,d)$ is thus given by
\begin{align}
\frac{1}{r\sqrt r}\sum_{b\in\Z_r}\omega_r^{c(z+xb)+db}
=
\frac{\omega_r^{cz}}{r\sqrt r}\sum_{b\in\Z_r}\omega_r^{(cx+d)b}.
\end{align}
The final sum is a complete geometric series. It equals $r$ if $cx+d\equiv 0\pmod r$ and vanishes otherwise. Hence the output distribution is supported exactly on the line
\begin{align}
    d\equiv -xc\pmod r.
\end{align}
For each valid pair the amplitude magnitude is $1/\sqrt r$, so the valid pairs are equiprobable and the marginal distribution of $c$ is uniform on $\Z_r$.

If $\gcd(c,r)=1$, then $c$ has a unique inverse modulo $r$, denoted $c^{-1}$.
Therefore the success probability of one ideal trial is exactly the fraction of invertible residues modulo $r$, namely $\varphi(r)/r$.
\end{proof}

\begin{lemma}
\label{lem:totient}
There exists an absolute constant $\alpha>0$ such that for every integer $r\ge 3$, the relationship
\begin{align}
    \frac{\varphi(r)}{r}\ge \frac{\alpha}{\log r}
\end{align}
holds.
%Consequently, if $n=|\langle p,g,y\rangle|$, then for every valid instance $(p,g,y)$ with sufficiently large $n$,
%\[
%\frac{\varphi(p-1)}{p-1}\ge \frac{\beta}{n}.
%\]
\end{lemma}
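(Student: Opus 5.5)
We prove the bound through the product formula $\varphi(r)/r=\prod_{q\mid r}(1-1/q)$, taken over the distinct primes $q$ dividing $r$. Let $\omega(r)$ denote the number of such primes. The product is smallest when the prime divisors are as small as possible. Since every prime satisfies $q\ge 2$, we have $r\ge\prod_{q\mid r}q$. Replacing the $j$-th smallest prime divisor of $r$ by the $j$-th prime $p_j$ can only decrease each factor $1-1/q$, because $p_j$ is no larger than that divisor. This gives
\[
\frac{\varphi(r)}{r}\;\ge\;\prod_{j=1}^{\omega(r)}\Bigl(1-\frac{1}{p_j}\Bigr).
\]
It also gives $P_{\omega(r)}:=p_1p_2\cdots p_{\omega(r)}\le r$, since the primorial is at most the product of the distinct prime divisors of $r$, which is at most $r$.

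Next we bound the right-hand side using Mertens' third theorem, $\prod_{p\le t}(1-1/p)\sim e^{-\gamma}/\log t$. The consequence we need is weaker: there is an absolute constant $c_1>0$ with $\prod_{p\le t}(1-1/p)\ge c_1/\log t$ for all $t\ge 2$. This follows from Mertens' second theorem, $\sum_{p\le t}1/p=\log\log t+O(1)$, combined with $\log(1-u)\ge -u-u^2$ for $u\le 1/2$ and the convergence of $\sum_p 1/p^2$. We apply it with $t=p_{\omega(r)}$, which is valid once $\omega(r)\ge 1$; this holds because $r\ge 3$.

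It remains to show $\log p_k\le C\log\log P_k$ up to constants. By Chebyshev's bound $\theta(t)=\sum_{p\le t}\log p\ge c_2 t$ for $t\ge 2$, we get $\log P_k=\theta(p_k)\ge c_2p_k$. Hence $p_k\le c_2^{-1}\log P_k\le c_2^{-1}\log r$, and so $\log p_k\le \log\log r+O(1)$. Combining the steps,
\[
\frac{\varphi(r)}{r}\ge\frac{c_1}{\log\log r+O(1)}.
\]
For $r\ge 3$ this is at least $\alpha/\log r$ for a suitable absolute $\alpha$, since $\log\log r+O(1)\le C'\log r$ on $r\ge 3$. If $\log\log r$ is negative or small for small $r$, those finitely many $r$ can be absorbed into $\alpha$ by direct inspection, because $\varphi(r)/r>0$ for each of them.

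The only nontrivial ingredient is the Mertens/Chebyshev lower bound, and we will cite it as standard. An elementary alternative avoids Mertens altogether. Since $1-1/q\ge 1/2$ for each prime, $\varphi(r)/r\ge 2^{-\omega(r)}$. The bound $\omega(r)\le \log_2 r$ then only yields $1/r$, which is too weak. A sharper count fixes this: at most $\log_2 r/\log_2 B$ prime divisors of $r$ can exceed $B$. Each such divisor contributes a factor at least $(1-1/B)$. Taking $B=\log r$, the large primes together contribute a factor $\ge (1-1/\log r)^{\log r}\ge 1/4$. The primes below $\log r$ contribute a factor $\ge\prod_{m=2}^{\log r}(1-1/m)=1/\lfloor\log r\rfloor$. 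This telescoping product gives $\varphi(r)/r\ge 1/(4\log r)$ directly, so $\alpha=1/4$ works for large $r$, and the remaining small $r$ are handled by a finite check. We will present this elementary version as the main proof, since it gives exactly the stated $\alpha/\log r$ bound, which is all that is needed downstream.
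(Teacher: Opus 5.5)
Your proposal is correct, and the version you designate as the main proof takes a genuinely different route from the paper's.

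The paper enlarges the product to all primes, $\prod_{q\mid r}(1-1/q)\ge\prod_{q\le r}(1-1/q)$, and cites Mertens' third theorem. That gives exactly the order $1/\log r$ with an inexplicit $\alpha$; small $r$ are harmless since the ratio is positive.

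Your elementary argument avoids analytic number theory entirely. You split the prime divisors at $B=\log r$. Fewer than $\log r/\log\log r\le\log r$ of them exceed $B$; this uses $\log\log r\ge 1$, i.e.\ $r\ge 16$ with natural logarithms, which your finite check covers. Their factors therefore multiply to at least $(1-1/\log r)^{\log r}\ge 1/4$. The divisors below $B$ are dominated by the telescoping product $\prod_{m=2}^{\lfloor\log r\rfloor}(1-1/m)=1/\lfloor\log r\rfloor$. This buys a self-contained proof with an explicit constant. In fact $\alpha=1/4$ works for every $r\ge 3$, since $\varphi(r)/r\ge 1/3>1/(4\log 3)$ for $3\le r\le 15$.

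Your alternative Mertens--Chebyshev route is also sound. It compares with the first $\omega(r)$ primes rather than all primes up to $r$, and so proves the stronger classical bound $\varphi(r)/r\gg 1/\log\log r$. For $r=p-1$, this improves the single-trial success probability in Lemma~\ref{lem:singletrial} from $\Omega(1/n)$ to $\Omega(1/\log n)$, matching the ``inverse-logarithmic'' wording in the main text. It would also reduce the parallel repetition count. Neither the paper's cruder comparison nor your elementary split delivers this.
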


\begin{proof}
Euler's product formula gives
\begin{align}
    \frac{\varphi(r)}{r}=\prod_{q\mid r}\left(1-\frac{1}{q}\right),
\end{align}
where the product ranges over the distinct prime divisors of $r$. Since each factor lies in $(0,1)$, multiplying over the larger set of all primes $q\le r$ can only decrease the product, resulting in
\[
\frac{\varphi(r)}{r}\ge \prod_{q\le r}\left(1-\frac{1}{q}\right).
\]
By Mertens' theorem~\cite{RosserSchoenfeld1962}, the right-hand side is $\Theta(1/\log r)$ as $r\to\infty$. This proves the claim for some absolute constant $\alpha>0$.
\end{proof}

\subsection{Classical Hardness Result of local observable expectation}

\begin{lemma}[\cite{Liu2021}]
\label{fact:Liu}
Fix a prime $p$, a generator $g\in\Z_p^\ast$ and an integer $1\le k\le n-1$, where $n=\lceil\log_2 p\rceil$. When $k=n-3$ and $x_0={(p+1)/2}$, additive $0.01$ approximation of $K_0^{(p,g,k)}(x,x_0)$ solves $\mathrm{DLP}_{1/16}(p,g)$.
\end{lemma}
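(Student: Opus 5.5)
The plan is a direct computation of the overlap $K_0(x,x_0)=\bigl|I_x^{(k)}\cap I_{x_0}^{(k)}\bigr|^2/2^{2k}$ in the two promise cases of $\mathrm{DLP}_{1/16}(p,g)$, with $k=n-3$ and $x_0=(p+1)/2$. I will show that it equals $0$ in one case and is at least $1/16$ in the other. An estimate $\widehat K$ with $|\widehat K-K_0|\le 0.01$ then decides the promise problem: answer ``second case'' if and only if $\widehat K\ge 1/32$. This works because $0.01<1/32$ and $1/16-0.01>1/32$.

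The key size facts come from $2^{n-1}<p\le 2^n$, which gives $p/8\le 2^{n-3}<p/4$. In particular $2^k<r$, so each $I_u^{(k)}$ consists of $2^k$ distinct residues. Hence $|I_u^{(k)}\rangle$ is a correctly normalized uniform superposition, and the overlap formula $|\langle I_x|I_{x_0}\rangle|^2=|I_x\cap I_{x_0}|^2/2^{2k}$ holds.

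The step I expect to need the most care is ruling out wrap-around modulo $r$, since a cyclic interval that wraps could create unexpected intersections. I handle it explicitly in each case by checking that the largest element stays below $r=p-1$.

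\emph{Case 1:} $1\le x\le (p-1)/16$. The largest element of $I_x$ is
\[
x+2^{n-3}-1<\frac{p-1}{16}+\frac p4<\frac{p+1}{2}=x_0,
\]
so $I_x$ does not wrap and lies strictly below $x_0$. The interval $I_{x_0}$ ends at $x_0+2^{n-3}-1<(p+1)/2+p/4-1<p-1$, so it does not wrap either. Both are therefore ordinary integer intervals in $[0,r)$ that are disjoint, and $K_0=0$.

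\emph{Case 2:} $x=(p-1)/2+t$ with $1\le t\le (p-1)/16$. Then $x-x_0=t-1\in[0,(p-1)/16)$. The end of $I_x$ is at most
\[
\frac{p-1}{2}+\frac{p-1}{16}+\frac p4<p-1,
\]
so again there is no wrap. The intersection is then the integer interval $[x,\,x_0+2^{n-3}-1]$, of size
\[
2^{n-3}-(t-1)\ \ge\ \frac p8-\frac{p-1}{16}\ \ge\ \frac p{16}.
\]
Consequently
\[
K_0\ \ge\ \frac{(p/16)^2}{2^{2(n-3)}}\ >\ \frac{(p/16)^2}{(p/4)^2}=\frac1{16}.
\]

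This gives the claimed separation between the two cases. Combined with Lemma~\ref{lemma:DLPReduction} at $\delta=1/16$, it also upgrades a constant-accuracy estimator for $K_0$ into a solver for the full DLP.
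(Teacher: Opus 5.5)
Your proposal is correct and follows essentially the same route as the paper, whose proof sketch in Appendix~\ref{proofsketchtheorem2} (the lemma itself is only cited) argues that $K_0(x,x_0)=0$ in the first promise region and $K_0(x,x_0)\ge 1/16$ in the second, so a constant-accuracy estimate separates them. Your additional checks make rigorous what that sketch only asserts: that $2^k<r$, so the states are correctly normalized, and that neither interval wraps around modulo $r$.
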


\begin{proposition}
\label{prop:observable}
For every fixed constant $c>0$, there exists an explicit adaptive shallow-depth quantum circuit family $C_{p,g,y}^{(c)}\in\CRAS$
and one output qubit $q_{\rm out}$ such that
\[
\left|
\bigl\langle Z_{q_{\rm out}}\bigr\rangle_{C_{p,g,y}^{(c)}}
-K_0^{(p,g,k)}(x,x_0)
\right|
\le n^{-c}.
\]
\end{proposition}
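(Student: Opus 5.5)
We construct $C^{(c)}_{p,g,y}$ as a composition of stages, each compiled into $\CRAS$ via Lemma~\ref{lem:compile}, and bound the accumulated error by the Added Note. In the first stage we run $N=\Theta(n^{c'}\log n)$ independent copies of the single-trial process of Lemma~\ref{lem:singletrial} in parallel on disjoint blocks. Each copy uses the constant-depth implementation of $W_{p,g,y}$ (deferred to Appendix~\ref{sec:Wpgy-composition}), the constant-depth approximate ${\rm QFT}_r$ from Lemma~\ref{fact:HS}(iii), and one round of measurement. The classical controller then computes $\gcd(c_j,r)$ for each block. By Lemma~\ref{lem:totient}, each trial succeeds with probability at least $\alpha/\log r=\Omega(1/n)$. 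So with $N=\Theta(n^{c+1})$ copies, the probability that no block has invertible $c_j$ is at most $(1-\alpha/n)^N\le e^{-\Omega(n^c)}\ll n^{-c}/2$. The controller selects the first successful block $j^\ast$, computes $c^{-1}\bmod r$ classically, and keeps the post-measurement register $|d\rangle=|{-x c}\bmod r\rangle$. In the failure branch it prepares any fixed state, which contributes at most the failure probability to the error.

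In the second stage we prepare $|I_0\rangle=L^{-1/2}\sum_{i<L}|i\rangle$ with $L=2^{k}$, $k=n-3$, which is just Hadamards on $k$ qubits. We then apply the gate ${\rm M}:|d\rangle|i\rangle\mapsto|d\rangle|i-c^{-1}d\bmod r\rangle$. Since $c^{-1}$ is a classical constant supplied by the controller, this is a multiplication by a known constant followed by a modular subtraction. By Lemma~\ref{fact:HS}(i),(ii) this lies in $\BQNCf$, with intermediate garbage uncomputed. Because $-c^{-1}d\equiv x$, the second register becomes $|I^{(k)}_x\rangle$, and the sign convention should be matched so that $i\mapsto i+x$. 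In parallel we prepare $|I^{(k)}_{x_0}\rangle$ by adding the classical constant $x_0$ modulo $r$ to a fresh $|I_0\rangle$. We need $L\le r$ so that the intervals are genuinely length $2^k$; this holds since $2^{n-3}<p-1$.

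In the third stage we perform the swap test between the two $n$-qubit registers with a single control qubit $q_{\rm out}$. A swap test gives $\langle Z\rangle=|\langle\psi|\varphi\rangle|^2$ directly, but a controlled-SWAP on $n$ pairs requires fanning out the control. This is available in constant depth by Lemma~\ref{fact:BW}: $H$ on the control, fan-out of the control to $n$ ancillas, $n$ parallel Fredkin gates, unfan-out, $H$. The result is $\langle Z_{q_{\rm out}}\rangle=|\langle I^{(k)}_x|I^{(k)}_{x_0}\rangle|^2=K_0(x,x_0)$ exactly in the ideal case.

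Alternatively, we can use the Bell-measurement form of the swap test. Here the parity $\prod_i(-1)^{a_ib_i}$ is computed by the controller, and a final $X$ correction encodes it on $q_{\rm out}$, in the same spirit as the proof of Theorem~\ref{thm:hierarchy}. This variant keeps the depth trivially constant.

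For the final error accounting, the total number of approximate primitives is $M(n)=\poly(n)$, namely the approximate QFTs and the threshold-based arithmetic. Following the Added Note, we set each invocation's error to $n^{-c}/(2M(n))$, which changes $\langle Z\rangle$ by at most $n^{-c}/2$ in total. Adding the failure probability bound gives the claimed $n^{-c}$. Explicitness and polynomial-time uniformity follow because every gate is determined by $(p,g,y)$ and by classical feedforward data.

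The main obstacle is the first stage, specifically that $W_{p,g,y}$ is genuinely constant-depth. It requires modular exponentiation $g^ay^{-b}$ with quantum exponents, computed as an iterated product of $O(n)$ classically precomputed powers $g^{2^i}$, $y^{-2^i}$ selected by the bits of $a,b$. Iterated multiplication followed by a reduction mod $p$ is in $\BQNCf$ by Lemma~\ref{fact:HS}(i), so I would invoke that together with the deferred appendix. The secondary subtlety is that the approximate ${\rm QFT}_r$ perturbs the distribution in Lemma~\ref{lem:singletrial}, so that $d\equiv -xc$ holds only up to small total-variation error. I would absorb this into the same $\varepsilon(n)$ budget.
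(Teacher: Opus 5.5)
Your proposal is correct and follows essentially the paper's route. The paper's proof of this proposition is exactly your first swap-test variant: Hadamard on the control, fan-out to $n$ ancillas, parallel controlled-SWAPs, unfan-out and Hadamard, giving $\langle Z_{q_{\rm out}}\rangle=\langle S\rangle=|\langle I^{(k)}_x|I^{(k)}_{x_0}\rangle|^2$. The preparation of $|I^{(k)}_x\rangle$ (Lemma~\ref{lem:singletrial}, parallel repetition, the multiplier ${\rm M}$) comes from the proof sketch, and the $n^{-c}$ accuracy is left implicit in the Added Note. Your write-up makes these implicit pieces explicit, namely the failure branch, the check $2^{n-3}<r$, and the error budget. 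One small nit: a total-variation error $\varepsilon$ can move $\langle Z\rangle$ by up to $2\varepsilon$, so take per-primitive error $n^{-c}/(4M(n))$ rather than $n^{-c}/(2M(n))$.
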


\begin{proof}
Apply the generalized swap-test to quantum states $|I^{(k)}_x\rangle$ and $|I^{(k)}_{x_0}\rangle$ naturally proves the statement. More specifically, we initialize the quantum system by
\begin{align}
    |0\rangle_c|0^n\rangle_f|I^{(k)}_x\rangle|I^{(k)}_{x_0}\rangle,
\end{align}
then perform $H$ gate and unbounded fan-out on the system $c$ and $f$, followed by the parallel controlled-swap operator:
\begin{equation}
    \begin{split}
        |0\rangle_c|0^n\rangle_f|I^{(k)}_x\rangle|I^{(k)}_{x_0}\rangle&\mapsto \frac{1}{\sqrt{2}}\left(|0\rangle_c|0^n\rangle_f+|1\rangle_c|1^n\rangle_f\right)|I^{(k)}_x\rangle|I^{(k)}_{x_0}\rangle\\
        &\mapsto\frac{1}{\sqrt{2}}\left(|0\rangle_c|0^n\rangle_f|I^{(k)}_x\rangle|I^{(k)}_{x_0}\rangle+|1\rangle_c|1^n\rangle_f|I^{(k)}_{x_0}\rangle|I^{(k)}_x\rangle\right).
    \end{split}
\end{equation}
Then perform unbounded fan-out on the system $c$ and $f$ and $H$ gate, the quantum state becomes to
\begin{align}
    \frac{1}{2}|0^n\rangle_f\left(|0\rangle_c(I+S)|I^{(k)}_x\rangle|I^{(k)}_{x_0}\rangle+|1\rangle_c(I-S)|I^{(k)}_x\rangle|I^{(k)}_{x_0}\rangle\right).
\end{align}
Measure the $c$ system by Pauli-Z basis yielding the quantum state overlap. We note that all local swap operators can be implemented in parallel, and the unbounded fan-out gate has a constant-depth with mid-circuit measurement implementation.
\end{proof}

We finally combine Proposition~\ref{prop:observable} with Lemma~\ref{fact:Liu}.

\begin{theorem}
\label{thm:main-supp}
Assume that prime-field discrete logarithm is not in $\BPP$ in the worst case. Then there exists an explicit family $\{C_{p,g,y}\}_{(p,g,y)\in\mathcal I}\subseteq\CRAS$ and one fixed one-local observable $Z$ on one output qubit such that no randomized classical polynomial-time algorithm can, on every valid input $(p,g,y)\in\mathcal I$, estimate $\mu_{p,g,y}:=\bigl\langle Z_{q_{\rm out}}\bigr\rangle_{C_{p,g,y}}$ to additive error $1/100$.
\end{theorem}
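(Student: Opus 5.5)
The proof is a contrapositive reduction. Suppose a randomized classical polynomial-time algorithm $\mathcal A$ outputs, on every valid $(p,g,y)$, an estimate $\hat\mu$ with $|\hat\mu-\mu_{p,g,y}|\le 1/100$ with probability at least $2/3$. We will show that this yields a $\BPP$ algorithm for prime-field DLP, contradicting the hypothesis.

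The circuit family $C_{p,g,y}$ is the one from Proposition~\ref{prop:observable}, with $k=n-3$, $x_0=(p+1)/2$, and some fixed constant $c$ (say $c=1$). It is assembled as follows.
\begin{enumerate}[(i)]
\item Prepare $|\phi\rangle$ with the constant-depth compilation of $W_{p,g,y}$ (Appendix~\ref{sec:Wpgy-composition}), made adaptive via Lemma~\ref{lem:compile}.
\item Run the trial of Lemma~\ref{lem:singletrial} in $\mathcal O(n\log n)$ parallel copies. The classical controller picks a copy with $\gcd(c,r)=1$. By Lemma~\ref{lem:totient}, all copies fail with probability $(1-\alpha/\log r)^{\Theta(n\log n)}=e^{-\Omega(n)}$.
\item Apply the multiplier $\mathrm M$ using the classically known $c^{-1}$, which produces $|I^{(k)}_x\rangle$. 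Independently prepare $|I^{(k)}_{x_0}\rangle$.
\item Perform the fan-out swap test.
\end{enumerate}
Two error sources must be absorbed: the exponentially small failure branch, and the inverse-polynomial gate errors from the Added Note. Each contributes at most $o(1)$ to $\langle Z_{q_{\rm out}}\rangle$, since $Z$ has norm $1$. Together with Proposition~\ref{prop:observable} this gives $|\mu_{p,g,y}-K_0(x,x_0)|\le n^{-c}+o(1)<1/200$ for large $n$. Small $n$ are handled by brute force. Explicitness holds because each component is written down in polynomial time from $(p,g,y)$.

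Next, combine this with $\mathcal A$. Amplify $\mathcal A$ by taking the median of $\mathcal O(\log(1/\eta))$ independent runs, which drives the failure probability down to any desired $\eta$. The triangle inequality then gives an estimate of $K_0(x,x_0)$ within $1/100+1/200$. This is slightly worse than the $0.01$ in Lemma~\ref{fact:Liu}, so I would not invoke that lemma verbatim. Instead I would rely on the gap it rests on: on the promise of $\mathrm{DLP}_{1/16}(p,g)$, either $K_0=0$ or $K_0\ge 1/16$. Any additive error below $1/32$ therefore separates the two cases by thresholding at $1/32$. This decides $\mathrm{DLP}_{1/16}(p,g)$ with bounded error, and Lemma~\ref{lemma:DLPReduction} with $\delta=1/16$ lifts it to full DLP.

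One subtlety is that Lemma~\ref{lemma:DLPReduction} is stated for deterministic polynomial-time subroutines. The reduction makes polynomially many oracle calls, so I would amplify each call to failure probability $1/\poly(n)$ and take a union bound. The composed procedure is then a $\BPP$ algorithm for DLP, which is the desired contradiction.

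I expect the main obstacle to be the error bookkeeping between the ideal $K_0$ and the implemented $\mu$. The swap test as written yields $\Pr[0]-\Pr[1]=|\langle I_x|I_{x_0}\rangle|^2$, so $\langle Z\rangle$ equals the overlap directly. However, the conditioning on a successful $\gcd$ trial and the residual $|d\rangle$ registers must be shown not to bias the output qubit. The fallback is to let the controller output a fixed default state on failure and bound that branch's contribution by its exponentially small probability.
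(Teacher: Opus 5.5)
Your proposal is correct and follows the paper's argument. Like the paper, you work directly with the gap $K_0=0$ versus $K_0\ge 1/16$ on the $\mathrm{DLP}_{1/16}(p,g)$ promise together with Proposition~\ref{prop:observable} (not the $0.01$ accuracy of Lemma~\ref{fact:Liu}), invoke Lemma~\ref{lemma:DLPReduction}, and add amplification, a union bound and failure-branch bookkeeping that the paper leaves implicit. One small slip: with $\varphi(r)/r\ge\alpha/\log r=\Theta(1/n)$ from Lemma~\ref{lem:totient}, $\Theta(n\log n)$ trials all fail with probability $n^{-\Theta(1)}$, not $e^{-\Omega(n)}$. This is still $o(1)$ for a suitable constant, so your error budget is unaffected.
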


\begin{proof}
Suppose for contradiction that there exists a randomized classical polynomial-time algorithm $\mathcal{A}$ which, on every valid input $(p,g,y)$, outputs $\widetilde\mu$ satisfying $\abs{\widetilde\mu-\mu_{p,g,y}}\le 1/100$ with success probability at least $2/3$.

Now specialize to the promise problem from Lemma~\ref{fact:Liu}. For parameters $k=n-3$ and $x_0={(p+1)/2}$, the two promised cases in $\rm DLP_{1/16}(p,g)$ satisfy
\[
K_0^{(p,g,k)}(x,x_0)=0
\qquad\text{or}\qquad
K_0^{(p,g,k)}(x,x_0)\ge \frac{1}{16}.
\]
By Proposition~\ref{prop:observable}, this naturally implies $\mu_{p,g,y}\leq 10^{-3}$ in the first case (for some choices of $c$), and
\[
\mu_{p,g,y}\ge \frac{1}{16}-10^{-3}
\]
in the second. These two intervals are separated by much more than $1/100$, so an additive-$1/100$ estimator distinguishes them.

Hence $\mathcal A$ yields a randomized classical polynomial-time algorithm for $\mathrm{DLP}_{1/16}(p,g)$. By Lemma~\ref{lemma:DLPReduction}, this implies a randomized classical polynomial-time algorithm for full prime-field DLP, contradicting the assumption.
\end{proof}

\section{Compilation of the $W_{p,g,y}$ block}
\label{sec:Wpgy-composition}

Let $p$ be a fixed $n$-bit prime and let
$g,y$ be fixed classical parameters. The target is to construct a quantum circuit to achieve the function
\begin{equation}
W_{p,g,y}:\ |a\rangle |b\rangle |z\rangle |0\rangle
\longmapsto
|a\rangle |b\rangle
|z\oplus g^a y^{-b}\bmod p\rangle |0\rangle,
\label{eq:Wpgy-map}
\end{equation}
where $a,b\in\{0,1\}^n$, represented by $a=\sum_{j=0}^{n-1}a_j2^j$ and $b=\sum_{j=0}^{n-1}b_j2^j$. To construct the target quantum circuit, we first classically precompute the constants
\begin{equation}
u_j=g^{2^j}\bmod p,
\qquad
v_j=y^{-2^j}\bmod p,
\qquad 0\le j<n.
\label{eq:Wpgy-precompute}
\end{equation}
Then define $2n$ selected $n$-bit factors by
\begin{equation}
h_j(a_j)=
\begin{cases}
1, & a_j=0,\\
u_j, & a_j=1,
\end{cases}
\qquad
h_{n+j}(b_j)=
\begin{cases}
1, & b_j=0,\\
v_j, & b_j=1,
\end{cases}
\qquad 0\le j<n.
\label{eq:Wpgy-selected-factors}
\end{equation}
For each output bit position of each factor, the loaded bit is one of
$0,1,a_j,\neg a_j$ or one of $0,1,b_j,\neg b_j$, determined by the
fixed pair of constants being selected. Hence all factors can be loaded
coherently in constant depth, using $\mathcal{O}(n^2)$ factor-register qubits and
$\mathcal{O}(n^2)$ fan-out targets.

The product of these selected factors naturally satisfies
\begin{equation}
\prod_{\ell=0}^{2n-1}h_\ell
\equiv
\prod_{j=0}^{n-1}g^{a_j2^j}
\prod_{j=0}^{n-1}y^{-b_j2^j}
=
g^a y^{-b}
\pmod p.
\label{eq:Wpgy-product-correctness}
\end{equation}
As an ordinary integer product, since each factor is smaller than $2^n$,
we have $0\le \prod_{\ell=0}^{2n-1}h_\ell < 2^{2n^2}.$
Thus we set $L=2n^2$.

Let
$U_{\mathrm{MUL}}(m,n)$ denote a coherent iterated-multiplication
subroutine for $m$ many $n$-bit factors, writing their ordinary integer
product into an output register of length $mn$.  In the present
application we use $m=2n$ and output length $L=2n^2$:
\begin{equation}
|h_0,\ldots,h_{2n-1}\rangle |0^L\rangle 
\longmapsto
|h_0,\ldots,h_{2n-1}\rangle
\left|\prod_{\ell=0}^{2n-1}h_\ell\right\rangle.
\label{eq:IMUL-macro}
\end{equation}
We will provide explicit constructions of the quantum gate $U_{\mathrm{MUL}}(m,n)$ in the following sections.

Let $X_{a,b}:=\prod_{\ell=0}^{2n-1} h_\ell(a,b)=\prod_{l=0}^{2n-1}h_l$, and the full clean computation can be written as
\begin{equation}
\begin{aligned}
&|a\rangle |b\rangle |z\rangle
|0\rangle_{\mathsf H}
|0^L\rangle_{\mathsf P}
|0\rangle_{\mathsf A_{\rm I}}
|0\rangle_{\mathsf A_{\rm red}}
\\
\xrightarrow{\,U_{\rm load}\,}\quad
&|a\rangle |b\rangle |z\rangle
|h_0(a,b),\ldots,h_{2n-1}(a,b)\rangle_{\mathsf H}
|0^L\rangle_{\mathsf P}
|0\rangle_{\mathsf A_{\rm I}}
|0\rangle_{\mathsf A_{\rm red}}
\\
\xrightarrow{\,U_{\rm MUL}\,}\quad
&|a\rangle |b\rangle |z\rangle
|h_0(a,b),\ldots,h_{2n-1}(a,b)\rangle_{\mathsf H}
|X_{a,b}\rangle_{\mathsf P}
|0\rangle_{\mathsf A_{\rm I}}
|0\rangle_{\mathsf A_{\rm red}}
\\
\xrightarrow{\,R_{p}\,}\quad
&|a\rangle |b\rangle
|z\oplus (X_{a,b}\bmod p)\rangle
|h_0(a,b),\ldots,h_{2n-1}(a,b)\rangle_{\mathsf H}
|X_{a,b}\rangle_{\mathsf P}
|0\rangle_{\mathsf A_{\rm I}}
|0\rangle_{\mathsf A_{\rm red}}
\\
\xrightarrow{\,U_{\rm MUL}^{\dagger}\ U_{\rm load}^{\dagger}\,}\quad
&|a\rangle |b\rangle
|z\oplus (g^a y^{-b}\bmod p)\rangle
|0\rangle_{\mathsf H}
|0^L\rangle_{\mathsf P}
|0\rangle_{\mathsf A_{\rm I}}
|0\rangle_{\mathsf A_{\rm red}} .
\end{aligned}
\label{eq:Wpgy-flow}
\end{equation}
Here $\mathsf H$ denotes the factor registers holding 
values $h_\ell(a,b)$, $\mathsf P$ is the $L$-bit product accumulator,
$\mathsf A_{\rm I}$ is the ancilla by the iterated-multiplication subroutine, and $\mathsf A_{\rm red}$ is the ancilla required by the fixed-modulus
reduction.  The equality in the last line uses
\begin{equation}
X_{a,b}\bmod p
=
\prod_{\ell=0}^{2n-1}h_\ell(a,b)\bmod p
=
g^a y^{-b}\bmod p .
\end{equation}
In the following sections, we first propose how to reduce the ``division with remainder'' gate $R_{p}$ to standard multiplier gate, and finally propose how to implement the multiplier gate.

\section{Fixed-modulus reduction from multiplication}
\label{sec:fixed-modulus-reduction}

In the context of implementing $W_{p,g,y}$, we only need reduction modulo a fixed classical modulus. Let $p$ be a fixed classical $n$-bit modulus, so $p<2^n$, and $L\ge n$ be a length parameter for the integer to be reduced. We assume $0\le X<2^L$, and the target quantum gate
\[
R_{p}:\ |X\rangle |z\rangle |0\rangle_{\mathsf A_{\rm red}}
\longmapsto
|X\rangle |z\oplus (X\bmod p)\rangle |0\rangle_{\mathsf A_{\rm red}},
\]
where the output residue is represented on $n$ qubits and the modulus
$p$ is fixed throughout the construction.

Given an input $X$, define the classical reciprocal constant $\mu_L=\left\lfloor \frac{2^L}{p}\right\rfloor$, $\hat q=
\left\lfloor \frac{X\mu_L}{2^L}\right\rfloor$ and $q=\left\lfloor \frac{X}{p}\right\rfloor$. Then we have
\[
\frac{X\mu_L}{2^L}
\le
\frac{X}{p}
<
\frac{X\mu_L}{2^L}
+
\frac{X}{2^L}
<
\frac{X\mu_L}{2^L}+1,
\]
where the first two inequalities come from $\mu_L\le \frac{2^L}{p}<\mu_L+1$.
Therefore one has $\hat q\in\{q-1,q\}$.

Now compute $r_0=X-\hat q p$. If $\hat q=q$, then $r_0=X\bmod p$.  If
$\hat q=q-1$, then $r_0=(X\bmod p)+p$.  Hence $0\le r_0<2p$
and the final residue is obtained by a threshold function:
\begin{align}
    X\bmod p
=
r_0-\ind{r_0\ge p}\,p.
\end{align}

The quantum circuit $R_{p}$ can be implemented by the following steps:
\begin{enumerate}
\item Compute products $X\mu_L$ and $\hat{q}p$ by multiplier gate.
\item Compute $r_0=X-\hat q p$ in an $\mathcal O(L)$-bit scratch register.
\item Compute $\gamma=\mathbf{1}[r_0\ge p]$, conditionally subtract
$p$ from $r_0$, and XOR the resulting residue into $z$.
\item Reverse all temporary computations, erasing $\gamma$, the
corrected residue, $r_0$, $\hat q p$, and $\hat q$.
\end{enumerate}
As a result, the fixed modulus gate can be further decomposed by multiplier and threshold gates. Here, we note that $p$ is a fixed classical constant, the involved threshold gate thus admits a
polynomial-size constant-depth decomposition into AND and OR gates~\cite{Vollmer1999}, which are
special cases of the threshold gates in Def.~\ref{def:threshold}.

\section{Multiplier Gate by Constant-Depth Fan-out}
\label{App:Multiplier}
Here, we consider the construction of multiplier gate such that
\begin{align}
    |x\rangle|y\rangle|z\rangle\mapsto|x\rangle|y\rangle|z\xor xy\rangle
\end{align}
in the context of fan-out model. By Refs~\cite{BaeumerPRXQ2024,BaeumerWoerner2025}, a $n$-qubit fan-out gate can be implemented by a constant depth quantum circuit with classical feedforward, we thus utilize the fan-out gate as a fundamental element to construct the multiplier gate. To do this, we first introduce the threshold gate which can be explicitly implemented by fan-out gate, then give a white-box construction on utilizing the threshold gate to simulate the multiplier gate.

An unbounded fan-out gate with one control and $m$ targets is the unitary in achieving
\begin{equation}
\ket{b}\ket{t_1,\ldots,t_m}
\longmapsto
\ket{b}\ket{t_1\xor b,\ldots,t_m\xor b}.
\end{equation}
Here, we note that this gate copies computational-basis information coherently, and it is not an
operation that clones arbitrary quantum states.

%We use the following resource notation.  For a subroutine $\mathcal B$, let $A_{\mathcal B}$, $F_{\mathcal B}$, $T_{\mathcal B}$ and $D_{\mathcal B}$ denote the number of auxiliary qubits, the number of unbounded
%fan-out gates, the total number of fan-out targets, and the fan-out circuit depth, respectively.

\begin{definition}[Threshold Gate]
\label{def:threshold}
For $1\le \tau\le m$, we define the threshold gate
\begin{equation}
\Thr^{\xor}_{m,\tau}:
\ket{v_1,\ldots,v_m}\ket{b}\ket{0^{A_{\rm TH}(m,\tau)}}
\longmapsto
\ket{v_1,\ldots,v_m}
\ket{b\xor \ind{\sum_i v_i\ge \tau}}
\ket{0^{A_{\rm TH}(m,\tau)}},
\end{equation}
where the notation $A_{\rm TH}(m,\tau)$ represents the number of ancilla qubits in the construction. We write $\OR_m=\Thr_{m,1}$ and $\AND_m=\Thr_{m,m}$.
\end{definition}

In the following parts, we attempt to implement quantum circuit in the context of threshold gate which can be decomposed into a constant-depth quantum circuit with one-qubit gates and unbounded fan-out gates, and the corresponding circuit size is $\mathcal{O}(n^2\log(n))$~\cite{HoyerSpalek2005,takahashi2016collapse}. 

\vspace{10px}

\noindent\textbf{Weighted thresholds.} For non-negative integer weights $w_1,\ldots,w_s$, define
\begin{equation}
\WTH_{w,\tau}(v)=\ind{\sum_{i=1}^s w_i v_i\ge \tau}.
\end{equation}
Let $M=\sum_{i=1}^s w_i$ and $r=|\{i:w_i>0\}|$. A weighted threshold can be reduced to an unweighted threshold by replacing each input $v_i$ by $w_i$ copies. The Hamming weight of the resulting list is
exactly $\sum_i w_i v_i$. 

%Using the original wire as one occurrence, and
%creating $w_i-1$ additional occurrences when $w_i>0$, the copy step alone
%contributes
%\begin{equation}
%A_{\rm copy}=\sum_{w_i>0}(w_i-1)=M-r,
%\qquad
%F_{\rm copy}\le 2r,
%\qquad
%T_{\rm copy}=2(M-r),
%\label{eq:wth-copy-cost}
%\end{equation}
%where the factor of $2$ accounts for copying and uncopying.  Therefore the
%resources of a clean weighted-threshold gate are bounded by
%\begin{align}
%A_{\rm WTH}(w,\tau)&\le M-r+A_{\rm TH}(M,\tau),\nonumber\\
%F_{\rm WTH}(w,\tau)&\le 2r+F_{\rm TH}(M,\tau),\nonumber\\
%T_{\rm WTH}(w,\tau)&\le 2(M-r)+T_{\rm TH}(M,\tau),\nonumber\\
%D_{\rm WTH}(w,\tau)&\le 2+D_{\rm TH}(M,\tau).
%\label{eq:wth-resources}
%\end{align}
%This construction is coherent, since the additional copy registers are uncomputed.

\vspace{10px}
\noindent \textbf{Parity.}
The parity operation
\begin{equation}
\ket{x_1,\ldots,x_M}\ket{z}
\longmapsto
\ket{x_1,\ldots,x_M}\ket{z\xor x_1\xor\cdots\xor x_M}
\end{equation}
is constant depth in the fan-out model.  Indeed, CNOT is reversed by Hadamard
conjugation on its two qubits, and therefore an unbounded fan-out gate is
conjugate to an unbounded parity gate.  Thus a parity gate is implemented by a
layer of Hadamards, one fan-out gate, and another layer of Hadamards.

\subsection{High-level construction idea}

The multiplier is a reversible, parallelized version of column-wise
multiplication.  The construction has the following layers.

\begin{enumerate}
\item Compute all local products $p_{ij}=x_i y_j$.
The products in column $k$ are those with $i+j=k$.

\item Define the column population $u_k=\sum_{i+j=k}p_{ij}$. If $d_k$ is the integer carry entering column $k$, then the usual recurrence is
\begin{equation}
d_{k+1}=\left\lfloor \frac{u_k+d_k}{2}\right\rfloor,
\qquad
r_k=(u_k+d_k)\bmod 2.
\end{equation}
A direct ripple-carry implementation is not used.

\item Group the $2n$ columns into logarithmic-length blocks.  Let
$R=\lceil \log(n+1)\rceil+1$, $W=2^R$, and let block $b$
start at column $s_b=bR$.  For a prefix of length $t$ inside block
$b$, define the local weighted prefix sum
\[
A_{b,t}=\sum_{j=0}^{t-1}2^j u_{s_b+j},
\]
with $u_k=0$ for padded columns beyond $2n-1$.  For each such block
prefix, form threshold-table bits
\[
Q_{b,t,\ell}=\ind{A_{b,t}\ge \ell}.
\]
These tables encode the block-prefix sums in unary and let us express
the block-carry and in-block carry predicates by constant-depth Boolean
combinations.

\item Use a set--hold--reset prefix formula to generate all block-entry
carries. Let $\delta_k:=\max_{x,y} d_k$ denote the largest possible integer carry entering column $k$; its
closed form is given in Eq.~\eqref{eq:deltak} below. Then utilizing the threshold table $Q_{b,t,l}$ to generate the
unary carry bits
\[
C_{k,c}=\ind{d_k\ge c},
\qquad 1\le c\le \delta_k .
\]

\item Output the product bit directly by parity:
\begin{equation}
r_k=
\left(\bigoplus_{i+j=k}p_{ij}\right)
\xor
\left(\bigoplus_{c=1}^{\delta_k} C_{k,c}\right).
\label{eq:high-level-output-parity}
\end{equation}
The first parity is $u_k\bmod 2$; the second is $d_k\bmod 2$, since
$C_{k,1},\ldots,C_{k,\delta_k}$ are the unary carry bits for $d_k$.

\item XOR all $r_k$ into $z$, and then reverse the computation to erase every
work register.
\end{enumerate}

The only nontrivial part of the construction is the parallel generation of the
integer carries $d_k$.  This is the purpose of the block construction below.

\subsection{Column form of multiplication}

We write the binary representations of input values $(x,y)$, that is
\begin{equation}
x=\sum_{i=0}^{n-1}x_i2^i,
\qquad
y=\sum_{j=0}^{n-1}y_j2^j,
\end{equation}
where $x_i,y_j\in\{0,1\}$. The local products are defined by $p_{ij}=x_i y_j$. For $0\le k\le 2n-1$, define
\begin{equation}
u_k=\sum_{i+j=k}p_{ij},
\end{equation}
with $u_{2n-1}=0$.  The number of possible local products in column $k$ is
\begin{equation}
m_k=
\begin{cases}
k+1, & 0\le k\le n-1,\\
2n-1-k, & n\le k\le 2n-2,\\
0, & k=2n-1.
\end{cases}
\label{eq:mk}
\end{equation}
Here $m_k$ is a deterministic column width, while $u_k$ depends on the input.
The column recurrence is
\begin{equation}
d_0=0,
\qquad
w_k=u_k+d_k,
\qquad
r_k=w_k\bmod 2,
\qquad
d_{k+1}=\left\lfloor \frac{w_k}{2}\right\rfloor .
\label{eq:carry-recurrence}
\end{equation}
The carry $d_k$ is an integer carry, not a single bit.  This is because a column
of a multiplication table may contain many local products.

The largest possible carry entering column $k$ is
\begin{equation}
\delta_k:=\max d_k=
\begin{cases}
0, & k=0,\\
k-1, & 1\le k\le n,\\
2n-k, & n+1\le k\le 2n-1.
\end{cases}
\label{eq:deltak}
\end{equation}
Indeed, the recurrence for the maximum carry is
\begin{equation}
\delta_{k+1}=\left\lfloor\frac{m_k+\delta_k}{2}\right\rfloor,
\qquad \delta_0=0,
\end{equation}
which gives Eq.~\eqref{eq:deltak}.  In particular, we have
\begin{equation}
\sum_{k=0}^{2n-1}\delta_k=n(n-1).
\label{eq:sum-delta}
\end{equation}

\begin{lemma}[Partial-product layer]
All $n^2$ partial products $p_{ij}$ can be generated coherently in constant
fan-out depth.
\end{lemma}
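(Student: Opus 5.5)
The plan is to build the products with Toffoli gates acting on fanned-out copies of the input bits. The only obstruction to doing all $n^2$ products at once is that each bit $x_i$ appears in $n$ products (one for every $j$), and likewise each $y_j$ appears in $n$ products (one for every $i$). Constant depth therefore requires separate copies of the inputs.

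\textbf{Copying.} Allocate two clean ancilla arrays, $\{\hat x_{ij}\}$ and $\{\hat y_{ij}\}$, each of $n^2$ qubits, and a product array $\{p_{ij}\}$ initialized to $\ket{0}$.
\begin{enumerate}[(i)]
\item For each $i$, apply one unbounded fan-out with control $x_i$ and targets $\hat x_{i0},\ldots,\hat x_{i,n-1}$.
\item For each $j$, apply one unbounded fan-out with control $y_j$ and targets $\hat y_{0j},\ldots,\hat y_{n-1,j}$.
\end{enumerate}
These $2n$ fan-outs read disjoint controls and write disjoint targets. They can therefore be placed in a single fan-out layer, or in two layers if controls and targets are required to be disjoint across gates. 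Each layer is compiled by Lemma~\ref{fact:BW} into constant depth with one round of feedforward.

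\textbf{Products.} Apply in parallel the $n^2$ Toffoli gates $\hat x_{ij},\hat y_{ij}\to p_{ij}$. Each Toffoli touches a distinct triple of qubits, so this is one layer. Each Toffoli has a standard constant-depth decomposition into one- and two-qubit gates, so the layer has constant depth. Afterwards $p_{ij}=x_iy_j$ holds coherently, in superposition over the inputs.

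\textbf{Uncopying.} Apply the same fan-outs again. This returns every $\hat x_{ij}$ and $\hat y_{ij}$ to $\ket{0}$, because the controls $x_i,y_j$ were never modified. The copy arrays are then clean, and the map is exactly
\[
\ket{x}\ket{y}\ket{0^{n^2}}\mapsto\ket{x}\ket{y}\ket{(x_iy_j)_{i,j}} .
\]

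\textbf{Cost and main subtlety.} The total is $\mathcal O(n^2)$ ancillas, $2n$ fan-outs per direction, and $\mathcal O(1)$ fan-out depth. By Lemma~\ref{lem:compile} this compiles into $\CRAS$ with $\mathcal O(1)$ rounds. The only point requiring care is coherence: the fan-out copies must be basis-state copies that are fully uncomputed, not clones. If they were left in place, the copy registers would stay entangled with the input and later interference steps (QFT, swap test) would be spoiled. The uncopying step rules this out. The reverse of the whole layer, used later to erase $p_{ij}$, follows by running the same circuit in the opposite order.
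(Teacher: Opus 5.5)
Your proposal is correct and matches the paper's proof: fan out each $x_i$ and each $y_j$ to $n$ local copies, apply a clean two-input AND on each local pair into $p_{ij}$, and then uncopy. Your Toffoli is exactly the paper's $\Thr^{\xor}_{2,2}$, and your extra bookkeeping (disjointness of the fan-out layer, ancilla count, compilation via Lemma~\ref{lem:compile}) is consistent with the paper.
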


\begin{proof}
Fan out each $x_i$ to $n$ local copies and each $y_j$ to $n$ local copies.  On
the local pair corresponding to $(i,j)$, apply a clean two-input AND, i.e.
$\Thr^{\xor}_{2,2}$, with target $p_{ij}$.  Then uncopy the local copies.  The
$p_{ij}$ registers are retained until the final uncomputation.
\end{proof}

\subsection{Block carry propagation}

We now replace the ripple-carry chain by a block-prefix computation.  Choose $R=\lceil\log(n+1)\rceil+1$ and $W=2^R$. Then we have $n<W\le 4(n+1)$. Split the $2n$ columns into
\begin{equation}
B=\left\lceil\frac{2n}{R}\right\rceil
\end{equation}
blocks of length $R$.  Blocks are indexed by $b=0,\ldots,B-1$, and the first
column of block $b$ is $s_b=bR$. If the final block extends past column $2n-1$, set $u_k=m_k=0$ for the padded
columns.

For $0\le t\le R$, define the block-prefix sum
\begin{equation}
A_{b,t}=\sum_{j=0}^{t-1}2^j u_{s_b+j}.
\label{eq:Abt}
\end{equation}
The full block sum is $A_b=A_{b,R}$.  We decompose it as
\begin{equation}
A_b=\alpha_b W+\rho_b,
\qquad
0\le \rho_b<W.
\label{eq:alpha-rho}
\end{equation}
The maximum possible value of $A_{b,t}$ is
\begin{equation}
M_{b,t}=\sum_{j=0}^{t-1}2^j m_{s_b+j}.
\label{eq:Mbt}
\end{equation}
The quantity $M_{b,t}$ is not a dynamical register, while it only specifies how many
threshold bits are needed later.  Since $m_k\le n$,
\begin{equation}
M_{b,t}\le n(2^t-1)<4n(n+1).
\label{eq:Mstar-bound}
\end{equation}
Let $a_b=\left\lfloor\frac{M_{b,R}}{W}\right\rfloor$, then we have the relationship $0\le \alpha_b\le a_b\le n-1$.

Let $D_b=d_{s_b}$ be the carry entering block $b$. Processing a full block means adding the
incoming carry to the local block sum and discarding the $R$ output bits inside
the block. Therefore the carry
\begin{equation}
D_{b+1}=d_{s_b+R}=\left\lfloor\frac{A_b+D_b}{W}\right\rfloor.
\label{eq:block-exit-first}
\end{equation}
Using Eq.~\eqref{eq:alpha-rho}, and $D_b\le n<W$, this becomes
\begin{equation}
D_{b+1}=\alpha_b+\ind{\rho_b+D_b\ge W}.
\label{eq:block-exit}
\end{equation}
Thus the locally generated part is $\alpha_b$, and only one extra bit needs to
propagate from one block to the next.

\vspace{8px}
\begin{lemma}[Set--hold--reset carry bit]
Set $\alpha_{-1}=0$ and $E_0=0$.  Write
\begin{equation}
D_b=\alpha_{b-1}+E_b,
\label{eq:Db-alpha-E}
\end{equation}
where $E_{b+1}=\ind{\rho_b+D_b\ge W}$ represents the carry given by $D_{b-1}$. For block $b$, define $X_b=\rho_b+\alpha_{b-1}, S_b=\ind{X_b\ge W}$, and $H_b=\ind{X_b=W-1}$. Then
\begin{equation}
E_{b+1}=
\begin{cases}
1, & S_b=1,\\
E_b, & H_b=1,\\
0, & \text{otherwise}.
\end{cases}
\label{eq:set-hold-reset}
\end{equation}
Equivalently, all $E_{b+1}$ are given by the parallel prefix formula
\begin{equation}
E_{b+1}=\vee_{\ell=0}^{b}
\left(
S_\ell\wedge_{t=\ell+1}^{b}H_t
\right).
\label{eq:prefix-E}
\end{equation}
\end{lemma}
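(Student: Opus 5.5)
The plan is to derive the set--hold--reset rule directly from the block-exit relation Eq.~\eqref{eq:block-exit}, and then unroll the resulting one-step recurrence into the prefix formula Eq.~\eqref{eq:prefix-E}. No quantum ingredients are needed; this is a purely arithmetic statement about the integer carries $D_b$.

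\textbf{Step 1: the decomposition $D_b=\alpha_{b-1}+E_b$ with $E_b\in\{0,1\}$.} I will prove this by induction on $b$.
\begin{itemize}
\item For the base case, $D_0=d_{s_0}=d_0=0=\alpha_{-1}+E_0$.
\item For the step, Eq.~\eqref{eq:block-exit} gives $D_{b+1}=\alpha_b+\ind{\rho_b+D_b\ge W}$. Its hypothesis $D_b<W$ holds because every carry satisfies $d_k\le\delta_k\le n<W$ by Eq.~\eqref{eq:deltak}.
\item Defining $E_{b+1}:=\ind{\rho_b+D_b\ge W}$ therefore gives $D_{b+1}=\alpha_b+E_{b+1}$ with $E_{b+1}\in\{0,1\}$.
\end{itemize}

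\textbf{Step 2: the three-case rule.} Substituting Step 1 into the definition of $E_{b+1}$ gives
\[
E_{b+1}=\ind{\rho_b+\alpha_{b-1}+E_b\ge W}=\ind{X_b+E_b\ge W}.
\]
Since $X_b$ is an integer and $E_b\in\{0,1\}$, I split into three cases.
\begin{itemize}
\item If $X_b\ge W$ (that is, $S_b=1$), the indicator equals $1$ regardless of $E_b$.
\item If $X_b=W-1$ (that is, $H_b=1$), the indicator equals $\ind{E_b\ge1}=E_b$.
\item If $X_b\le W-2$, then $X_b+E_b\le W-1$ and the indicator is $0$.
\end{itemize}
The events $S_b=1$ and $H_b=1$ are mutually exclusive, so the case split in Eq.~\eqref{eq:set-hold-reset} is well defined. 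The three cases combine into the Boolean recurrence
\[
E_{b+1}=S_b\vee(H_b\wedge E_b),\qquad E_0=0.
\]

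\textbf{Step 3: unrolling to the prefix formula.} I will prove Eq.~\eqref{eq:prefix-E} by induction on $b$.
\begin{itemize}
\item For $b=0$, the recurrence gives $E_1=S_0\vee(H_0\wedge E_0)=S_0$, which is the single term $\ell=0$ with an empty conjunction.
\item For the step, assume the formula for $E_b$, namely $E_b=\vee_{\ell=0}^{b-1}\bigl(S_\ell\wedge_{t=\ell+1}^{b-1}H_t\bigr)$.
\item Substituting into $S_b\vee(H_b\wedge E_b)$ and distributing $H_b$ over the disjunction appends $H_b$ to every conjunction with $\ell\le b-1$, so each becomes $\wedge_{t=\ell+1}^{b}H_t$. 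The lone term $S_b$ supplies the new summand $\ell=b$, again with an empty conjunction.
\end{itemize}

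The only point that requires care, and so the main obstacle, is the hypothesis $D_b<W$ together with the claim that $E_b$ is a single bit. Without these, Eq.~\eqref{eq:block-exit} fails and the ``hold'' case would not reduce to a single bit. This is exactly why Step 1 is carried out by induction alongside the carry bound $\delta_k\le n<W$, rather than being assumed.
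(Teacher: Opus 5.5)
Your proposal is correct and follows essentially the same route as the paper. Both arguments substitute $D_b=\alpha_{b-1}+E_b$ into Eq.~\eqref{eq:block-exit} to get $E_{b+1}=\ind{X_b+E_b\ge W}$, split on $X_b\ge W$, $X_b=W-1$ and $X_b\le W-2$, and unroll the resulting recurrence; your explicit inductions, checking $D_b<W$ so that $E_b$ is a bit and distributing $H_b$ to obtain Eq.~\eqref{eq:prefix-E}, make rigorous what the paper states informally as ``expanding the recurrence.''
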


\begin{proof}
Substituting Eq.~\eqref{eq:Db-alpha-E} into Eq.~\eqref{eq:block-exit} gives
\begin{equation}
E_{b+1}=\ind{\rho_b+\alpha_{b-1}+E_b\ge W}=\ind{X_b+E_b\ge W}.
\end{equation}
If $X_b\ge W$, the next bit is forced to $1$.  If $X_b=W-1$, the next bit is
exactly $E_b$.  If $X_b\le W-2$, the next bit is forced to $0$.  This gives
Eq.~\eqref{eq:set-hold-reset}. Expanding the recurrence yields Eq.~\eqref{eq:prefix-E}.  Indeed, the
extra carry entering block $b+1$ is equal to $1$ if and only if there
exists an earlier block $\ell\le b$ that sets the extra carry, and every
block between $\ell+1$ and $b$ holds it.
\end{proof}

\subsection{Using Threshold tables to Compute $S$ and $H$}
Now we only need to demonstrate how to efficiently compute $S_b$ and $E_b$ from $A_{b,t}$. For each block
prefix $A_{b,t}$ we generate the unary table
\begin{equation}
Q_{b,t,\ell}=\ind{A_{b,t}\ge \ell},
\qquad
1\le \ell\le M_{b,t}.
\label{eq:Q-table}
\end{equation}
We also use the conventions $Q_{b,t,\ell}=1 \quad (\ell\le 0)$ and $Q_{b,t,\ell}=0 \quad (\ell>M_{b,t})$. The table entries are bounded-weight thresholds in the partial products: the
largest weight is $2^{R-1}=\mathcal O(n)$, and the largest expanded fan-in is
$M_{b,t}<4n(n+1)$.

From the table $Q_{b,R,\ell}$ we generate the quotient bits
\begin{equation}
{\Omega}_{b,a}=\ind{\alpha_b=a}
=Q_{b,R,aW}\wedge \neg Q_{b,R,(a+1)W},
\label{eq:alpha-bits}
\end{equation}
where $Q_{b,R,aW}=1$ and $Q_{b,R,(a+1)W}=0$ indicates $A_{b,R}\geq aW$ and $A_{b,R}<(a+1)W$. In the above construction, recall that we assume $A_b=A_{b,R}=\alpha_bW+\rho_b$. These results imply $a=\alpha_b$.
Predicates involving $\rho_b=A_b\bmod W$ are not ordinary thresholds in $A_b$, instead, they are unions of intervals. In this simplified construction we do not store a
separate $\rho$ table.  Instead, we compute the set and hold bits directly as
DNF predicates in the $Q$-table and the preceding Alpha bits.  With
$\alpha_{-1}=0$, define $a_{-1}=0$ and treat ${\Omega}_{-1,0}$ as the
constant $1$.  Then
\begin{align}
S_b
&=\vee_{a=0}^{a_{b-1}}
\vee_{r=0}^{a_b}
\left(
{\Omega}_{b-1,a}
\wedge Q_{b,R,rW+W-a}
\wedge \neg Q_{b,R,(r+1)W}
\right),
\label{eq:S-direct}\\
H_b
&=\vee_{a=0}^{a_{b-1}}
\vee_{r=0}^{a_b}
\left(
{\Omega}_{b-1,a}
\wedge Q_{b,R,rW+W-1-a}
\wedge \neg Q_{b,R,rW+W-a}
\right).
\label{eq:H-direct}
\end{align}
Above boolean function can be understood as follows. For example, $S_b=1$ implies $\rho_b\geq W-a$. Suppose $A_b=rW+\rho_b$, then one has $A_b=rW_+\rho_b\geq rW+W-a$. Meanwhile, $A_b<(r+1)W$, as a result, $rW+W-a\leq A_b<(r+1)W$, equivalently, its boolean function can be written by $Q_{b,R,rW+W-a}\wedge \neg  Q_{b,R,(r+1)W}$. Furthermore, it is required to set $\alpha_{b-1}=a$, and this give rises to $S_b$. Similarly, we con write $H_b$ as the boolean function in terms of $\Omega$ and $Q$. Indeed, for a fixed $a$, the first formula tests $\rho_b\ge W-a$, and the
second tests $\rho_b=W-1-a$.

Once the block-entry bit $E_b$ is known, the carry entering any column inside
block $b$ can be recovered from the same threshold table.  Let $k=s_b+t$ with
$0\le t<R$ and $k<2n$.  Then
\begin{equation}
d_k=d_{s_b+t}=\left\lfloor\frac{A_{b,t}+D_b}{2^t}\right\rfloor.
\label{eq:inside-block-carry}
\end{equation}
For $1\le c\le \delta_k$, define the unary carry bit
\begin{equation}
C_{k,c}=\ind{d_k\ge c}.
\label{eq:Ckc}
\end{equation}
For $b\ge 1$, using $D_b=\alpha_{b-1}+E_b$, we compute
\begin{align}
C_{k,c}
=&\vee_{a=0}^{a_{b-1}}
\left(
{\Omega}_{b-1,a}\wedge \neg E_b\wedge Q_{b,t,c2^t-a}
\right)
\nonumber
\vee_{a=0}^{a_{b-1}}
\left(
{\Omega}_{b-1,a}\wedge E_b\wedge Q_{b,t,c2^t-a-1}
\right).
\label{eq:C-general}
\end{align}

Finally, we do not form a unary table for
$u_k+d_k$. Instead, we use the parity function. Since $u_k=\sum_{i+j=k}p_{ij}$
we have
\begin{equation}
u_k\bmod 2=\bigoplus_{i+j=k}p_{ij}.
\end{equation}
Similarly, because $C_{k,c}=\ind{d_k\ge c}$ is the unary representation of
$d_k$, exactly $d_k$ of the bits $C_{k,1},\ldots,C_{k,\delta_k}$ are equal to
one.  Hence
\begin{equation}
d_k\bmod 2=\bigoplus_{c=1}^{\delta_k}C_{k,c}.
\end{equation}
Therefore the product bit is
\begin{equation}
r_k=(u_k+d_k)\bmod 2
=
\left(\bigoplus_{i+j=k}p_{ij}\right)
\xor
\left(\bigoplus_{c=1}^{\delta_k}C_{k,c}\right).
\label{eq:rk-direct}
\end{equation}
All these parity gates have disjoint input sets for different $k$ and can be
applied in parallel.  The resulting bits are XORed directly into the output
register $z$.

\subsection{Threshold Gate Count by Numerical Simulation}
\begin{figure}[t]
    \centering
    \includegraphics[width=0.9\linewidth]{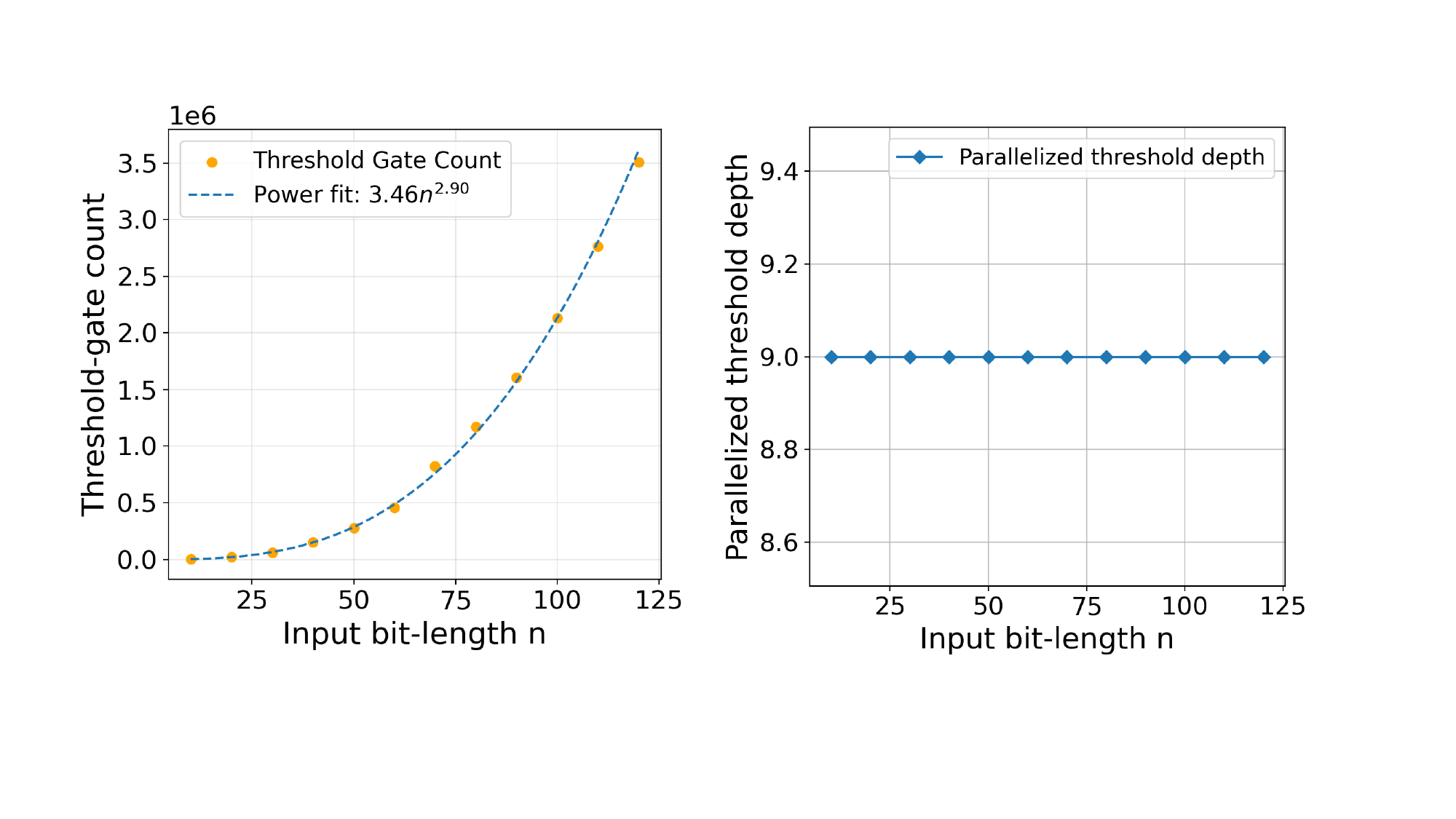}
    \caption{Threshold gate count and depth estimation: We evaluated the threshold-gate count (depth) for
$10\leq n\leq 120$. A least-squares fit of the gate count indicates an approximately $\mathcal O(n^3)$ scaling. Moreover, assuming sufficiently many ancilla qubits, the threshold gates in each stage can be evaluated in parallel, so the resulting circuit depth remains constant.}
    \label{fig:thresholdgate}
\end{figure}

We numerically implemented a resource estimation for the above block-carry multiplier in terms of the threshold-gate count, as shown in Fig.~\ref{fig:thresholdgate}. We do not simulate the quantum state, instead, it follows the logic of our construction. The numerical count is performed at the threshold gate level.  More precisely, we count the threshold gate generated in the stages described above. This includes the two-input predicates for the partial products $p_{ij}=x_i y_j$, the weighted-threshold predicates defining $Q_{b,t,\ell}$, the quotient predicates $\Omega_{b,q}$, the auxiliary predicates used to form $S_b,H_b$ and $E_b$, and the predicates generating the unary carry bits $C_{k,c}$. The final parity layer that writes the output bits is not included in the threshold-gate count. For the depth estimate, we use the same parallelization strategy as in the construction. All predicates within a fixed stage are evaluated in parallel, assuming sufficiently many ancilla qubits. Thus, the multiplier has constant threshold-gate depth.

\section{Generalize to Multi-Input Scenario}
\label{multi}
Here, we briefly discuss how to extend the construction given by Appendix~\ref{App:Multiplier} from two bit-rows to $M=\poly(n)$ bit-rows. Given
$M$ integers $R^{(1)},\ldots,R^{(M)}<2^L$, let
\begin{align}
    L' := L+\lceil \log_2 M\rceil
\end{align}
and pad every row to length $L'$. Write
\begin{align}
    R^{(s)}=\sum_{k=0}^{L'-1} r_k^{(s)}2^k,\qquad
u_k:=\sum_{s=1}^M r_k^{(s)},\qquad
d_0=0,\qquad
d_{k+1}=\Bigl\lfloor \frac{u_k+d_k}{2}\Bigr\rfloor.
\end{align}
Then the $k$th output bit of $\sum_{s=1}^M R^{(s)}$ can be computed by $\eta_k=(u_k+d_k)\bmod 2$. Since $u_k\le M$, one has $d_k\le M-1$ for all $k$. Set
\begin{align}
    R_M:=\lceil\log_2(M+1)\rceil+1,\qquad
W_M:=2^{R_M},
\end{align}
group columns into blocks of length $R_M$, and define the block-prefix sums
\begin{align}
    A_{b,t}:=\sum_{j=0}^{t-1}2^j u_{s_b+j},
\qquad
Q_{b,t,\ell}:=1[A_{b,t}\ge \ell].
\end{align}
Using the same set--hold--reset formulas $S_b,H_b,E_b$ and the same unary carry predicates
$C_{k,c}=1[d_k\ge c]$ as above, with the two-input column-width bounds replaced by the
deterministic bound $M$, all block-entry carries and in-block carries are generated in constant
depth. The output bit is then
\begin{align}
    \eta_k=
\left(\bigoplus_{s=1}^M r_k^{(s)}\right)
\oplus
\left(\bigoplus_{c=1}^{M-1} C_{k,c}\right).
\end{align}
Hence Appendix~\ref{App:Multiplier} yields a uniform polynomial-size constant-depth fan-out circuit
$U_{\mathrm{SUM}}(M,L)$ that coherently XORs $\sum_{s=1}^M R^{(s)}$ into an $L'$-bit target register.

This gives the many-input block $U_{\mathrm{MUL}}(m,n)$ from Appendix G without using a binary
tree of two-input multipliers. Let $h_0,\ldots,h_{m-1}<2^n$ and
\begin{align}
    X:=\prod_{j=0}^{m-1} h_j < 2^{mn}.
\end{align}
Choose pairwise distinct $\mathcal O(\log n)$-bit primes $q_1,\ldots,q_s$ such that
\begin{align}
    Q:=\prod_{\ell=1}^s q_\ell > 2^{mn}.
\end{align}
For each residue channel $q_\ell$, first compute
\begin{align}
    r_{j,\ell}:=h_j \pmod{q_l}
=
\left(\sum_{t=0}^{n-1} h_{j,t}(2^t \bmod q_\ell)\right)\pmod{q_l} ,
\end{align}
so each channel input is obtained by summing selected classical constants and then applying the
fixed-modulus reduction of Appendix~\ref{sec:fixed-modulus-reduction}. If some $r_{j,\ell}=0$, then the channel product is
$R_\ell=0$. Otherwise fix a generator $g_\ell$ of $\mathbb Z_{q_\ell}^{\ast}$, write
\begin{align}
    r_{j,\ell}=g_\ell^{\lambda_{j,\ell}},
\qquad
e_\ell:=\left(\sum_{j=0}^{m-1}\lambda_{j,\ell}\right)\bmod (q_\ell-1),
\end{align}
and set
\begin{align}
    R_\ell = g_\ell^{e_\ell}\bmod q_\ell = X \bmod q_\ell.
\end{align}
Since $q_\ell$ only has $\mathcal O(\log n)$ bits, the maps
$r\mapsto \log_{g_\ell}(r)$ and $e\mapsto g_\ell^e \bmod q_\ell$
are hardwired lookup tables of polynomial size and constant depth, while the sum defining $e_\ell$
is performed by $U_{\mathrm{SUM}}(m,\mathcal O(\log n))$ followed by fixed-modulus reduction.

Finally, with
\begin{align}
    M_\ell:=Q/q_\ell,\qquad N_\ell:=M_\ell^{-1}\pmod{q_\ell},
\end{align}
the unique integer $X<2^{mn}<Q$ satisfying the residue conditions is
\[
X=
\left(\sum_{\ell=1}^s R_\ell M_\ell N_\ell\right)\bmod Q.
\]
Each term summand is a variable times a classical constant, so the final reconstruction is again a
parallel sum of selected classical constants, implemented by $U_{\mathrm{SUM}}$ together with the
fixed-modulus reduction of Appendix~\ref{sec:fixed-modulus-reduction}. Therefore the iterated multiplication block reuses the
Appendix-\ref{App:Multiplier} carry-resolution backend; only the front end is changed. This yields a uniform
polynomial-size constant-depth implementation of $U_{\mathrm{MUL}}(m,n)$ in the fan-out model,
and hence, by Lemma~\ref{lem:compile}, a CRAS implementation.

\section{Proof of Theorem~\ref{lowerbound}}

\begin{theorem}[Sec~3 of Ref~\cite{TopuzogluWinterhof2007}]
\label{thm:brandstatter-winterhof}
Let $F:\mathbb Z\to\{-1,+1\}$ be a periodic sign function with
period $t$, and let $m=\lfloor\log_2t\rfloor$.  For every integer
$0\leq y<2^m$, write
$y=\sum_{i=1}^m2^{i-1}y_i$, where
$(y_1,\ldots,y_m)\in\{0,1\}^m$, and let
$[m]=\{1,\ldots,m\}$.  For each $S\subseteq[m]$, define the
normalized Walsh--Fourier coefficient of the first $2^m$ values of
$F$ by
\[
 \widehat F(S):=
 2^{-m}\sum_{y=0}^{2^m-1}
 F(y)(-1)^{\sum_{i\in S}y_i}.
\]
Define the order-two correlation measure of one period of $F$ by
\[
 C_2(F):=
 \max_{\substack{
       0\leq d_1<d_2<t\\
       1\leq M\leq t-d_2
 }}
 \left|
 \sum_{j=0}^{M-1}
 F(j+d_1)F(j+d_2)
 \right|.
\]
Then
\[
 \max_{S\subseteq[m]}|\widehat F(S)|
 <
 8^{1/4}2^{-m/4}C_2(F)^{1/4}.
\]
\end{theorem}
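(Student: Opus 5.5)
We bound $|\widehat F(S)|^4$ by averaging autocorrelations of $F$. The standard device is to group the $y_i$ with $i\in S$ and to compare $F$ with a shift of itself by a power of two. The approach is the Brandst\"atter--Winterhof one: reduce a large Walsh coefficient to a large order-two correlation via Cauchy--Schwarz, applied twice.

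First fix $S\neq\emptyset$ with maximal $|\widehat F(S)|$. For $S=\emptyset$ the quantity $\widehat F(\emptyset)$ is just a normalized partial sum, and it can be handled by an analogous but simpler argument. Let $i_0=\max S$. Write $y=y'+2^{i_0-1}y_{i_0}+2^{i_0}y''$, where $y'<2^{i_0-1}$ and $y''<2^{m-i_0}$. Then
\[
\widehat F(S)=2^{-m}\sum_{y''}\sum_{y'}(-1)^{\sum_{i\in S\setminus\{i_0\}}y_i}\bigl(F(y'+2^{i_0}y'')-F(y'+2^{i_0-1}+2^{i_0}y'')\bigr).
\]

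Next apply Cauchy--Schwarz over the index $y'$, with the character of absolute value one and $2^{i_0-1}$ terms. Squaring produces sums of products $F(j)F(j+2^{i_0-1})$ and diagonal terms $F(j)^2=1$. This gives roughly
\[
|\widehat F(S)|^2\le 2^{-m}\bigl(1+\text{normalized correlations of }F(j)F(j+2^{i_0-1})\bigr).
\]
The correlations come as contiguous runs of length $2^{i_0-1}$ inside blocks. Since the sum over runs is not a single interval, apply Cauchy--Schwarz a second time: square again and sum over $y''$. The off-diagonal contributions are then controlled by $\sum_j F(j+d_1)F(j+d_2)$ over intervals of length $M\le 2^m\le t$. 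By periodicity these shifts can be reduced so that $0\le d_1<d_2<t$ and $M\le t-d_2$. If necessary, split each interval into at most two pieces and absorb the factor into the constant.

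Collecting terms should give $|\widehat F(S)|^4\le 8\cdot 2^{-m}C_2(F)$, up to diagonal contributions that are dominated because $C_2(F)\ge 1$ (take $M=1$). The strict inequality arises from a strict step in Cauchy--Schwarz or from a strict bound on the diagonal term. The main obstacle is the careful bookkeeping needed to keep the constant at $8$. In particular, one must check that every correlation window obeys the admissibility constraints $d_2<t$ and $M\le t-d_2$ in the definition of $C_2$. If this fails, I would first try reindexing so that all windows start at $d_1=0$, using periodicity to fold shifts modulo $t$. Since the statement is cited from Ref.~\cite{TopuzogluWinterhof2007}, one could also simply import it, but this is the route I would follow to verify it.
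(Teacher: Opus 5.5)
The paper does not prove this statement; it only cites Ref.~\cite{TopuzogluWinterhof2007}. Your argument therefore has to stand on its own, and it has a genuine gap: the choice of where to cut $y$. Cutting at $i_0=\max S$ lets $S$, not $C_2(F)$, fix the window length $2^{i_0-1}$, and then no bound uniform in $S$ can come out. There are two ways to read your first step, and both fail.

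\begin{itemize}
\item If the first Cauchy--Schwarz runs over the whole double sum, as your description (products $F(j)F(j+2^{i_0-1})$, diagonal $F(j)^2=1$) suggests, there are $2^{m-1}$ terms. You get $\widehat F(S)^2\le \tfrac12-2^{-m}\sum_j F(j)F(j+2^{i_0-1})$, which is about $1/2$ precisely for sequences with small correlations. The diagonal carries weight $1/2$, not $2^{-m}$.
\item If you apply it over $y'$ only, keeping the $y''$-sum inside, you get a bound of the form $2^{i_0-1-m}+2^{1-i_0}C_2(F)$ up to lower-order terms. This is trivial for $S=\{m\}$ (first term $1/2$) and for $S=\{1\}$ (second term $C_2(F)$).
\end{itemize}

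Your second Cauchy--Schwarz over $y''$ introduces no new parameter, so it cannot repair either reading. In the second reading, each inner sum over $y'$ is already a single admissible window bounded by $C_2(F)$. For the same reason, the diagonal is not ``dominated because $C_2(F)\ge 1$''; it is controlled only after a balancing you never perform.

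The missing idea is that the character factorizes at every cut, not just at $\max S$. Write $\chi_T(u)=(-1)^{\sum_{i\in T}u_i}$. For any $0\le k\le m$, put $y=u+2^kv$ with $u<2^k$ and $v<2^{m-k}$. Then $\chi_S(y)=\chi_{S\cap[k]}(u)\,\chi_{S'}(v)$ with $S'=\{i-k: i\in S,\ i>k\}$. One Cauchy--Schwarz over $u$ gives
\[
2^{2m}\widehat F(S)^2\le 2^k\sum_{v_1,v_2<2^{m-k}}\Bigl|\sum_{u=0}^{2^k-1}F(u+2^kv_1)F(u+2^kv_2)\Bigr|\le 2^k\bigl(2^m+2^{m-k}(2^{m-k}-1)\,C_2(F)\bigr).
\]
For $v_1<v_2$, the window $d_1=2^kv_1<d_2=2^kv_2$, $M=2^k$ satisfies $d_2+M\le 2^m\le t$. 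So no periodic folding or splitting is needed (splitting would in any case spoil the constant $8$), and $S=\emptyset$ is not special. Hence $\widehat F(S)^2<2^{k-m}+2^{-k}C_2(F)$.

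If $8C_2(F)>2^m$, the claim is trivial since $|\widehat F(S)|\le 1$. Otherwise choose $k$ with $\sqrt{2^{m-1}C_2(F)}\le 2^k<\sqrt{2^{m+1}C_2(F)}$, which forces $0\le k\le m-2$. Both terms are then at most $\sqrt2\,2^{-m/2}C_2(F)^{1/2}$, with the first strictly smaller, so $\widehat F(S)^4<8\cdot 2^{-m}C_2(F)$. A single Cauchy--Schwarz with the cut tuned to $C_2(F)$ rather than to $S$ is the whole proof.
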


\begin{proof}[Proof of Theorem~\ref{lowerbound}]
Let $m=\lfloor\log_2p\rfloor=n-1$.  We identify each integer
$0\le y<2^m$ with its binary expansion
$y=(y_1,\ldots,y_m)\in\{0,1\}^m$, where
$y=\sum_{i=1}^m2^{i-1}y_i$.  For every nonzero residue $y$, let
$\operatorname{ind}_g(y)\in\{1,\ldots,p-1\}$ be determined by
$g^{\operatorname{ind}_g(y)}=y\pmod p$.  In particular,
$\operatorname{ind}_g(1)=p-1$.

We use the following sign function on $\{0,\ldots,p-1\}$:
\[
F(y)=
\begin{cases}
+1,
 &y=0\ \text{or}\
  1\le \operatorname{ind}_g(y)\le (p-1)/2,\\
-1,
 &(p+1)/2\le \operatorname{ind}_g(y)\le p-1.
\end{cases}
\]
The value $F(0)=+1$ is fixed independently of the circuit.  In the
Fourier part of the proof, $F$ denotes its restriction to
$0\le y<2^m$.

For this $p$-term sign sequence, let
\[
C_2(F):=
\max_{\substack{0\le d_1<d_2<p\\L\ge1,\;L+d_2\le p}}
\left|
\sum_{t=0}^{L-1}F(t+d_1)F(t+d_2)
\right|.
\]
Apply Gyarmati's theorem~\cite{gyarmati2004family} to
$F(1),\ldots,F(p-1)$, gives
$C_2(F(1),\ldots,F(p-1))
\le320p^{1/2}(\ln p)^3$.
Adding the fixed first value $F(0)$ affects any admissible
correlation sum in at most one summand.  Therefore
\[
C_2(F)\le320p^{1/2}(\ln p)^3+1.
\]

We next introduce the Fourier notation used below.  Write
$[m]=\{1,\ldots,m\}$.  For any real function
$Q:\{0,1\}^m\to\mathbb R$ and $S\subseteq[m]$, let
\[
\widehat Q(S)
:=
2^{-m}\sum_{y=0}^{2^m-1}
Q(y)(-1)^{\sum_{i\in S}y_i},
\qquad
\deg(Q):=
\max\{|S|:\widehat Q(S)\ne0\}.
\]
This Fourier degree is also the degree of the unique multilinear
polynomial representing $Q$ on the Boolean cube. Furthermore, Theorem~\ref{thm:brandstatter-winterhof} gives
\[
\max_{S\subseteq[m]}|\widehat F(S)|
\le
8^{1/4}2^{-m/4}C_2(F)^{1/4}
\le
8^{1/4}2^{-m/4}
\left(320p^{1/2}(\ln p)^3+1\right)^{1/4}.
\]
Since $2^m<p<2^{m+1}$, we have
$p^{1/8}=\Theta(2^{m/8})$ and $\ln p=\Theta(m)$.  Hence
\[
\max_{S\subseteq[m]}|\widehat F(S)|
=
\mathcal O\!\left(2^{-m/8}m^{3/4}\right).
\]

Let $P(y)$ be the signed output bias of the $\CRAS$ on input $y$:
\[
P(y)
:=
\Pr[+1| y]
-
\Pr[-1|y].
\]
Thus $|P(y)|\le1$.  Every $1\le y<2^m$ is a legal group element,
and success probability at least $2/3$ gives
$F(y)P(y)\ge1/3$.  At $y=0$, where no correctness condition is
imposed, we only have $F(0)P(0)\ge-1$.  For the normalized inner
product
$\langle Q_1,Q_2\rangle
=2^{-m}\sum_{y=0}^{2^m-1}Q_1(y)Q_2(y)$, it follows that
\[
\langle F,P\rangle
\ge
2^{-m}\left(\frac{2^m-1}{3}-1\right)
=
\frac13-\frac43\,2^{-m}.
\]

Meanwhile, Walsh orthogonality gives
$\langle F,P\rangle
=\sum_{|S|\le {\rm deg}(P)}\widehat F(S)\widehat P(S)$, while Parseval's identity
and $|P(y)|\le1$ give
$\sum_S\widehat P(S)^2
=2^{-m}\sum_yP(y)^2\le1$.
Cauchy--Schwarz therefore implies
\[
\frac13-o(1)
\le
\mathcal O\!\left(2^{-m/8}m^{3/4}\right)
\left(\sum_{j=0}^{{\rm deg}(P)}\binom mj\right)^{1/2},
\]
and hence
\[
\sum_{j=0}^{{\rm deg}(P)}\binom mj
=
\Omega\!\left(\frac{2^{m/4}}{m^{3/2}}\right).
\]

Let $H_2(t)=-t\log_2t-(1-t)\log_2(1-t)$ be the binary entropy
function. If ${\rm deg}(P)\le m/2$, the standard entropy bound
$\sum_{j=0}^{{\rm deg}(P)}\binom mj\le2^{mH_2({\rm deg}(P)/m)}$ yields, given the fact that
$H_2({\rm deg}(P)/m)\ge1/4-\mathcal O((\log m)/m)$. If ${\rm deg}(P)>m/2$, the following
conclusion is already automatic. Since $H_2$ is increasing on
$[0,1/2]$, we obtain
\begin{align}
    {\rm deg}(P)\ge
\bigl(H_2^{-1}(1/4)-o(1)\bigr)m,
\end{align}
where $H_2^{-1}$ denotes the inverse on $[0,1/2]$, $H_2^{-1}(1/4)\approx 0.041$ and $o(1)$
tends to zero as $p\to\infty$.

Finally, it remains to upper-bound $\deg(P)$ using the structure of $\CRAS$. Let $R$ be the number of adaptive rounds before the final
single-qubit readout.  Let $a$ be the auxiliary-qubit budget, with
the convention that every intermediate measurement available to the
classical controller is performed on these auxiliary qubits, each
auxiliary qubit produces at most one raw outcome per round. Thus, along every branch, the controller receives at most $Ra$ intermediate measurement bits,
counted before any classical processing or compression.  

Along a
fixed branch, the $Ra$ intermediate measurements and the final
output measurement provide at most $Ra+1$ single-qubit measurement
sites. Propagating any one of these sites backwards through a circuit
of two-qubit depth $d$ reaches at most $2^d$ input
qubits. The size of the involved branch therefore depends on at most
$(Ra+1)2^d$. Finally, $P$ is a signed sum of the branch
probabilities, and a sum of functions cannot have Fourier degree
larger than the maximum degree of its summands.  Hence
\[
    \deg(P)\leq (Ra+1)2^d .
\]

Combining the two bounds on $\deg(P)$ and using $m=n-1$, we find
\[
(Ra+1)2^d
\ge
\bigl(H_2^{-1}(1/4)-o(1)\bigr)(n-1),
\]
which proves the claimed lower bound.
\end{proof}

\section{Compilation to a two-dimensional architecture}
\label{app:2d-crossbar}

The arithmetic circuit admits an explicit row-column placement on a
two-dimensional nearest-neighbor lattice. Logical data qubits are arranged in
vertical columns, and each reversible Boolean update in a parallel arithmetic
layer is assigned a geometrically local block of ancilla qubits. Measurement-based fan-out
along the columns creates the required computational-basis copies of the source
qubits in that row. A clean one-dimensional nearest-neighbor dynamic subroutine
then evaluates the corresponding Boolean function along the row, and the
resulting bit is written back to the target qubit along its column. Hence the
compiled arithmetic circuit uses only nearest-neighbor coherent quantum
operations in two dimensions, with global classical feedforward as the only
nonlocal resource. The Fourier-transform blocks are handled by the
nearest-neighbor compilation theorem of Ref.~\cite{Rosenbaum2013}. Thus the
full construction has a two-dimensional geometrically local implementation,
while retaining only the polynomial-time classical controller allowed by the
adaptive model.

\begin{definition}[Two-dimensional geometric CRAS]
Let $\Lambda_n$ be a rectangular subgraph of the square lattice with
$|\Lambda_n|=\operatorname{poly}(n)$.  A polynomial-time uniform adaptive
circuit family belongs to $\mathrm{gCRAS}_2$ if all qubits occupy vertices of
$\Lambda_n$, every coherent two-qubit gate acts on a nearest-neighbour edge,
the total coherent quantum depth and the number of adaptive rounds are both
$\mathcal O(1)$, and intermediate single-qubit measurements, resets, and a global
polynomial-time classical controller are allowed.  
\end{definition}

\begin{lemma}[Two-dimensional crossbar for parallel controlled rotations]
\label{lem:rotation-crossbar}
Let $x_1,\ldots,x_m$ be control qubits and $q_1,\ldots,q_K$ be target qubits.
For polynomial-time computable angles $\theta_{ik}$, define
\begin{equation}
 U_{\Theta}
 :=\prod_{i=1}^{m}\prod_{k=1}^{K}
 \left(
 |0\rangle\!\langle0|_{x_i}\otimes I_{q_k}
 +|1\rangle\!\langle1|_{x_i}\otimes R_z(\theta_{ik})_{q_k}
 \right).
 \label{eq:parallel-controlled-rotations}
\end{equation}
There is a polynomial-time uniform two-dimensional nearest-neighbour dynamic
circuit that implements $U_{\Theta}$ exactly relative to the specified
controlled rotations as elementary two-qubit gates, uses $\mathcal O(mK)$
data-and-routing sites up to a constant factor, and has $\mathcal O(1)$ coherent depth
and $\mathcal O(1)$ adaptive rounds. 
\end{lemma}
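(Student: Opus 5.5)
The plan is to use the fact that all the controlled rotations $R_z(\theta_{ik})$ are diagonal in the computational basis, so they commute. The order of application is therefore irrelevant, and we may group them freely. We lay out a two-dimensional $m\times K$ crossbar. Row $i$ carries the copies of control $x_i$, column $k$ carries the copies of target $q_k$, and site $(i,k)$ hosts one control copy and one target copy placed next to each other.

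\textbf{Copying the controls.} We first distribute the control information along each row. A computational-basis fan-out of $x_i$ into $K$ fresh ancillas along row $i$ is a CNOT ladder on a nearest-neighbour line. By Lemma~\ref{fact:BW} it can be realized in constant depth with parallel mid-circuit measurements and one round of feedforward. The measurement-based gadgets are native to a 1D chain of neighbours, so the construction stays geometrically local. Every row is treated in parallel.

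\textbf{Handling the targets.} The targets cannot be copied, since they carry arbitrary quantum information. We use the phase-kickback/parity structure instead. The product over $i$ of $\mathrm{diag}$ phases on $q_k$ equals $R_z\bigl(\sum_i x_i\theta_{ik}\bigr)$ applied to $q_k$. Equivalently, it is a phase $e^{\mp i\sum_i x_i\theta_{ik}/2}$ conditioned on the $Z$-value of $q_k$.

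We fan out the $Z$-basis value of $q_k$ along column $k$ into $m$ ancillas, again using the Lemma~\ref{fact:BW} gadget in the vertical direction. This produces the GHZ-type embedding $\alpha|0\rangle|0^m\rangle+\beta|1\rangle|1^m\rangle$, which is an isometry on the target's logical information. At site $(i,k)$ we then apply the elementary controlled rotation between the control copy and the target copy. Because every gate is diagonal and the copies are perfectly correlated in the computational basis, the accumulated phase is exactly $\sum_i x_i\theta_{ik}$ on the logical $q_k$.

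\textbf{Uncomputing the copies.} We then undo both fan-outs. Each measurement-based uncomputation is done by measuring the copies in the $X$ basis and applying a Pauli-$Z$ correction on the source qubit, with the parity computed by the classical controller. This costs one more adaptive round. All steps are diagonal-compatible, so the corrections commute through the phases.

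\textbf{Resource count.} The total is a constant number of rounds and constant coherent depth. The site count is $\mathcal O(mK)$, since each crossing uses $\mathcal O(1)$ sites plus routing.

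\textbf{Main obstacle.} The hardest part is the geometric one: making sure the row fan-outs and the column fan-outs cross without conflict on the square lattice. We would give each crossing a constant-size $2\times2$ or $3\times3$ tile containing separate row and column wires, so that the two chains never share a vertex. We would also verify that the Baeumer--Woerner ladder works on a line of tiles with only nearest-neighbour CNOTs. A secondary check is that the Pauli byproduct corrections from the fan-out gadgets, which are $X$-type on copies, are applied before the diagonal rotations. They would not commute with the rotations otherwise; they do commute with the later $Z$-corrections. Ordering the feedforward rounds as fan-out, correct, rotate, then uncompute handles this.
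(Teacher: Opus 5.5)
Your overall scheme matches the paper's, and your phase bookkeeping is correct. You fan out each $x_i$, expand each $q_k$ into the cat state $\alpha|0\rangle|0^m\rangle+\beta|1\rangle|1^m\rangle$, apply the nearest-neighbour controlled $R_z(\theta_{ik})$ cell by cell, and uncompute.

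The gap is exactly at the step you call the main obstacle. Constant-size tiles containing ``separate row and column wires, so that the two chains never share a vertex'' cannot exist on the square lattice, because the lattice is planar. Suppose the fan-out chains (copies plus route ancillas) for $x_1,\dots,x_m$ and for $q_1,\dots,q_K$ were pairwise vertex-disjoint connected sets. Suppose also that each control chain had a site adjacent to a site of each target chain, which the local rotation in cell $(i,k)$ requires. Contracting every chain to a point would then exhibit $K_{m,K}$ as a minor of a planar graph, which is impossible for $m,K\ge 3$. The choice of a $2\times2$ or $3\times3$ tile does not affect this.

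The missing idea is time multiplexing, which is what the paper's proof uses as its crossing gadget. The two families of paths \emph{must} share vertices, but never in the same dynamic stage. After the first fan-out, its route ancillas are measured and reset. The second fan-out then reuses exactly those freed sites. The persistent control copies must be placed so that they never block the second family of paths.

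Concretely, the paper uses the coordinates $x_i=(2i,1)$, $Q_{ik}=(2i,2k)$, $X_{ik}=(2i,2k+1)$ and $q_k=(0,2k)$. In this layout, $Q_{ik}$ first serves as a route ancilla of the vertical fan-out of $x_i$, whose system sites are $X_{ik}$. It is then reset and becomes the system target (the copy of $q_k$) of the horizontal fan-out, whose route ancillas are $(2i-1,2k)$. Finally it sits next to the persistent control copy $X_{ik}$ for the cell rotation. Since the $X_{ik}$ have odd second coordinate, they never obstruct the horizontal paths. With this layout the rest of your argument goes through.

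Your uncomputation is valid and slightly simpler than the paper's second invocation of both multi-target CNOTs: you measure all copies in the $X$ basis and apply $Z^{\mathrm{parity}}$ corrections to $x_i$ and $q_k$. It avoids re-running the crossing schedule in reverse. However, it does not remove the need for the multiplexed crossing in the forward fan-outs.
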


\begin{figure*}[t]
  \centering
  \includegraphics[width=\textwidth]{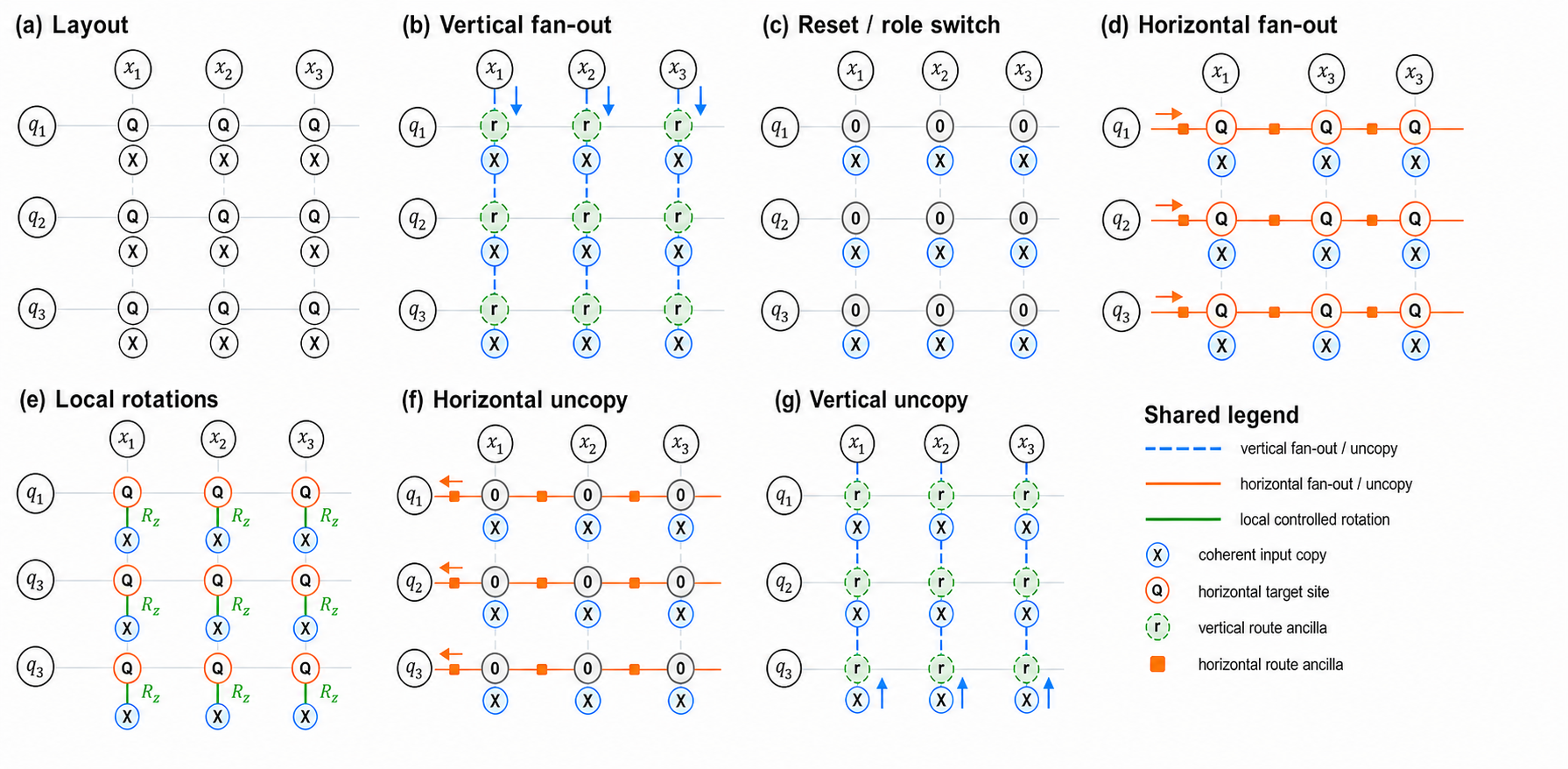}
  \caption{Seven-stage two-dimensional crossbar implementation of one
  parallel Hamming-weight-rotation layer.  (a) The crossbar contains adjacent
  sites $Q_{ik}$ and $X_{ik}$ at each cell.  (b) Vertical fan-out coherently
  copies $x_i$ to all $X_{ik}$ while the $Q_{ik}$ sites serve as route
  ancillas.  (c) Every $Q_{ik}$ is reset before changing roles.  (d)
  Horizontal fan-out expands each target $q_k$ into a cat state on
  $Q_{1k},\ldots,Q_{mk}$.  (e) The disjoint nearest-neighbour cell edges apply controlled
  $R_z(\theta_{ik})_{Q_{ik}}$ in parallel.  (f) Horizontal
  uncopy returns the $Q_{ik}$ sites to zero and leaves the accumulated rotation
  on $q_k$.  (g) Vertical uncopy erases the $X_{ik}$ copies.  Blue dashed rails
  represent vertical fan-out, orange solid rails represent horizontal
  fan-out, and green edges represent local controlled rotations.  The two rail
  families are never active simultaneously.  The small orange squares shown
  only in panels (d) and (f) are measured horizontal route ancillas and carry
  no logical data.}
  \label{fig:crossbar-seven-stages-static}
\end{figure*}

\begin{proof}
We use the one-dimensional dynamic fan-out construction of
Ref.~\cite[Sec.~II.B]{BaeumerWoerner2025}, in which system and route-ancilla qubits
alternate along a line.  It is an exact constant-depth adaptive implementation
of the multi-target CNOT; its route ancillas are measured, the prescribed
classical corrections are applied, and the route sites may be reset at the
end.  Since the multi-target CNOT is self-inverse, a second invocation with
clean route ancillas implements its inverse.

The crossing layout can be given explicitly.  Use lattice coordinates
\begin{equation}
 x_i=(2i,1),\qquad Q_{ik}=(2i,2k),\qquad
 X_{ik}=(2i,2k+1),\qquad q_k=(0,2k).
 \label{eq:crossbar-explicit-coordinates}
\end{equation}
For each fixed $i$, the vertical path is
$x_i,Q_{i1},X_{i1},Q_{i2},X_{i2},\ldots,Q_{iK},X_{iK}$.
Thus its system qubits are precisely $x_i,X_{i1},\ldots,X_{iK}$ and its
alternating route ancillas are precisely $Q_{i1},\ldots,Q_{iK}$.  For each
fixed $k$, the horizontal path starts at $q_k=(0,2k)$, uses the odd-coordinate
vertices $(2i-1,2k)$ as route ancillas, and uses the even-coordinate vertices
$Q_{ik}=(2i,2k)$ as its system targets.  Finally, $X_{ik}$ and $Q_{ik}$ are
nearest neighbours.  The two roles of $Q_{ik}$ occur in different stages: at
the end of a vertical fan-out every $Q_{ik}$ is measured and reset to
$|0\rangle$; after the inverse horizontal fan-out it is again $|0\rangle$ and
is available for the inverse vertical fan-out.  This time multiplexing, rather
than a simultaneous crossing of quantum paths, is the crossing gadget.

First, for every $i$, apply vertical fan-out from $x_i$ to
$X_{i1},\ldots,X_{iK}$.  All vertical paths are vertex-disjoint and run in
parallel.  Second, for every $k$, apply horizontal fan-out from $q_k$ to
$Q_{1k},\ldots,Q_{mk}$.  All horizontal paths are vertex-disjoint and run in
parallel.  Third, apply the nearest-neighbour gates
$\Lambda_{X_{ik}}R_z(\theta_{ik})_{Q_{ik}}$ at all cells simultaneously.
Fourth, invoke the horizontal multi-target CNOTs again to uncopy the $Q_{ik}$,
and fifth invoke the vertical multi-target CNOTs again to uncopy the $X_{ik}$.

To verify the logical action, fix a computational-basis value
$x=(x_1,\ldots,x_m)$.  Horizontal fan-out maps an arbitrary target state
$\alpha_k|0\rangle+\beta_k|1\rangle$ to a cat state.  The cell rotations
multiply its $|1\cdots1\rangle$ component by
$\exp(i\sum_i x_i\theta_{ik})$.  Horizontal unfan-out therefore leaves
$R_z(\sum_i x_i\theta_{ik})q_k$ and returns its target copies to zero.
Vertical uncopy then returns all $X_{ik}$ to zero.  This is exactly the action
of Eq.~\eqref{eq:parallel-controlled-rotations}; linearity gives the result for
arbitrary superpositions of the controls.  Each fan-out or unfan-out has
constant dynamic depth, and the local controlled rotations form one coherent
layer, proving the resource bounds. We summarize the construction step in Fig.~\ref{fig:crossbar-seven-stages-static}.
\end{proof}

\begin{lemma}[2D construction of Threshold gate]
\label{lem:geometric-HS-threshold}
Let $m=\operatorname{poly}(n)$, let $\tau\in\{0,\ldots,m\}$. There is a
polynomial-time uniform rectangular 2D dynamic
circuit implementing for
$\operatorname{TH}_{m,\tau}$. It non-trivially performs on ${\rm poly}(n)$ qubits, and its coherent depth and number of
adaptive rounds are $\mathcal O(1)$. The same statement holds for non-negative integer weighted thresholds
$\mathbf 1[\sum_i w_i x_i\ge\tau]$ whenever
$\sum_i w_i=\operatorname{poly}(n)$.
\end{lemma}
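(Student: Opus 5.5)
The plan is to follow the Høyer--Špalek route from threshold to fan-out and route every fan-out through the 2D crossbar of Lemma~\ref{lem:rotation-crossbar}. Recall how \cite{HoyerSpalek2005} builds a constant-depth threshold gate from fan-out. They compute the Hamming weight $h=\sum_i x_i$ coherently through phase kickback. For a target qubit $q$ prepared in $\ket{+}$, the operation $R_z(\theta)^{\,h}$ is a product of controlled rotations $\Lambda_{x_i}R_z(\theta)$. This is exactly the operator $U_\Theta$ of Eq.~\eqref{eq:parallel-controlled-rotations}, with $\theta_{ik}$ independent of $i$, or equal to $w_i\theta_k$ in the weighted case. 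With targets $q_k$ and angles $\theta_k=2\pi/2^k$, applying a Hadamard afterwards writes the bit $\ind{h\bmod 2^k\ge 2^{k-1}}$, up to the standard approximate-QFT phase estimation. Taking $k=1,\ldots,\lceil\log_2(m+1)\rceil$, or $\lceil\log_2(\sum_i w_i+1)\rceil$ in the weighted case, together with the constant-depth inverse QFT of Lemma~\ref{fact:HS}(iii), gives a register holding the binary encoding of $h$. Comparing this $\mathcal O(\log n)$-bit register with the classical constant $\tau$ is a function of only $\mathcal O(\log n)$ bits. It is therefore a polynomial-size DNF, i.e.\ a single layer of ANDs feeding one OR.

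The key steps, in order, are as follows.

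First, lay out the input qubits $x_1,\ldots,x_m$ and the $K=\mathcal O(\log n)$ counter targets on the crossbar coordinates of Eq.~\eqref{eq:crossbar-explicit-coordinates}. Apply Lemma~\ref{lem:rotation-crossbar} once. This implements all controlled rotations with constant coherent depth and constant adaptive rounds.

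Second, implement the binary-to-comparison logic. Use $\mathcal O(\log n)$-input AND gates. Each of these is the fan-out parity construction conjugated by Hadamards, combined with the standard constant-depth fan-out AND/OR of \cite{HoyerSpalek2005}. Each such gate again reduces to a family of parallel controlled phases, so it can reuse the crossbar lemma on a fresh rectangular block. Then apply one final OR, built the same way, and XOR its output into $b$.

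Third, uncompute the counter and all ancillas by running the inverse sequence. Each stage is a constant number of crossbar invocations placed in disjoint rectangular blocks, which are stacked side by side. The total site count is $\mathcal O(mK)+\mathrm{poly}(n)=\mathrm{poly}(n)$, and depth and rounds are $\mathcal O(1)$. For the weighted statement, handle $\sum_i w_ix_i$ in one of two ways. One option is to use angles $w_i\theta_k$ directly in $U_\Theta$. The other is to first fan out each $x_i$ vertically to $w_i$ copies, which keeps the site count polynomial because $\sum_i w_i=\mathrm{poly}(n)$.

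The main obstacle is moving data between blocks. The outputs of one stage, such as the counter bits sitting at the $q_k$ positions, must be brought next to the inputs of the next stage without long-range coherent gates. I would first try to avoid SWAP chains by routing such bits with the Baeumer--Wörner nearest-neighbour fan-out/teleportation along a 1D path of route ancillas (Lemma~\ref{fact:BW}, Sec.~II.B of \cite{BaeumerWoerner2025}). This costs constant depth and one extra feedforward round per transfer. The delicate points are that these routing paths must be vertex-disjoint, or time-multiplexed as in the crossbar proof, and that the approximate QFT error is kept inverse-polynomial, consistent with the Added Note.
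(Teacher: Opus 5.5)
There is a genuine gap at the inverse-QFT step.

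\textbf{The gap.} You compute the Hamming weight $h$ in binary by phase kickback and then invoke the constant-depth $\mathrm{QFT}^{-1}_{2^K}$ of Lemma~\ref{fact:HS}(iii). In your list of key steps this stage silently disappears, and your claim that every stage is ``a constant number of crossbar invocations'' does not cover it. $\mathrm{QFT}^{-1}_{2^K}$ is not a layer of commuting controlled phases:
\begin{itemize}
\item in the textbook circuit the Hadamards interleave with the controlled rotations, so the depth grows with $K$;
\item the semiclassical measure-and-correct version needs $\Theta(K)=\Theta(\log n)$ feedforward rounds;
\item the H{\o}yer--{\v S}palek constant-depth QFT goes through Cleve--Watrous-type arithmetic (iterated addition and multiplication) built from threshold gates, as the paper's Lemma~\ref{fact:HS} itself notes.
\end{itemize}
So realizing that QFT in 2D nearest-neighbour form presupposes the very threshold lemma you are proving. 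Otherwise you need an external nearest-neighbour QFT compilation, which you do not invoke.

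A related error: a Hadamard on the target carrying phase $e^{2\pi i h/2^k}$ does not write $\ind{h\bmod 2^k\ge 2^{k-1}}$ for $k\ge 2$. It outputs $1$ only with probability $\sin^2(\pi h/2^k)$. The bits of $h$ appear only after the full, entangling inverse transform, which is exactly the step that is not a crossbar.

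\textbf{How to fix it.} The counter-plus-QFT detour is unnecessary. Your second step already assumes that the H{\o}yer--{\v S}palek unbounded fan-in AND/OR gates can be laid out with crossbars because they are built from Hamming-weight rotations. Those gates are themselves threshold gates ($\OR_m=\Thr_{m,1}$). So apply the same machinery directly to the inputs, which is what the paper does.
\begin{itemize}
\item For each $t\in\{\tau,\ldots,W\}$, prepare qubits $HR_z(\phi_k(S(x)-t))H\ket{0}$. This is one crossbar $R_1$ whose rows are indexed by $(t,k)$; the shift $-t\phi_k$ is a local unconditional rotation.
\item Apply a second Hamming-weight-rotation crossbar $R_{2,t}$ on that $t$'s private qubits to obtain $\mathrm{exact}[t]$.
\item Compute $\ind{S(x)\ge\tau}=\bigoplus_{t=\tau}^{W}\mathrm{exact}[t](x)$, which holds because at most one $\mathrm{exact}[t]$ equals $1$. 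This parity is a single Hadamard-conjugated fan-out strip.
\end{itemize}
The data-movement problem you flag is then handled by layout rather than by teleportation chains. $R_1$ is reflected so that each $t$'s intermediate qubits form a consecutive boundary interval. The $R_{2,t}$ occupy disjoint slabs attached to those intervals. Padding places every output $e_t$ on a common boundary that carries the parity strip. Your treatment of the weighted case (angles $w_i\phi_k$, or $w_i$ read-only copies) carries over unchanged.
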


\begin{proof}
Set $S(x)=|x|$ and $W=m$ in the unweighted case, and set
$S(x)=\sum_iw_ix_i$ and $W=\sum_iw_i$ in the weighted case.
If $\tau>W$, the predicate is identically zero and the identity oracle proves
the claim, so assume $0\le\tau\le W$ below.
Expand the threshold gate using the construction of
Ref.~\cite{HoyerSpalek2005}.  For every
$t\in\{\tau,\tau+1,\ldots,W\}$, the construction of
$\operatorname{exact}[t]$ first prepares polynomially many qubits of the form
\begin{equation}
  H R_z\!\left(\phi_k\bigl(S(x)-t\bigr)\right)H|0\rangle,
  \label{eq:HS-first-rotation-layer}
\end{equation}
then applies a second Hamming-weight rotation to these qubits.  The first part
is a family of parallel rotations controlled by the original input bits, and
therefore has the form of Lemma~\ref{lem:rotation-crossbar}, with rows indexed
by the pair $(t,k)$; the term $-\phi_k t$ is a local unconditional rotation on
the corresponding target.  The second part is another crossbar layer whose controls
are the private intermediate qubits belonging to that value of $t$.
Different values of $t$ use disjoint workspaces and are evaluated in parallel.
The parity of
$\operatorname{exact}[\tau],\ldots,\operatorname{exact}[W]$ is computed by the
standard Hadamard conjugation of fan-out.  For a non-negative weighted
threshold, the first rotation layer replaces the cell angle $\phi_k$ by
$w_i\phi_k$; equivalently, one may first fan out $x_i$ into $w_i$ read-only
copies and invoke the unweighted construction on $W$ inputs.  Since
$W=\operatorname{poly}(n)$, both descriptions retain polynomial size and
constant logical depth.

It remains to compose these rotation layers into one fixed
two-dimensional threshold block.  Let $K_t$ denote the number of intermediate
qubits used by the selected $\operatorname{exact}[t]$ circuit.  Construct one
rectangular crossbar $R_1$ whose columns are indexed by the original input
qubits $x_i$ and whose rows are indexed by all pairs $(t,k)$, with
$t\in\{\tau,\ldots,W\}$ and $k\in\{1,\ldots,K_t\}$.  At the cell
$(i,t,k)$ apply the controlled rotation with angle $\phi_k$ in the unweighted
case and $w_i\phi_k$ in the weighted case; the shift
$R_z(-t\phi_k)$ is applied locally to the corresponding target.  Order the
rows lexicographically by $(t,k)$ and reflect the layout of
Lemma~\ref{lem:rotation-crossbar} so that the surviving intermediate qubits
$y_{t,k}$ lie on the right boundary of $R_1$.  The qubits belonging to each
fixed value of $t$ then form one consecutive boundary interval.  The vertical
and horizontal fan-out paths inside $R_1$ are activated in different dynamic
stages, exactly as in Lemma~\ref{lem:rotation-crossbar}, and hence no two
crossing quantum paths are simultaneously active.

For each $t$, attach to the corresponding boundary interval a rotated
crossbar $R_{2,t}$ whose controls are the qubits $y_{t,k}$ and whose targets
are the target qubits of the second Hamming-weight rotation in the
$\operatorname{exact}[t]$ circuit.  The rectangles $R_{2,t}$ occupy disjoint
horizontal slabs, so all values of $t$ are evaluated in parallel.  Any final
Hadamard gates, output negations, or constant-size bounded-fan-in
post-processing in the selected source circuit are placed inside the same
private slab.  Pad the slabs with unused lattice sites so that the resulting
output bit $e_t$ of every $\operatorname{exact}[t]$ circuit lies on a common
vertical boundary.  Thus the output of $R_1$ enters the second-layer
crossbars through adjacent boundary ports, and no source-to-block routing path
crosses the interior of another block.

Reserve an alternating system--route-ancilla strip along this common output
boundary.  Place $e_\tau,\ldots,e_W$ at designated system sites, place an
answer qubit $a$ at one endpoint, and initialize every unused system site on
the strip to $|0\rangle$.  Hadamard conjugation of the exact
one-dimensional dynamic fan-out gadget implements
\begin{equation}
  |e_\tau,\ldots,e_W\rangle|a\rangle
  \longmapsto
  |e_\tau,\ldots,e_W\rangle
  \left|a\oplus\bigoplus_{t=\tau}^{W}e_t\right\rangle .
  \label{eq:geometric-threshold-parity}
\end{equation}
The additional zero system sites are unchanged by this parity operation.
For the ideal exact predicates,
\begin{equation}
  \bigoplus_{t=\tau}^{W}\operatorname{exact}[t](x)
  =\mathbf 1[S(x)\ge\tau],
  \label{eq:exact-parity-is-threshold}
\end{equation}
because exactly one of these predicates is nonzero when $S(x)\ge\tau$, and
none is nonzero when $S(x)<\tau$.  Place the oracle target qubit $b$ adjacent
to $a$ and apply the nearest-neighbour gate
$\mathrm{CNOT}_{a\rightarrow b}$.  The parity strip, all rectangles
$R_{2,t}$, and finally $R_1$ are then run in reverse order, using freshly
reset route ancillas. 

At every dynamic stage the active paths are vertex-disjoint: they are either
the columns of $R_1$, the rows of $R_1$, paths contained in the mutually
disjoint rectangles $R_{2,t}$, or the single parity strip.  Boundary data
shared by consecutive blocks are never acted on by two gadgets
simultaneously.  Each crossbar and the parity strip has constant dynamic
depth, all $R_{2,t}$ run in parallel, and reversing the schedule changes the
depth by only a constant factor.  If $L_t$ denotes the number of targets in
the second Hamming-weight-rotation layer for value $t$, the occupied area is
bounded, up to a constant factor, by
\begin{equation}
  m\sum_{t=\tau}^{W}K_t
  +\sum_{t=\tau}^{W}K_tL_t
  +\sum_{t=\tau}^{W}(K_t+L_t).
  \label{eq:geometric-threshold-area}
\end{equation}
This quantity is polynomial in $n$, and padding the construction to its
rectangular bounding box remains polynomial.  The ordering of the boundary
ports, all lattice coordinates, and all rotation angles are computable in
polynomial time.  Hence the complete threshold block has a polynomial-time
uniform two-dimensional nearest-neighbour realization with polynomial area,
$\mathcal O(1)$ coherent depth, and $\mathcal O(1)$ adaptive rounds.
\end{proof}

\begin{lemma}[Explicit 2D compilation of arithmetic layers]
\label{lem:2d-clean-row-placement}
Consider a polynomial-time uniform, polynomial-size family of reversible
arithmetic circuits
\[
  \mathcal C
  =
  \mathcal C^{(L)}\cdots\mathcal C^{(2)}\mathcal C^{(1)}
\]
with $L=\mathcal O(1)$.
By introducing a fresh target qubit for each intermediate Boolean value and
using the standard compute--copy--uncompute construction, we may assume that
every layer $\mathcal C^{(\ell)}$ consists of a family of updates indexed by
a set $\mathcal A_\ell$:
\begin{equation}
  T_\alpha
  \longmapsto
  T_\alpha\oplus
  f_\alpha(X_{\alpha,1},\ldots,X_{\alpha,m_\alpha}),
  \qquad \alpha\in\mathcal A_\ell ,
  \label{eq:normalized-arithmetic-update}
\end{equation}
where the targets $T_\alpha$ are pairwise distinct and no target in the layer
is used as a source in the same layer:
\begin{equation}
 \{T_\alpha:\alpha\in\mathcal A_\ell\}
 \cap
 \{X_{\beta,j}:\beta\in\mathcal A_\ell,\,
                    1\le j\le m_\beta\}
 =\varnothing .
 \label{eq:source-target-separation}
\end{equation}
Assume that each $f_\alpha$ is either a weighted threshold function with
non-negative integer weights or the parity function. Then $\mathcal C$ has a
polynomial-time uniform 2D nearest-neighbour dynamic
implementation with polynomial area, $\mathcal O(1)$ coherent quantum depth,
and $\mathcal O(1)$ adaptive rounds. 
\end{lemma}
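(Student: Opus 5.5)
The plan is to compile $\mathcal C$ one layer at a time and place the $L=\mathcal O(1)$ layers in sequence. Since each layer has $\mathcal O(1)$ coherent depth and $\mathcal O(1)$ adaptive rounds, the composition keeps both resources constant. The only issue is geometric: every update $T_\alpha\oplus f_\alpha(X_{\alpha,1},\ldots,X_{\alpha,m_\alpha})$ must read its sources, which may sit anywhere on the lattice, and write to a target that may also sit anywhere.

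For a single layer $\mathcal C^{(\ell)}$, I would use the row--column layout from the start of this appendix, with data qubits placed as vertical columns and each $\alpha\in\mathcal A_\ell$ given its own horizontal slab of ancillas. The layer is then implemented in three stages.
\begin{enumerate}[(i)]
\item \emph{Copy in.} Vertical dynamic fan-out (Lemma~\ref{fact:BW}, in its one-dimensional alternating system/route form, as in Lemma~\ref{lem:rotation-crossbar}) copies each source $X_{\alpha,j}$ in the computational basis into a designated site of slab $\alpha$. A source used by several $\alpha$ is handled as one fan-out with several targets along its column, so all vertical paths are vertex-disjoint and run in parallel.
\item \emph{Evaluate.} Inside each slab, the local copies feed a two-dimensional threshold block from Lemma~\ref{lem:geometric-HS-threshold} when $f_\alpha$ is a weighted threshold, or a Hadamard-conjugated one-dimensional fan-out strip when $f_\alpha$ is parity. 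In either case the answer is written into a fresh local answer qubit.
\item \emph{Write back and clean up.} The answer bit is XORed into $T_\alpha$ along $T_\alpha$'s own column by a CNOT ladder/fan-out. Condition~\eqref{eq:source-target-separation} guarantees the target columns are distinct from all source columns, and the targets are pairwise distinct. The evaluation block and the copy-in fan-outs are then run in reverse, so every ancilla returns to $\ket{0}$ and the update is clean.
\end{enumerate}

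The vertical copy rails and the horizontal rails inside the slabs cross. As in Lemma~\ref{lem:rotation-crossbar}, I handle this by time-multiplexing: vertical and horizontal gadgets are never active in the same dynamic stage, and shared route sites are measured and reset between stages. With this schedule each stage uses vertex-disjoint paths.

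For the area, a slab for $\alpha$ has size polynomial in $m_\alpha$ and in $\sum_i w_i=\mathrm{poly}(n)$ by Lemma~\ref{lem:geometric-HS-threshold}. Stacking $|\mathcal A_\ell|=\mathrm{poly}(n)$ slabs across the columns gives polynomial area per layer, and polynomial area overall since $L$ is constant. Uniformity holds because all coordinates and rail assignments follow from the classical description of $\mathcal C$.

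I expect the main obstacle to be step (iii) together with the interface between consecutive layers. The targets written in layer $\ell$ become sources in layer $\ell+1$, so they must lie on columns reachable by the next layer's vertical rails without colliding with that layer's slabs. My first approach is to give every logical qubit one global column that extends through all $L$ layer-regions, stacked vertically, with fan-out along that column reaching each layer's slabs. I would then check that in every stage the active segments of distinct columns, together with the slab-internal paths, remain disjoint. If that check fails, the fallback is to reset and reuse route ancillas between layers, which adds only $\mathcal O(1)$ extra rounds.
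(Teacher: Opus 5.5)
Your proposal is correct and follows essentially the same route as the paper: vertical dynamic fan-out of each source into disjoint per-$\alpha$ regions, evaluation by the 2D threshold block of Lemma~\ref{lem:geometric-HS-threshold} or a Hadamard-conjugated fan-out strip, a long-range XOR into $T_\alpha$ along its column justified by the source--target separation, full uncomputation, time-multiplexed crossings as in Lemma~\ref{lem:rotation-crossbar}, and composition of the $L=\mathcal O(1)$ layers. The differences are minor. The paper runs each evaluation in compute--copy--uncompute form $U_\alpha=V_\alpha^\dagger\mathrm{CNOT}_{a_\alpha\to F_\alpha}V_\alpha$, so only $F_\alpha$ is live during the write-back, which it performs through a relay qubit $G_\alpha$; this keeps the other regions' work sites at $\ket{0}$ while the target columns are in use, which your version (evaluation blocks held live during write-back) would need to arrange through layout. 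The paper also treats the inter-layer interface you flagged as immediate, since each layer gadget returns all work sites to $\ket{0}$ and leaves the logical qubits on the boundary, so the same workspace can be reused.
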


\begin{proof}
It is enough to consider one layer $\mathcal C^{(\ell)}$: write
$\mathcal A=\mathcal A_\ell$. Place the logical qubits along the upper
boundary of the grid, with one vertical routing path starting at each qubit.
For every $\alpha\in\mathcal A$, allocate a disjoint rectangular region. At
the intersection of this region with the path of $X_{\alpha,j}$, place a copy
qubit $C_{\alpha,j}$. The region also contains a clean output qubit
$F_\alpha$ and a reversible subcircuit that computes $f_\alpha$ into
$F_\alpha$. Threshold functions use
Lemma~\ref{lem:geometric-HS-threshold}, and parity is implemented by Hadamard
conjugation of exact dynamic fan-out, with the one-input case reducing to a
CNOT.

Each selected subcircuit is used in compute--copy--uncompute form,
\[
  U_\alpha
  =
  V_\alpha^\dagger
  \mathrm{CNOT}_{a_\alpha\to F_\alpha}
  V_\alpha,
  \qquad
  U_\alpha^2=I.
\]
The equality $U_\alpha^2=I$ holds for the selected subcircuit itself, whether
or not it approximates the ideal threshold function.

The layer is implemented in four steps:
\begin{enumerate}
  \item Apply exact dynamic fan-out along every vertical path, providing a
        local copy to each region that uses the corresponding source qubit.
  \item Apply all $U_\alpha$ in parallel, writing the output of each selected
        subcircuit into $F_\alpha$.
  \item Route each $F_\alpha$ horizontally to the vertical path of
        $T_\alpha$ and XOR it into $T_\alpha$.
  \item Apply all $U_\alpha$ again and then undo the source fan-outs. Since
        both operations are self-inverse, all work and copy qubits return to
        $|0\rangle$.
\end{enumerate}

For the third step, place a clean relay qubit $G_\alpha$ where the horizontal
and vertical paths meet. The sequence
\[
  \mathrm{CNOT}_{F_\alpha\to G_\alpha},\qquad
  \mathrm{CNOT}_{G_\alpha\to T_\alpha},\qquad
  \mathrm{CNOT}_{F_\alpha\to G_\alpha}
\]
implements $T_\alpha\mapsto T_\alpha\oplus F_\alpha$ and returns
$G_\alpha$ to $|0\rangle$. These long-range CNOTs are realized along
one-dimensional nearest-neighbour paths by the dynamic construction of
Lemma~\ref{fact:BW}, as used in Lemma~\ref{lem:rotation-crossbar}. The
horizontal paths lie in disjoint regions, the target paths are distinct, and
the source--target separation assumption ensures that a target path is not
used simultaneously for source distribution. Horizontal and vertical
routing are performed at different times, so the time-multiplexed crossing
layout of Lemma~\ref{lem:rotation-crossbar} removes the remaining geometric
conflicts.

For ideal function subcircuits, the map is
\begin{equation}
 |{\bf x}\rangle
 \bigotimes_{\alpha\in\mathcal A}|t_\alpha\rangle
 |0\rangle_{\mathrm{work}}
 \longmapsto
 |{\bf x}\rangle
 \bigotimes_{\alpha\in\mathcal A}
 |t_\alpha\oplus f_\alpha({\bf x}_\alpha)\rangle
 |0\rangle_{\mathrm{work}} .
 \label{eq:compiled-arithmetic-layer-action}
\end{equation}
If a selected approximate threshold subcircuit is used, the same schedule realizes
exactly the corresponding selected source channel. The fan-out and routing
constructions are exact and therefore introduce no additional approximation
error.

The four steps above contain only a constant number of adaptive rounds. Each
fan-out, function subcircuit, long-range CNOT, and uncomputation has
$\mathcal O(1)$ coherent depth and $\mathcal O(1)$ adaptive rounds, and all
instances of the same operation run in parallel on disjoint paths or regions.
There are polynomially many regions, each of polynomial area adding the
routing paths and constant-size crossing layouts, therefore leaves the total
area polynomial. The entire layout is computable in polynomial time from the
uniform description of $\mathcal C^{(\ell)}$. Finally,
$L=\mathcal O(1)$, as a result, composing the layers preserves all the claimed bounds.
\end{proof}

Finally, by Ref.~\cite{HoyerSpalek2005}, $\rm QFT$ can be compiled by a polynomial-size constant-depth implementation in the unbounded-fan-out model. This gives a constant-depth quantum circuit implementation after replacing fan-out by the CNOT gates with mid-circuit measurement~\cite{BaeumerWoerner2025}. Combined with the compilation method given by Ref~\cite{Rosenbaum2013}, we complete the proof of Corollary~\ref{coro:2Darchitecture}.

\vspace{10px}
\paragraph*{Proof of the entanglement area law result:}
It is useful to compare our setting with the area-law obstruction for
LOCC-assisted finite-depth circuits studied in Ref.~\cite{piroli2021quantum}.
In that work, the quantum part of the circuit is geometrically local: qubits (including physical qubits and ancilla qubits)
are placed on a fixed-dimensional lattice, the entangling gates are
nearest-neighbor gates of bounded local dimension, and the total quantum
depth is constant.  The circuit is further supplemented by local measurements,
arbitrary classical communication, and feedforward local unitaries.

The proof of the entanglement area-law is straightforward. Consider a bipartition $X=A\cup A^c$ of the lattice, and let $S_0(A)$ denote the Renyi-0 entropy across
this cut, equivalently the logarithm of the Schmidt rank. The initial state is
a product state, so $S_0(A)=0$.  In one geometrically local quantum layer,
only gates crossing the boundary $\partial A$ can increase the Schmidt rank
across the cut.  Since each such gate acts on a bounded-dimensional local
Hilbert space, it can increase $S_0(A)$ by at most a constant.  The number of
nearest-neighbor gates crossing the cut is $\mathcal O(|\partial A|)$.  Therefore one
quantum layer can increase $S_0(A)$ by at most $C|\partial A|$, for a
constant $C$ depending only on the local dimension and gate range.  Repeating
this argument for a constant number of quantum layers gives $S_0(A) \le C' |\partial A|$. The intermediate measurement, classical communication, and feedforward steps do
not invalidate this bound, because they are LOCC operations with respect to the
bipartition, and hence cannot increase bipartite entanglement.
Thus every state prepared by such a constant-depth geometrically adaptive quantum circuit obeys an entanglement area law.

This observation should be distinguished from the hardness phenomenon studied
in the present work.  The QCcc area-law result constrains the entanglement
structure of states prepared by geometrically local LOCC-assisted finite-depth
circuits.  Our result concerns the classical hardness of predicting a fixed
local observable in an adaptive shallow process.  In particular, the presence
of an area law for the prepared state does not by itself imply that all local
readout statistics generated by adaptive shallow dynamics are classically easy
to predict.

\end{document}